\begin{document}
  \title{Code-based Signatures from New Proofs of Knowledge for the Syndrome Decoding Problem} 

  \author{Loïc Bidoux$^1$, Philippe Gaborit$^2$, Mukul Kulkarni$^1$, Victor Mateu$^1$}
  \affil{$^1$ Technology Innovation Institute, UAE \\ $^2$ University of Limoges, France}
  \date{}

\maketitle

\begin{abstract} 
In this paper, we study code-based signatures constructed from Proof of Knowledge (PoK).
This line of work can be traced back to Stern who introduces the first efficient PoK for the syndrome decoding problem in 1993 \cite{Stern93}. 
Afterward, different variations were proposed in order to reduce signature's size. 
In practice, obtaining a smaller signature size relies on the interaction of two main considerations: (i) the underlying protocol and its soundness error and (ii) the type of optimizations which are compatible with a given protocol. 
In particular, optimizations related to the possibility to use random seeds instead of mere vectors have a great impact on the final signature length.
  Over the years, different variations were proposed to improve the Stern scheme such as the Veron scheme (with public key a noisy codeword rather than a syndrome) \cite{Veron97}, the AGS scheme which is a 5-pass protocol with cheating probability asymptotically equal to 1/2 \cite{AGS11} and more recently the FJR approach which permits to decrease the cheating probability to 1/N but induces a performance overhead \cite{FJR21}.
Overall the length of the signature depends on a trade-off between: the 
scheme in itself, the possible optimizations and the cost of the 
implementation. For instance, depending on the application one may prefer 
a 30\% shorter signature at the cost a ten times slower implementation 
rather than a longer signature but a faster implementation. The recent 
approaches which increase the cost of the implementation opens the door 
to many different type of trade-offs.

In this paper we propose three new schemes and different trade-offs, 
which are all interesting in themselves, since depending on potential 
future optimizations a scheme may eventually become more efficient than 
another. All the schemes we propose use a trusted helper:
a first scheme permits to get a 1/2 cheating probability,  a second 
scheme permits to decrease the cheating probability in 1/N but with a 
different approach than the recent FJR scheme and at last a third scheme 
propose a Veron-like adaptation of the FJR scheme in which the public 
key is a noisy codeword rather than a syndrome.
We provide an extensive comparison table which lists various trade-offs 
  between our schemes and previous ones. The table 
shows the interest of our constructions for certain type of 
trade-offs.
\end{abstract}

\section{Introduction}

The goal of post-quantum cryptography is to provide cryptographic schemes that are secure against adversaries using both classical and quantum computers.
Code-based cryptography was introduced by McEliece in 1978 \cite{McEliece78} and is nowadays one of the main alternative to classical cryptography.
This is illustrated by the ongoing NIST Post-Quantum Cryptography standardization process~\cite{NISTPQC} whose round~3 features three code-based Key Encapsulation Mechanisms (KEM) \cite{ClassicMcEliece, BIKE, HQC}.
Additional KEM were also considered during the round~2 of the competition; see \cite{LEDA,ROLLO,RQC}.
Unlike the code-based KEM, designing digital signatures from coding theory has historically been challenging.
Two approaches have been studied in this regard namely signatures from the hash-and-sign paradigm and signatures based on proofs of knowledge (PoK).
Regarding signatures based on the hash-and-sign paradigm, a first (although inefficient) construction was proposed in 2001 \cite{CFS01}.
The Wave construction \cite{wave19} follows the same approach and features small signature sizes.
Regarding code-based signatures from PoK, the Schnorr-Lyubashevsky \cite{Schnorr, Lyubashevsky} approach has been successfully used in the rank metric setting by the Durandal scheme \cite{Durandal}.
In this paper, we focus on the Fiat-Shamir paradigm \cite{FS, PS96} which relies on zero knowledge PoK.
In this approach, one transforms an honest verifier zero-knowledge interactive PoK into a signature scheme using the so called Fiat-Shamir heuristic.

The first efficient PoK for the syndrome decoding ($\SD$) problem over $\Ft$ was introduced by Stern in 1993~\cite{Stern93}.
In 1997, Véron improved the Stern protocol by designing a protocol based on the general syndrome decoding ($\GSD$) problem rather than the $\SD$ one \cite{Veron97}.
The $\SD$ and $\GSD$ problem are equivalent and only differ in the way used to represent the underlying code namely using a parity-check matrix in the former and using a generator matrix in the latter.
Stern and Véron protocols feature a soundness error equal to $2/3$ and as such need to be repeated several times in order to achieve a negligible soundness error.
In 2011, two 5-round code-based PoK reducing the soundness error close to $1/2$ (hence leading to smaller signature sizes) were proposed.
The first one (CVE) relies on the $\SD$ problem over $\Fq$ \cite{CVE11} while the second one (AGS) relies on the $\QCGSD$ problem over $\Ft$ namely the quasi-cyclic variant of the $\GSD$ problem \cite{AGS11}.
A zero-knowledge issue impacting $\GSD$ based protocols (Véron and AGS) was identified in \cite{ZKIssue} and fixed in \cite{ISIT21}.
In addition, the AGS protocol have been improved by the BGS proposal by using an optimization specifically tailored to the $\QCSD$ problem \cite{BGS21}.
Furthermore, it has been shown recently (see related work section bellow for additional details) that one can design a protocol achieving an arbitrarily small soundness error in \cite{GPS21} and \cite{FJR21}.
Some of the aforementioned protocols have been adapted to the rank metric setting, see \cite{Chen95, RankStern11, RankVeron19}.

Recently,  Katz, Kolesnikov and Wang~\cite{KKW18}, designed a signature
scheme based on PoK  using the MPC-in-the-head paradigm introduced
by~\cite{IKOS07}. An important highlight of this design is that
it allows one to achieve much smaller soundness errors which can results in shorter 
signatures in our case. While this benefit comes at the cost of
slightly involved protocols and performance overhead, with careful analysis and parameter selection it is possible to design signature schemes with acceptable performance and shorter sized using this framework.
Beullens generalized the work of \cite{KKW18} by introducing the notion of PoK with trusted helper and designing new PoK for the Multivariate Quadratic (MQ) problem, Permuted Kernel Problem (PKP) and Short Integer Solution (SIS) problem \cite{Beullens20}.
In this work, we propose a new code-based proof of knowledge (PoK) systems
with trusted helper for the syndrome decoding problem, and later construct signature schemes from the PoK by using the Fiat-Shamir transformation.

\vspace{\baselineskip}
\noindent \textbf{Contributions.} We introduce three new PoK with trusted helper for the syndrome decoding problem over $\Ft$.
The first one (denoted PoK~1) is a PoK for the $\SD$ problem achieving a soundness error equal to $1/2$ without any extra assumption such as using quasi-cyclic variants of the problem or working over $\Fq$.
The second one (denoted PoK~2) is a PoK for the $\SD$ problem over $\Ft$ (using its $\GSD$ form to be precise) achieving an arbitrarily small soundness error.
The third one (denoted PoK~3) is a variant of PoK~2 using some ideas from \cite{FJR21}.
Our proofs (as well as the \cite{FJR21} one) can leverage the quasi-cyclic variants of the $\SD$ or $\GSD$ problems to improve their performances although this is not mandatory.
Table~\ref{table:sota} compares our new PoK to existing ones with respect to their soundness error as well as their underlying security assumption.
The soundness error is directly linked to the resulting signature size (once the Fiat-Shamir heuristic have been applied) hence the smaller the soundness error is, the more compact the signature can be.
Regarding security assumptions, the $\SD$ problem over $\Ft$ has been arguably more studied than its counterpart over $\Fq$ hence can be considered as a slightly more conservative assumption.
The $\QCSD$ and $\QCGSD$ constitutes structured variants of the initial $\SD$ and $\GSD$ problems and as such are less conservative than the latter although they are believed to be hard by the community.
Similarly to our PoK, protocols over $\Fq$ could also use quasi-cyclicity to improve their performances which is not depicted in Table~\ref{table:sota}.

\vspace{0.5\baselineskip}
\begin{table}[ht]
\begin{center}
{\renewcommand{\arraystretch}{1.35}
  \begin{tabular}{|c|l|l|l|}
    \hline
    {\small \textbf{Soundness}} & \multicolumn{1}{c|}{$\SD$/$\GSD$ over $\Ft$} & \multicolumn{1}{c|}{$\mathsf{QC}$-$\SD$/$\GSD$ over $\Ft$} & \multicolumn{1}{c|}{$\SD$/$\GSD$ over $\Fq$} \\ \hline
    \multirow{2}{*}{2/3} & \cite{Stern93} & & \\
    & \cite{Veron97},\,\cite{ISIT21} & & \\
    \hline
    \multirow{3}{*}{1/2} & \multirow{3}{*}{PoK 1 (Section \ref{sec:pok1})} & \cite{AGS11},\,\cite{ISIT21} & \multirow{3}{*}{\hspace{-3pt} \cite{CVE11}} \\
    & & \cite{BGS21} & \\
    & & PoK~1 (Section \ref{sec:pok1}) & \\
    \hline
    \multirow{3}{*}{1/N} & \cite{FJR21} & \cite{FJR21} & \multirow{3}{*}{\hspace{-3pt} \cite{GPS21}} \\ 
     & PoK~2 (Section \ref{sec:pok2}) & PoK~2 (Section \ref{sec:pok2}) & \\ 
     & PoK~3 (Section \ref{sec:pok3}) & PoK~3 (Section \ref{sec:pok3}) & \\ 
    \hline
  \end{tabular}
  \caption{PoK for $\SD$/$\GSD$ or $\QCSD$/$\QCGSD$ problems in Hamming metric} \label{table:sota}
}
\end{center}
\end{table}
\vspace{-0.5\baselineskip}

In addition, we explain how to transform our PoK with trusted helper into 3-round PoK without helper or 5-round PoK without helper.
These two transformations offer different trade-offs between communication cost and security.
Using 5-round PoK lead to smaller signature sizes however the security proof of the Fiat-Shamir heuristic is less tight in this case.
In practice, this means that one have to take into account attacks such as the one from \cite{KZ20}.
We consider both transformations for our PoK~1 and denote the resulting PoK without helper by 3-round PoK~1 and 5-round PoK~1 respectively.
For the PoK~2 and PoK~3, we only consider the first transformation thus leading to PoK without helper denoted 3-round PoK~2 and 3-round PoK~3 respectively.
Furthermore, we present several optimizations for these PoK and describe how to convert them into signature schemes.
Our first signature is built from our 3-round PoK~1 and is denoted Sig~1~(3-round).
It features the most conservative design possible as it relies on the $\SD$ problem over $\Ft$ along with an underlying 3-round structure.
Our second signature is built from our 5-round PoK~1 and is denoted Sig~1~(5-round).
Both signatures can be instantiated using the plain $\SD$ problem or its quasi-cyclic variant $\QCSD$.
Sig~1~(3-round) and Sig~1~(5-round) both outperforms the Stern \cite{Stern93}, \cite{Veron97} and \cite{AGS11} schemes with respect to signature size for comparable settings at the cost of a small performance overhead. 
In addition, they achieve similar performances to \cite{BGS21} while relying on more conservative security assumptions.
Sig~2 and Sig~3 are constructed from 3-round PoK~2 and 3-round PoK~3 respectively and feature even smaller signature size however at the cost of a bigger performance overhead.
Sig~2 outperforms the recent proposal from \cite{GPS21} but is outperformed by the proposal from \cite{FJR21}.
Finally, Sig~3 close this performance gap by mixing PoK~2 with the shared permutation idea from \cite{FJR21}.

\vspace{\baselineskip}
\noindent \textbf{Related Work.}
Gueron, Persichetti and Santini have recently proposed a new code-based signature built from a PoK with trusted helper for the $\SD$ problem over $\Fq$ that achieve an arbitrarily small soundness error \cite{GPS21}.
Recently in an independent and concurrent work, Feneuil, Joux and Rivain have proposed a code-based signature based on a PoK for the $\SD$ problem over $\Ft$ that achieves an arbitrarily small soundness error \cite{FJR21}.
These works present some similarities with our PoK~2 and its associated signature Sig~2.
A PoK for the $\SD$ problem requires to prove two statements: (i) there exists a value $\bm{x}$ such that $\bm{H} \bm{x}^{\top} = \bm{y}^{\top}$ and (ii) the weight of $\bm{x}$ is small.
To achieve an arbitrarily small soundness error, one need to prove both statements at once which is challenging to do while preserving the zero-knowledge property of the underlying proof.
Indeed, one generally proves the first property by masking $\bm{x}$ using $\bm{u} + \bm{x}$ with a uniform random value $\bm{u}$ and prove the second property by masking $\bm{x}$ using $\pi[\bm{x}]$ for some random permutation $\pi$ while reconciling the two parts of the proof thanks to a third value such as $\pi[\bm{u} + \bm{x}]$.
The authors of \cite{GPS21} solve this issue by \emph{revealing the permutation and later canceling it} in their proof.
As such, their proposal reveals $\pi$ rather than $\pi[\bm{x}]$ contrarily to existing protocols.
The authors of \cite{FJR21} solve the aforementioned issue by introducing what they called a \emph{shared permutation} namely by masking the permutation during the computation of some permuted vectors.
Doing so, they are able to compute a value related to $\pi[\bm{u} + \bm{x}]$ from $\bm{u} + \bm{x}$ without revealing anything on $\pi$.
Our PoK~2 relies on another approach by \emph{introducing several permutations and revealing all of them but one} in order to prove the knowledge of the solution of a permuted $\SD$ problem instance.

We briefly discuss the main differences between these three approaches.
The PoK from \cite{GPS21} relies on the $\SD$ problem over $\Fq$ while our PoK~2 relies on the $\SD$ problem over $\Ft$.
As the $\SD$ problem over $\Ft$ has been arguably more studied than its counterpart over $\Fq$, it can be considered as a more conservative assumption.
Furthermore, using the protocol from \cite{GPS21}, one has to send the permutation $\pi$ (which is fixed hence not replaceable by a seed) to prove the weight of $\bm{x}$ while our protocol only requires to send $\pi[\bm{x}]$ (which is a small weight vector hence can be compressed).
As sending a permutation of a vector of size $n$ over $\Fq$ is costly, the communication cost associated to the GPS proposal is bigger than the communication cost of our PoK~2.
In practice, this means that for comparable parameters, our Sig~2 outperforms the signature from \cite{GPS21}.

The PoK from \cite{FJR21} and our PoK~2 are more closely related as they are both based on the $\SD$ problem over $\Ft$ and both achieve an arbitrarily small soundness error equal to $1/N$.
As such, they are equivalent from a theoretical point of view.
Nonetheless, the optimized version of the FJR protocol outperforms the optimized version of our PoK~2 in practice.
This is explained by the fact that some optimizations related to commitment compression bring a better improvement for the FJR protocol than for our PoK~2.
As a result, we also introduce PoK~3 which mixes PoK~2 with the shared permutation setting of \cite{FJR21}.
Doing so, one can consider that our PoK~3 is a dual version (Véron-like) of the protocol from \cite{FJR21}.

\vspace{\baselineskip}
\noindent \textbf{Paper Organization.} We start by describing some preliminaries related to code-based cryptography and PoK in Section~\ref{sec:preliminaries}.
We present our new PoK with trusted helper in Section~\ref{sec:pok1} and \ref{sec:pok23} respectively.
Then, we explain how to remove the trusted helper from the aforementioned protocols in Section~\ref{sec:no-helper}.
Several optimizations reducing the bandwidth cost of these PoK are described in Section~\ref{sec:optimizations}.
We explain how to transform our PoK into signature schemes in Section~\ref{sec:signatures}.
Parameters for these new signatures are provided in Section~\ref{sec:parameters} along with a comparison to existing code-based signatures.
To finish, we discuss some generalizations and variants of our PoK in Section~\ref{sec:generalization}.

\section{Preliminaries} \label{sec:preliminaries}

\vspace{0.5\baselineskip}
\noindent \textbf{Notations.} Hereafter, vectors (respectively matrices) are represented using bold lower-case (respectively upper-case) letters.
Also, the vectors are assumed to be row vectors by default, and
we denote the column vectors by transpose of row of vector (such as $\bm{x}^T$).
The Hamming weight (number of non-zero coordinates) of a vector $\bm{x}$ is denoted by $\hw{\bm{x}}$.
For an integer $n > 0$, we use $\hperm{n}$ to denote the symmetric group of all permutations of $n$ elements.
For a finite set $S$, $x \sampler S$ denotes that $x$ is sampled uniformly at random from $S$ while $x \samples{\theta}$ denotes that $x$ is sampled uniformly at random from $S$ using the seed $\theta$.
In addition, we use the acronym $\ppt$ as an abbreviation for the term ``probabilistic polynomial time''.
We also call a function \emph{negligible} and denote it by
$\negl(\cdot)$ if for all sufficiently large $\lambda \in \mathbb{N}$,
$\negl(\lambda) < \lambda^{-c}$, for all constants $c > 0$.

\subsection{Code-based Cryptography} \label{sec:preliminaries:cbc}

We start by defining binary linear codes and quasi-cyclic codes.
Then, we describe the syndrome decoding ($\SD$) and general syndrome decoding ($\GSD$) problems which are hard problems commonly used in code-based cryptography.
These problems are equivalent and differ only in the way used to represent the underlying code namely using a parity-check matrix in the former and using a generator matrix in the latter. 
The $\SD$ problem has been proven NP-complete in \cite{BMVT78}.
In addition, we also introduce the quasi-cyclic problems $\QCSD$ and $\QCGSD$ which are structured variants of the $\SD$ and $\GSD$ problems.

\begin{definition}[Binary Linear Code]
Let $n$ and $k$ be positive integers such that $k < n$. A binary linear $\mathcal{C}$ code (denoted $[n,k]$) is a $k$-dimensional subspace of $\Ftn$.
  $\mathcal{C}$ can be represented in two equivalent ways: by a generator matrix $\bm{G} \in \Ft^{\ktn}$ such that $\mathcal{C} = \{\bm{mG} ~|~ \bm{m} \in \Ftk\}$ or by a parity-check matrix $\bm{H} \in \Ft^{\nmktn}$ such that $\mathcal{C} = \{ \bm{x} \in \Ftn ~|~ \bm{H}\bm{x}^{\top} = 0\}$.
\end{definition}

\begin{definition}[Systematic Binary Quasi-Cyclic Code]
  A systematic binary quasi-cyclic code of index $\ell$ and rate $1/\ell$ is a
  $[n=\ell k, k]$ code that can be represented by a $k \times \ell k = k \times n$ generator matrix $\bm{G} \in \mathcal{QC}(\Ft^{\ktn})$ of the form:
\begin{equation*}
  \mathbf{G}=
\begin{bmatrix}
  \mathbf{I}_k & \mathbf{A}_0 & \cdots & \mathbf{A}_{\ell-2}\\
\end{bmatrix}
\end{equation*}
  where $\mathbf{A}_0,\ldots ,\mathbf{A}_{\ell-2}$ are circulant $k\times k$ matrices. Alternatively, it can be represented by an $(\ell-1)k \times \ell k = (n-k)\times n$ parity check matrix $\bm{H} \in \mathcal{QC}(\Ft^{(n - k) \times k})$ of the form:
\begin{equation*}
  \mathbf{H}=
\begin{bmatrix}
  \mathbf{I}_k & \cdots & 0 & \mathbf{B}_0\\
               & \ddots &   & \vdots \\
  0            & \cdots & \mathbf{I}_k & \mathbf{B}_{\ell-2}
\end{bmatrix}
\end{equation*}
where $\mathbf{B}_0,\ldots ,\mathbf{B}_{\ell-2}$ are circulant $k\times k$ matrices.
\end{definition}

\begin{definition}[$\SD$ problem]
  Given positive integers $n$, $k$, $w$, a random parity-check matrix $\bm{H} \sampler \Ft^{\nmktn}$ and a syndrome $\bm{y} \in \Ft^{\nmk}$, the syndrome decoding problem $\SD(n,k,w)$ asks to find $\bm{x} \in \Ftn$ such that $\bm{Hx}^\top = \bm{y}^\top$ and $\hw{\bm{x}} = w$.
\end{definition}

\begin{definition}[$\GSD$ problem]
  Given positive integers $n$, $k$, $w$, a random generator matrix $\bm{G} \sampler \Ft^{\ktn}$ and a vector $\bm{y} \in \Ftn$, the general syndrome decoding problem $\GSD(n,k,w)$ asks to find $(\bm{x}, \bm{e}) \in \Ftk \times \Ftn$ such that $\bm{xG} + \bm{e} = \bm{y}$ and $\hw{\bm{e}} = w$.
\end{definition}

\begin{definition}[$\ell$-$\QCSD$ problem]
  Given positive integers $n, k, w$, with $n=\ell k$ for some $\ell$, a random parity-check matrix of a quasi-cyclic code $\bm{H} \sampler \mathcal{QC}(\Ft^{(n - k) \times n})$ and a syndrome $\bm{y} \in \Ft^{(n-k)}$, the syndrome decoding problem $\ell$-$\QCSD(n,k,w)$ asks to find $\bm{x} \in \Ft^{n}$, such that $\bm{Hx}^\top = \bm{y}^\top$ and $\hw{\bm{x}} = w$.
\end{definition}

\begin{definition}[$\ell$-$\QCGSD$ problem]
  Given positive integers $n, k, w$, with $n=\ell k$ for some $\ell$, a random generator matrix of a quasi-cyclic code $\bm{G} \sampler \mathcal{QC}(\Ft^{\ktn})$ and a vector $\bm{y} \in \Ftn$, the general syndrome decoding problem $\ell$-$\QCGSD(n,k,w)$ asks to find $(\bm{x}, \bm{e}) \in \Ftk \times \Ftn$ such that $\bm{xG} + \bm{e} = \bm{y}$ and $\hw{\bm{e}} = w$.
\end{definition}

\subsection{Commitments Schemes} \label{sec:preliminaries:cmt}
We now introduce commitment schemes as they are building blocks commonly used to construct proofs of knowledge. 
We require such schemes to be both hiding and binding. 
The former property ensures that the commitment does not leak any information on the committed message while the latter ensures that adversaries can not change their committed messages once the commitment is sent.
We now present the formal definition of the commitment schemes.

\begin{definition}[Commitment Scheme] \label{def:cmt-scheme}
  A (non-interactive) commitment scheme with underlying 
  message space $\mathcal{M}$ is tuple of algorithms
  $(\keygen, \mathsf{Com}, \mathsf{Open})$ such that:
  \begin{itemize}
      \item $\mathbf{\keygen}$: Takes the security parameter
      $\lambda$ as input and outputs pair of keys.
      $(\mathsf{gk}, \mathsf{ck}) \samplen \keygen(1^\lambda)$.
      Here $\mathsf{gk}$ is called the \emph{setup key} and it serves as
      an implicit input to the $\mathsf{Com}$ and $\mathsf{Open}$
      algorithms, whereas $\mathsf{ck}$ is called
      \emph{commitment key} and it is given to the sender.
      Note that, the commitment key $\mathsf{ck}$ can be
      set to empty string $\varepsilon$, if only setup key
      $\mathsf{gk}$ is sufficient for committing to the messages.
      
      \item $\mathbf{\mathsf{Com}}$: Takes a message $m \in \mathcal{M}$
      and commitment key $\mathsf{ck}$ as input and output
      a commitment $c$ and opening $d$. Formally, 
      $(c,d) \samplen \commit{\mathsf{ck},m}$.
      
      \item $\mathbf{\mathsf{Open}}$: Takes a commitment $c$,
      opening $d$, message $m \in \mathcal{M}$ as input
      and outputs a bit $b \in \lbrace 0, 1 \rbrace$ indicating
      whether the commitment $c$ is a valid commitment of $m$.
      Formally, $b := \open{c,d,m}$.
      
  \end{itemize}
  
  \noindent The commitment scheme is \emph{perfectly correct} if
  $ \forall \lambda \in \mathbb{N}, \, \forall m \in \mathcal{M} $
  and for all valid key pairs $(\mathsf{gk}, \mathsf{ck})$,
  $$ \Pr[\open{\commit{\mathsf{ck},m},m} = 1] = 1. $$
\end{definition}

The commitment scheme satisfies two
security properties guaranteeing security from
malicious sender (prover) and from
malicious receiver (verifier):

\begin{itemize}
    \item \emph{Hiding: } It is computationally hard
    for an efficient adversary $ \mathcal{A} $ to generate two distinct
    messages $m_0, m_1 \in \mathcal{M}$, such that $\mathcal{A}$
    can distinguish between their respective commitments.
    Formally, for any PPT adversary $\mathcal{A}$ it should hold that,
    $$ \Pr \left[b = b' \quad  \bigg \vert  \quad \begin{matrix}
        (\mathsf{gk}, \mathsf{ck}) \samplen \keygen(1^\lambda), \,
        (m_0, m_1) \samplen \mathcal{A}(\mathsf{gk}, \mathsf{ck})\\ 
        b \sampler \lbrace 0, 1 \rbrace, 
        (c,d) \samplen \commit{m_b, \mathsf{ck}},
        b' \samplen \mathcal{A}(c)
    \end{matrix} \right]  = \frac{1}{2} + \mathsf{negl}(\lambda).$$
    
    \item \emph{Binding: } It is computationally hard for an
    efficient adversary to generate a triple $(c,d,d')$ such that
    \emph{both} $(c,d)$ and $(c, d')$ are valid commitment/opening
    pairs for some $m, m' \in \mathcal{M}$ respectively, where
    $m \neq m'$.
    Formally, for any PPT adversary it should hold that,
    $$\Pr \left[\begin{matrix}
      m \neq m' \, \bigwedge \\ m, m' \in \mathcal{M}
    \end{matrix} \quad \bigg \vert \quad \begin{matrix} 
    (\mathsf{gk}, \mathsf{ck}) \samplen \keygen(1^\lambda), \,
    (c, d, d') \samplen \mathcal{A}(\mathsf{gk}, \mathsf{ck})\\ 
    1 := \open{c,d,m}, 
    1 := \open{c,d',m'} \end{matrix} \right] \leq \mathsf{negl}(\lambda) .$$

\end{itemize}

In this work, we assume that the commitment scheme is implemented using
a collision-resistant hash function $H$ modelled as random oracle.
To commit to a message $m$, we first sample a random
value $r \leftarrow \bit^\lambda$ and compute the commitment as
$c := H(r,m)$. The hiding follows since $H$ is modelled as 
random oracle and the binding follows from the collision-resistance
of $H$. The random value $r$ serves as the opening $d$. The verifier
can simply re-compute $H(r,m)$ on receiving $r$ as opening
and check if $H(r,m)$ equals the commitment $c$.

\ignore{
\begin{definition}[Commitment Scheme]
  A commitment scheme is a function $\mathsf{Com}: \bit^{\lambda} \times \bit^{*} \rightarrow \bit^{2\lambda}$ that takes as input $\lambda$ uniformly random bits $r$ where $\lambda$ is the security parameter as well as a message $m \in \bit^{*}$ and outputs a commitment $\commit{r, m}$ of size $2\lambda$ bits.
\end{definition}

\begin{definition}[Computationally Hiding]
Let $(m_0, m_1)$ be a pair of messages, the advantage of an adversary $\adv$ against the commitment hiding experiment is:
\begin{equation*}
  \mathsf{Adv}^{\mathsf{Hiding}}_{\mathsf{Com}, \adv}(1^\lambda) = 
  \Bigg| \, \prb \left[ \begin{array}{l}
    b = b' \\
  \end{array} \ \middle\vert~ \\ \begin{array}{l}
    b \sampler \bit, \, r \sampler \bit^{\lambda} \\
    b' \samplen \adv.\mathsf{guess}(\commit{r, \, m_b})
  \end{array}\right]
  - \frac{1}{2} \, \Bigg|.
\end{equation*}
A commitment scheme $\mathsf{Com}$ is computationally hiding if for all $\ppt$ adversaries $\adv$ and every pair of messages $(m_0, m_1)$, $\mathsf{Adv}^{\mathsf{Hiding}}_{\mathsf{Com}, \adv}(1^\lambda)$ is negligible in $\lambda$.
\end{definition}

\begin{definition}[Computationally Binding]
Let the advantage of an adversary $\adv$ against the commitment binding experiment be defined as:
\begin{equation*}
  \mathsf{Adv}^{\mathsf{Binding}}_{\mathsf{Com}, \adv}(1^\lambda) = \prb
  \left[ \begin{array}{l}
    m_0 \neq m_1 \\ 
    c_0 = c_1 \\
  \end{array} \ \middle\vert~ \\ \begin{array}{l}
    (r_0, r_1, m_0, m_1) \samplen \adv.\mathsf{choose}(1^{\lambda}) \\
    c_0 = \commit{r_0, \, m_0}, ~ c_1 = \commit{r_1, \, m_1} \\ 
  \end{array}\right].
\end{equation*}
A commitment scheme $\mathsf{Com}$ is computationally binding if for all $\ppt$ adversaries $\adv$, $\mathsf{Adv}^{\mathsf{Binding}}_{\mathsf{Com}, \adv}(1^\lambda)$ is negligible in $\lambda$.
\end{definition}
}

\subsection{Proofs of Knowledge with Helper} \label{sec:preliminaries:pok}

Following the work of Katz, Kolesnikov and Wang \cite{KKW18}, Beullens introduced the notion of sigma protocols with helper in \cite{Beullens20}.
Given a relation $R = (x, w)$, these Honest-Verifier Zero-Knowledge Proofs of Knowledge (HVZK PoK) allow a prover to convince an honest verifier (namely a verifier that follows the protocol as described) that it knows a witness $w$ for the statement $x$ without revealing anything on $w$.
In our context, the relation $R = (x, w)$ is defined by an instance of the $\SD$ problem such that $x = (\bm{H}, \bm{y})$ and $w = \bm{x}$ namely the prover convinces the verifier that he knows a solution to an $\SD$ instance without revealing anything on its solution.
Alternatively, when the $\GSD$ form of the problem is considered, one has $x = (\bm{G}, \bm{y})$ and $w = (\bm{x}, \bm{e})$.

\begin{definition}[Sigma Protocol with Helper \cite{Beullens20}]
  A protocol is a Sigma Protocol with helper for relation $R$ with challenge space $\mathcal{C}$ if it follows the form of Figure~\ref{preliminaries-pok-fig1} and satisfies:

\begin{itemize}
  \item[$\bullet$] \textbf{Completeness.} If all parties ($\helperf$, $\proverf$ and $\verifierf$) follow the protocol on input $(x, w) \in R$, then the verifier always accepts.

  \item[$\bullet$] \textbf{Special soundness.} From an adversary $\adv$ that outputs two valid transcripts $(x, \aux, \com, \alpha, \rsp)$ and $(x, \aux, \com, \alpha', \allowbreak \rsp')$ with $\alpha \neq \alpha'$ and where $\aux = \setup(\theta)$ for some seed value $\theta$ (not necessarily known to the extractor), there exists an extractor $\Ext$ that efficiently extracts a witness $w$ such that $(x, w) \in R$ with probability $1 - \negl(\lambda)$.

  \item[$\bullet$] \textbf{Special honest-verifier zero-knowledge.} There exists a $\ppt$ simulator $\Sim$ that on input $x$, a random seed value $\theta$ and a random challenge $\alpha$ outputs a transcript $(x, \aux, \com, \alpha, \rsp)$ with $\aux = \setup(\theta)$ that is computationally indistinguishable from the probability distribution of transcript of honest executions of the protocol on input $(x, w)$ for some witness $w$ such that $(x, w) \in R$, conditioned on the auxiliary information being equal to $\aux$ and the challenge being equal to $\alpha$.
\end{itemize}
\end{definition}

\begin{figure}[!ht]
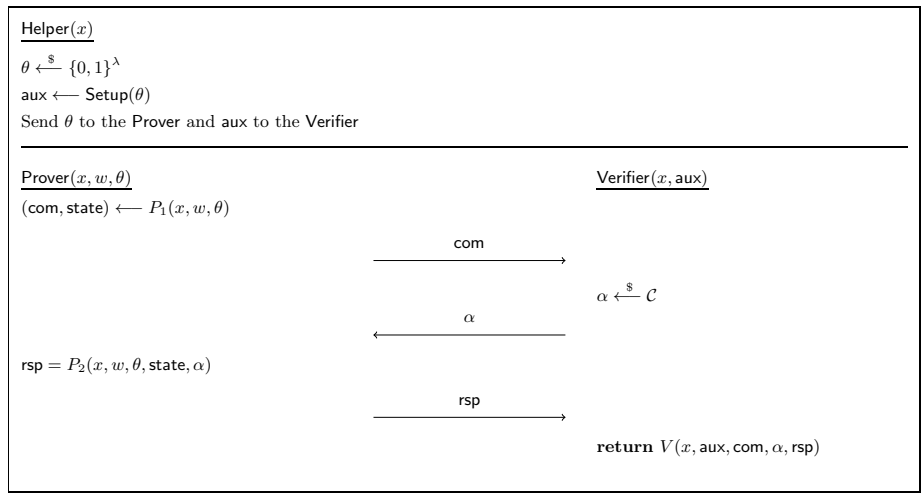
 
  \begin{center}
    \resizebox{1\textwidth}{!}{\fbox{
      \pseudocode{%
        \hspace{160pt} \> \> \hspace{160pt} \\[-0.75\baselineskip] 
        \underline{\helperf(x)} \\[0.1\baselineskip]
        \theta \sampler \bit^{\lambda} \\ 
        \aux \samplen \setup(\theta) \\ 
        \tx{Send } \theta \tx{ to the } \proverf \tx{ and } \aux \tx{ to the } \verifierf \\[0.5\baselineskip][\hline]\\[-0.25\baselineskip]
        \underline{\proverf(x, w, \theta)} \> \> \underline{\verifierf(x, \aux)} \\
        (\com, \state) \samplen P_1(x, w, \theta) \\
        \> \sendmessageright*{\com} \\
        \> \> \alpha \sampler \mathcal{C} \\[-0.5\baselineskip]
        \> \sendmessageleft*{\alpha} \\
        \rsp = P_2(x, w, \theta, \state, \alpha) \\
        \> \sendmessageright*{\rsp} \\
        \> \> \pcreturn V(x, \aux, \com, \alpha, \rsp) \\
      }
    }}
    \caption{Structure of an HVZK PoK with Trusted Helper \cite{Beullens20} \label{preliminaries-pok-fig1}}
  \end{center}
\end{figure}

\subsection{Signature schemes} \label{sec:preliminaries:sig}

In this section, we define signatures based on the Fiat-Shamir transform~\cite{FS} 
and then present the security definitions associated to these schemes.

\begin{definition}[Digital Signature] \label{def:sig}
	A digital signature scheme $\mathsf{SIG}$ is a tuple of algorithms
	$\mathsf{SIG} = (\mathsf{Gen}, \mathsf{Sign}, \mathsf{Verify})$,
	\begin{itemize}
		\item $\mathsf{Gen}$ takes security parameter $\lambda$ as input
		and outputs the key pair $(\mathsf{pk}, \, \mathsf{sk})$.
		\item $\mathsf{Sign}$ takes a message $m$ along with the secret
		(signing) key $\mathsf{sk}$ as input and produces signature $\sigma$ as output.
		\item The verification algorithm $\mathsf{Verify}$ takes the public (verification)
		key $\mathsf{pk}$, message $m'$, and signature $\sigma$ as input
		and returns $\mathsf{accept}$ or $\mathsf{reject}$.
	\end{itemize}
\end{definition}

\noindent A signature scheme $\mathsf{SIG}$ is said to have correctness error $\varepsilon$
if for all $(\mathsf{pk}, \mathsf{sk}) \leftarrow \mathsf{Gen}(1^\lambda)$,
and all messages $m \in \mathcal{M}$, it holds true that
$$ \prb \left[ \mathsf{Verify}(\mathsf{pk}, \mathsf{Sign}(\mathsf{sk},m), m) = \mathsf{reject}  \right] \leq \varepsilon.$$


\begin{definition} \label{def:inst-gen}
	A binary relation $R$ with instance generator $\mathsf{IG}$ is called \emph{hard}
	if for any (quantum) adversary $\adv$, it holds that
	$$ \prb \left[  (x, \tilde{w}) \in R \, \vert \, (x, w) \leftarrow  \mathsf{IG}, \, \tilde{w} \leftarrow \adv(x)  \right] $$
	is negligible, for any $\mathsf{IG}$ that
	always outputs a valid pair $(x,w) \in R$.
\end{definition}

\begin{definition} [Fiat-Shamir Signature] \label{def:sig-FS}
	A Fiat-Shamir signature scheme based on a public-coin interactive proof system
	(or $\Sigma$-protocol) $ \Pi = (\mathsf{Prover}, \mathsf{Verifier})$
	for a hard relation $R$ with instance generator $\mathsf{IG}$, denoted by
	$\mathsf{SIG}[\Pi]$ is a tuple of algorithms
	$\mathsf{SIG}[\Pi] = (\mathsf{GenFS}, \mathsf{SignFS}, \mathsf{VerifyFS})$,
	\begin{itemize}
		\item $\mathsf{GenFS}$ samples 
		 $(x,w) \leftarrow \mathsf{IG}$ then outputs $\mathsf{sk} := (x,w)$ and
		 $\mathsf{pk} := x$.
		 \item $\mathsf{SignFS}^H(\pk, \mathsf{sk}, m)$ outputs $(m, \sigma)$
		 where $ \sigma \leftarrow \mathsf{Prover}^H(x, w, m)$.
		 \item $\mathsf{VerifyFS}^H(\mathsf{pk}, \sigma, \tilde{m})$ runs
		 $\mathsf{Verifier}^H(x, \sigma, \tilde{m})$ and returns its output.
	\end{itemize}
\end{definition}

\noindent Hereafter, we assume that the algorithms have oracle access to hash function $H$
which is modeled as (quantum) random oracle. 


\begin{definition} [Strong Existentially Unforgeable Signatures under Chosen Message Attack ($\mathsf{sEUF-CMA}$)] \label{def:strong-euf-cma}
	A signature scheme possesses 
	\emph{strong existential unforgeability under chosen message attack} ($\mathsf{sEUF-CMA}$)
	if for all (quantum) polynomial-time algorithms $\adv$ and for uniformly
	random $H$, it holds that
  $$ \prb \left[ \mathsf{Verify}^H(\mathsf{pk}, \sigma, m) \, \bigwedge \, (m, \sigma) \notin \mathbf{Sign-q} \, \vert \, (\mathsf{pk}, \mathsf{sk}) \leftarrow \mathsf{Gen},  \, (m, \sigma) \leftarrow \adv^{H,\mathbf{Sign}}(\mathsf{pk}) \right] $$
	is negligible. Here, $\mathbf{Sign}$ is a classical oracle which on input $m$
	returns $\mathsf{Sign}^H(\pk, \allowbreak \mathsf{sk}, m)$ and $\mathbf{Sign-q}$ is the list
	of all ($q$) queries made to $\mathbf{Sign}$.
\end{definition}


\section{PoK 1 - Stern Protocol Improvement} \label{sec:pok1}

The first PoK for the $\SD$ problem over $\Ft$ was introduced by Stern in 1993~\cite{Stern93}.
This 3-round protocol features a soundness error equal to $2/3$ and as such needs to be repeated several times in order to achieve a negligible soundness error.
Over the years, 5-round code-based PoK reducing the soundness error to $1/2$ (hence providing smaller communication costs) have been proposed.
Such protocols either rely on the $\SD$ problem over $\Fq$ \cite{CVE11} or leverage the structured $\QCGSD$ and $\QCSD$ problems over $\Ft$ \cite{AGS11, BGS21}.
Hereafter, we introduce a PoK for the $\SD$ problem over $\Ft$ with soundness error equal to $1/2$.
Our new protocol (denoted PoK~1) can be either seen as (i) a modification of the initial Stern protocol leveraging the MPC-in-the-head paradigm along with several optimizations from \cite{BGS21} or as (ii) the BGS protocol \cite{BGS21} in which the quasi-cyclicity is replaced by the use of the MPC-in-head technique.

The initial Stern protocol permits to prove the knowledge of $\bm{x}$ such that $\bm{y}^\top = \bm{H} \bm{x}^\top$ and $\hw{\bm{x}} = w$.
Within the protocol, one proves the knowledge of $\bm{x}$ using $\bm{x} + \bm{u}$ for some random value $\bm{u}$ and prove that $\hw{\bm{x}} = w$ using $\pi[\bm{x}]$ for some random permutation $\pi$.
To this end, the prover starts by generating three commitments related to $\pi, \bm{x}$ and $\bm{u}$.
Next, the verifier samples a random challenge from $\{0, 1, 2\}$ and the prover outputs a response that is specific to the received challenge.
Amongst these three possible responses, one can be computed without knowing the secret $\bm{x}$ hence could be verified using the MPC-in-the-head paradigm,
Doing so, one can reduce the challenge space to $\bit$ thus achieving a soundness error equal to $1/2$.
Our PoK~1 follows this approach and is depicted as a sigma protocol with helper in Figure~\ref{pok1:fig1}.

We explain in Section \ref{sec:no-helper} how to remove the helper from Figure~\ref{pok1:fig1} in order to get both a 3-round HVZK PoK and a 5-round HVZK PoK.
Our 3-round PoK~1 features a very conservative design ($\SD$ assumption over $\Ft$ only, tighter Fiat-Shamir transformation proof thanks to the 3-round structure) therefore is comparable to the Stern \cite{Stern93} and Véron \cite{Veron97} proposals which provide the same security guarantees.
Our 3-round PoK~1 benefits from a smaller signature size than the Stern and Véron protocols at the cost of a small performance overhead due to the use of the MPC-in-the-head.
When coupled with quasi-cyclicity, our 5-round PoK~1 is comparable to the AGS \cite{AGS11} and BGS \cite{BGS21} protocols ($\QCSD$ assumption over $\Ft$, 5-round structure) while being more conservative security-wise as it relies on the $\QCSD$ problem directly rather than the $\DiffSD$ problem contrarily to the AGS and BGS protocols (see \cite{BGS21}, Definition~11 for a description of the $\DiffSD$ problem).
Similarly to the 3-round case, our 5-round PoK~1 features a smaller signature size than the AGS and BGS protocols at the cost of a small performance overhead.

\begin{figure}[!ht]
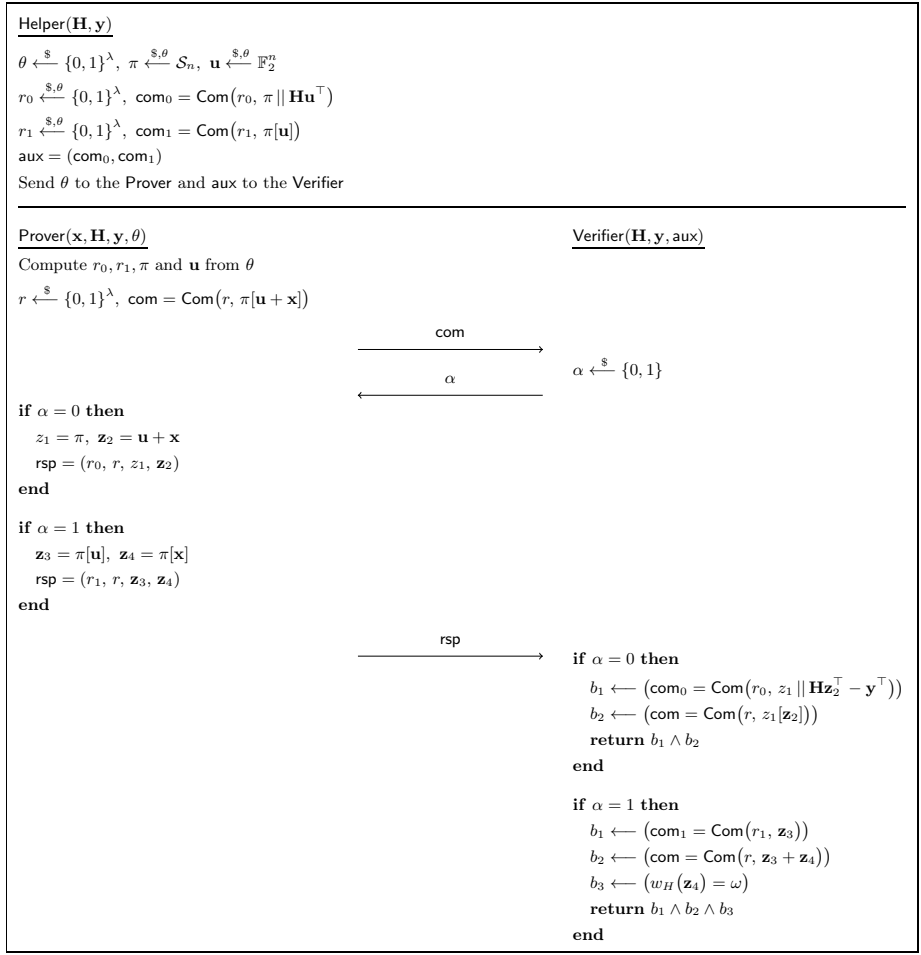
 
  \begin{center}
    \resizebox{1\textwidth}{!}{\fbox{
      \pseudocode{%
        \hspace{160pt} \> \> \hspace{160pt} \\[-0.75\baselineskip] 
        \underline{\helperf(\bm{H}, \bm{y})} \\[0.1\baselineskip]
        \theta \sampler \bit^{\lambda}, ~\pi \samples{\theta} \hperm{n}, ~ \bm{u} \samples{\theta} \Ftn \\ 
        r_0 \samples{\theta} \bit^{\lambda}, ~ \com_{0} = \commit{r_0, \, \pi \, || \, \bm{H} \bm{u}^{\top}} \\
        r_1 \samples{\theta} \bit^{\lambda}, ~ \com_{1} = \commit{r_1, \, \pi[\bm{u}]} \\
        \aux = (\com_0, \com_1) \\
        \tx{Send } \theta \tx{ to the } \proverf \tx{ and } \aux \tx{ to the } \verifierf \\[0.5\baselineskip][\hline]\\[-0.25\baselineskip]
        \underline{\proverf(\bm{x}, \bm{H}, \bm{y}, \theta)} \> \> \underline{\verifierf(\bm{H}, \bm{y}, \aux)} \\
        \tx{Compute } r_0, r_1, \pi \tx{ and } \bm{u} \tx{ from } \theta \\
        r \sampler \bit^{\lambda}, ~ \com = \commit{r, \, \pi[\bm{u} + \bm{x}]}  \\
        \> \sendmessageright*{\com} \\[-0.5\baselineskip]
        \> \> \alpha \sampler \bit \\[-\baselineskip]
        \> \sendmessageleft*{\alpha} \\[-0.5\baselineskip]
        \pcif \alpha = 0 \pcthen \\
        \pcind z_1 = \pi, ~ \bm{z}_2 = \bm{u} + \bm{x} \\
        \pcind \rsp = (r_0, \, r, \, z_1, \, \bm{z}_2) \\
        \pcend \\[0.5\baselineskip]
        \pcif \alpha = 1 \pcthen \\
        \pcind \bm{z}_3 = \pi[\bm{u}], ~ \bm{z}_4 = \pi[\bm{x}] \\ 
        \pcind \rsp = (r_1, \, r, \, \bm{z}_3, \, \bm{z}_4) \\
        \pcend \\
        \> \sendmessageright*{\rsp} \\[-\baselineskip]
        \> \> \pcif \alpha = 0 \pcthen \\
        \> \> \pcind b_1 \samplen \big( \com_0 = \commit{r_0, \, z_1 \, || \, \bm{H} \bm{z}_2^{\top} - \bm{y}^{\top}} \big) \\ 
        \> \> \pcind b_2 \samplen \big( \com = \commit{r, \, z_1[\bm{z}_2]} \big) \\ 
        \> \> \pcind \pcreturn b_1 \wedge b_2 \\
        \> \> \pcend \\[0.5\baselineskip]
        \> \> \pcif \alpha = 1 \pcthen \\
        \> \> \pcind b_1 \samplen \big( \com_1 = \commit{r_1, \, \bm{z}_3} \big) \\ 
        \> \> \pcind b_2 \samplen \big( \com = \commit{r, \, \bm{z}_3 + \bm{z}_4} \big) \\ 
        \> \> \pcind b_3 \samplen \big( \hw{\bm{z}_4} = \omega \big) \\
        \> \> \pcind \pcreturn b_1 \wedge b_2 \wedge b_3 \\
        \> \> \pcend
      }
    }}
    \caption{ZK PoK with Helper for the $\SD$ problem over $\Ft$ \label{pok1:fig1}}
  \end{center}
\vspace{-1.5\baselineskip}
\end{figure}

\begin{theorem}[\textbf{Proof of knowledge with helper}]
  If the commitment used is binding and hiding, then the protocol depicted in Figure~\ref{pok1:fig1} is a proof of knowledge with helper for the $\SD$ problem with challenge space $\mathcal{C}$ such that $|\mathcal{C}| = 2$.
\end{theorem}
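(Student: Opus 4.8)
The statement requires verifying the three defining properties of a sigma protocol with helper from the definition above: completeness, special soundness, and special honest-verifier zero-knowledge. I would treat them in that order. Completeness is a direct verification for each of the two challenges. Fixing an honestly produced $\aux = \setup(\theta)$ and a prover holding $\bm{x}$ with $\bm{H}\bm{x}^\top = \bm{y}^\top$ and $\hw{\bm{x}} = w$: for $\alpha = 0$ the equality $\bm{H}(\bm{u}+\bm{x})^\top - \bm{y}^\top = \bm{H}\bm{u}^\top$ (which uses $\bm{H}\bm{x}^\top = \bm{y}^\top$) makes the verifier's recomputation of $\com_0$ succeed, and $\com$ matches by construction; for $\alpha = 1$ I would invoke that permutations act $\Ft$-linearly, so $\pi[\bm{u}] + \pi[\bm{x}] = \pi[\bm{u}+\bm{x}]$, and that they preserve Hamming weight, so $\hw{\pi[\bm{x}]} = \hw{\bm{x}} = w$; the three checks $b_1, b_2, b_3$ then all pass.

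For special soundness I would begin with two accepting transcripts sharing $(\aux, \com)$ but carrying distinct challenges, which since $|\mathcal{C}| = 2$ are necessarily $\alpha = 0$ and $\alpha = 1$, with responses $(r_0, r, z_1, \bm{z}_2)$ and $(r_1, r', \bm{z}_3, \bm{z}_4)$. Because the definition guarantees $\aux = \setup(\theta)$ is honestly formed, $\com_0$ and $\com_1$ are genuine commitments to $\pi \,\|\, \bm{H}\bm{u}^\top$ and $\pi[\bm{u}]$. Applying the binding property to the acceptance equations then forces $z_1 = \pi$, $\bm{H}\bm{z}_2^\top - \bm{y}^\top = \bm{H}\bm{u}^\top$ and $\bm{z}_3 = \pi[\bm{u}]$, while binding on the shared $\com$ forces $z_1[\bm{z}_2] = \bm{z}_3 + \bm{z}_4$. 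The extractor outputs $\tilde{\bm{x}} = z_1^{-1}[\bm{z}_4]$; combining the four equalities gives $\tilde{\bm{x}} = \bm{z}_2 + \pi^{-1}[\bm{z}_3] = \bm{z}_2 + \bm{u}$, hence $\bm{H}\tilde{\bm{x}}^\top = \bm{H}\bm{z}_2^\top + \bm{H}\bm{u}^\top = \bm{y}^\top$, and $\hw{\tilde{\bm{x}}} = \hw{\bm{z}_4} = w$ since $z_1$ is a permutation and the check $b_3$ holds. As every equality invoked is enforced by binding, $\tilde{\bm{x}}$ is a valid witness unless one of the commitments admits two distinct openings, an event of probability $\negl(\lambda)$; thus extraction succeeds with probability $1 - \negl(\lambda)$.

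For special honest-verifier zero-knowledge I would construct, on input the statement, a seed $\theta$ and a challenge $\alpha$, a simulator $\Sim$ that recomputes $\aux = \setup(\theta)$ honestly (so it knows $\pi, \bm{u}, r_0, r_1$) but never uses a witness. For $\alpha = 0$, $\Sim$ sets $z_1 = \pi$, samples $\bm{z}_2$ uniformly from the coset $\{\bm{z} : \bm{H}\bm{z}^\top = \bm{H}\bm{u}^\top + \bm{y}^\top\}$, draws a fresh $r$ and sets $\com = \commit{r, \pi[\bm{z}_2]}$. For $\alpha = 1$, $\Sim$ sets $\bm{z}_3 = \pi[\bm{u}]$, samples $\bm{z}_4$ uniformly among weight-$w$ vectors, and sets $\com = \commit{r, \bm{z}_3 + \bm{z}_4}$. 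To prove indistinguishability I would note that, conditioned on the opened values, the simulated and honest transcripts differ only in the message hidden inside the commitment that is never opened in the chosen branch, namely $\com_1$ (committing to $\pi[\bm{u}]$) when $\alpha = 0$ and $\com_0$ (committing to $\pi \,\|\, \bm{H}\bm{u}^\top$) when $\alpha = 1$. A distinguisher separating the two would therefore yield an adversary against the hiding property of that commitment; the algebraic fact that the honest revealed vector ($\bm{u}+\bm{x}$, respectively $\pi[\bm{x}]$) is uniformly distributed over the relevant set once the corresponding mask is removed supplies the remaining ingredient.

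I expect the zero-knowledge argument to be the delicate part. Because the definition conditions on a fixed $\aux$, the masks $\pi$ and $\bm{u}$ are pinned down and one cannot simply declare the revealed vectors uniform. The resolution, which I would spell out carefully, is that in each branch the single unopened commitment is exactly the one binding the mask needed to recompute the witness-dependent quantity, so it is the hiding property that licenses the simulator to replace $\bm{u}+\bm{x}$ (respectively $\pi[\bm{x}]$) by a fresh uniform sample of the correct shape without being detected.
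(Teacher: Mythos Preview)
Your proposal is correct and follows essentially the same approach as the paper: the extractor $z_1^{-1}[\bm{z}_4]$ and the simulator (picking $\tilde{\bm{x}}$ with $\bm{H}\tilde{\bm{x}}^\top=\bm{y}^\top$ for $\alpha=0$ and a uniform weight-$w$ vector for $\alpha=1$) coincide with the paper's. Your HVZK discussion is in fact more careful than the paper's, which simply appeals to $\bm{u}$ and $\pi$ being ``random'' without addressing the conditioning on a fixed $\aux$; your observation that the indistinguishability ultimately rests on the hiding of the single unopened helper commitment ($\com_1$ when $\alpha=0$, $\com_0$ when $\alpha=1$) is the right way to make that step precise.
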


\begin{proof} We need to prove that the protocol in Figure~\ref{pok1:fig1}
satisfies the properties of correctness, special soundness, and special honest-verifier zero-knowledge.

\vspace{\baselineskip}
  \noindent \textbf{Correctness.} The correctness follows straightforwardly from the protocol description once the commitments are verified.

\vspace{\baselineskip}
  \noindent \textbf{Special soundness.} Given an adversary $\adv$ that outputs with non negligible probability two valid transcripts $(\bm{H}, \bm{y}, \aux, \com, \alpha, \rsp)$ and $(\bm{H}, \bm{y}, \aux, \com, \alpha', \allowbreak \rsp')$ with $\alpha \neq \alpha'$ and where $\aux = \setup(\theta)$ for some random seed $\theta$, one can easily build a knowledge extractor $\Ext$ that returns a solution to the $\SD$ instance defined by $(\bm{H}, \bm{y})$:

\vspace{0.5\baselineskip} 
\pseudocode{%
  \tx{1. Compute and output } z_1^{-1}[\bm{z}_4].
}
\vspace{0.5\baselineskip} 

\noindent As $\alpha \neq \alpha'$, the extractor $\Ext$ has access to both transcripts $(z_1, \bm{z}_2)$ and $(\bm{z}_3, \bm{z}_4)$ therefore he can output $z_1^{-1}[\bm{z}_4]$.
We now explain why the extractor's output is a solution to the considered $\SD$ problem instance.
Using the binding property of $\com_{0}$ and $\com_{1}$, one has $z_1 = \pi$ and $\bm{H} \bm{z}_2^{\top} - \bm{y}^{\top} = \bm{H} \bm{u}^{\top}$ as well as $z_3 = \pi[\bm{u}]$.
In addition, from the binding property of $\com$, one has $z_1[\bm{z}_2] = \bm{z}_3 + \bm{z}_4$ thus $\bm{z}_2 = \bm{u} + \pi^{-1}[\bm{z}_4]$.
Using this expression within $\bm{H} \bm{z}_2^{\top} - \bm{y}^{\top} = \bm{H} \bm{u}^{\top}$, one can deduce that $\bm{H} (\pi^{-1}[\bm{z}_4])^{\top} = \bm{y}^{\top}$.
Given that $\hw{\bm{z}_4} = \omega$, one also has $\hw{\pi^{-1}[\bm{z}_4]} = \omega$ thus $z_1^{-1}[\bm{z}_4]$ is a solution to the considered $\SD$ problem instance.
Finally, $\Ext$ runs in polynomial time which completes the proof.

\vspace{\baselineskip}
\noindent \textbf{Special Honest-Verifier Zero-Knowledge.} 
We start by explaining why valid transcripts don't leak anything on the secret.
A valid transcript contains either $(\pi, \bm{u} + \bm{x})$ or $(\pi[\bm{u}], \pi[\bm{x}])$ namely the secret $\bm{x}$ is masked either by a random value $\bm{u}$ or by some random permutation $\pi$.
  We now explain how to build a $\ppt$ simulator $\Sim$ that given $(\bm{H}, \bm{y})$, a random seed $\theta$ and a random challenge $\alpha$ outputs a transcript $(\bm{H}, \bm{y}, \aux, \com, \alpha, \rsp)$ such that $\aux = \setup(\theta)$ that is indistinguishable from the probability distribution of transcripts of honest executions of the protocol:

\vspace{0.5\baselineskip} 
\pseudocode{%
  \tx{1. Compute } (r_0, r_1, \bm{u}, \pi) \tx{ from } \theta \\
  \tx{2. If } \alpha = 0, \tx{ compute } \bm{\tilde{x}} \tx{ such that } \bm{H} \tilde{\bm{x}} = \bm{y} \tx{ (without constraint on the weight of $\bm{x}$)} \\
  \hspace{11.5pt} \tx{If } \alpha = 1, \tx{ compute } \bm{\tilde{x}} \sampler \swset{\omega}{\Ftn} \\ 
  \tx{3. Compute } r \sampler \bit^{\lambda}, ~ \tilde{\com} = \commit{r, \, \pi[\bm{u} + \bm{\tilde{x}}]} \\
  \tx{4. If } \alpha = 0, \tx{ compute } z_1 = \pi, ~ \bm{\tilde{z}}_2 = \bm{u} + \bm{\tilde{x}} \tx{ and } \tilde{\rsp} = (r_0, r, z_1, \bm{\tilde{z}}_2) \\ 
  \hspace{11.5pt} \tx{If } \alpha = 1, \tx{ compute } \bm{z}_3 = \pi[\bm{u}], ~ \bm{\tilde{z}}_4 = \pi[\bm{\tilde{x}}] \tx{ and } \tilde{\rsp} = (r_1, r, \bm{z}_3, \bm{\tilde{z}}_4) \\ 
  \tx{5. Output } (\bm{H}, \bm{y}, \aux, \tilde{\com}, \alpha, \tilde{\rsp})\\[-0.5\baselineskip]
}
\vspace{0.5\baselineskip} 

  \noindent The transcript generated by the simulator $\Sim$ is $(\bm{H}, \bm{y}, \aux, \tilde{\com}, \alpha, \tilde{\rsp})$ where $\aux \samplen \setup(\theta)$.
One need to check that $\tilde{\com}$ and $\tilde{\rsp}$ are indistinguishable in the simulation and during the real execution.
If the commitment used is hiding, then $\com$ and $\tilde{\com}$ are indistinguishable in the simulation and during the real execution.
When $\alpha = 0$, one cannot distinguish between $\bm{\tilde{z}}_2$ and $\bm{z}_2$ as $\bm{u}$ is sampled uniformly at random.
When $\alpha = 1$, one cannot distinguish between $\bm{\tilde{z}}_4$ and $\bm{z}_4$ as $\pi[\bm{\tilde{x}}]$ follows the same probability distribution as $\pi[\bm{x}]$, since $\pi$ is a random permutation.
As a consequence, $\rsp$ and $\tilde{\rsp}$ are indistinguishable in the simulation and during the real execution.
Finally, $\Sim$ runs in polynomial time which completes the proof.
\end{proof}

\section{PoK 2 \& 3 - Arbitrarily Small Soundness Error} \label{sec:pok23}

In the previous section, we have leveraged the MPC-in-the-head technique in order to design a PoK for the $\SD$ problem over $\Ft$ achieving a soundness error of $1/2$.
Hereafter, we present two PoK for the $\GSD$ problem over $\Ft$ achieving an arbitrarily small soundness error equal to $1/N$ for some parameter $N$.

\subsection{Reducing soundness using several permutations} \label{sec:pok2}

We start by highlighting a particularity of PoK for the $\SD$ problem namely that it requires to prove two statements: (i) their exists a value $\bm{x}$ such that $\bm{H} \bm{x}^{\top} = \bm{y}^{\top}$ and (ii) the weight of $\bm{x}$ is small.
As a consequence, it is very natural to design these proofs with two parts checking respectively each one of the aforementioned properties (see \cite{AGS11, CVE11, BGS21} as well as our construction from Section~\ref{sec:pok1}) which leads to a soundness error equal to $1/2$.
In order to reduce the soundness error even further, one need to merge the two parts of the proof together which turn out to be challenging to do while preserving the zero-knowledge property of the underlying proof.
Indeed, one generally (see for instance Figure~\ref{pok1:fig1}) prove the first property by masking $\bm{x}$ using $\bm{u} + \bm{x}$ for some random value $\bm{u}$ and prove the second property by masking $\bm{x}$ using $\pi[\bm{x}]$ for some random permutation $\pi$. 
The verifier can then convince itself by checking some third value such as $\pi[\bm{u} + \bm{x}]$ which 
binds the two parts of the proof together.
To enforce that the same permutation $\pi$ is used in both $\pi[\bm{x}]$ and $\pi[\bm{u} + \bm{x}]$, we can commit to the permutation $\pi$ and reveal it later such that at any given point (during or after the execution of the protocol) the verifier either knows $\pi$ or $\pi[\bm{x}]$ but not both.

Our second PoK (hereafter denoted PoK~2) solves this issue by \emph{introducing several permutations and revealing of all them but one} in order to prove the knowledge of the solution of a permuted syndrome decoding problem instance.
In particular, we need to ensure the protocol guarantees soundness 
(i.e. one can extract the permutations used in the protocol, 
whenever more than one valid transcripts are given), and preserves
the zero-knowledge (i.e. all of the permutations cannot be retrieved
from any given (single) valid transcript).
Our PoK relies on the $\GSD$ problem namely the $\SD$ problem defined with a generator matrix instead of a parity-check matrix.
Given a $\GSD$ instance $(\bm{G}, \bm{y})$, we consider $N$ permuted instances $(\pi_i[\bm{G}], \pi_i[\bm{y}])_{i \in \intoneto{N}}$ satisfying $\pi_i[\bm{x} \bm{G}] + \pi_i[\bm{e}] = \pi_i[\bm{y}]$.
Here, the solution to the $\GSD$ problem $(\bm{x}, \bm{e})$ is the secret witness.
By adding random values $\bm{u}$ and $\bm{v}_i$, one get an equivalent equation namely $\pi_i[(\bm{u} + \bm{x})\bm{G}] + \bm{v}_i + \pi_i[\bm{e}] = \pi_i[\bm{y} + \bm{u}\bm{G}] + \bm{v}_i$.
Using the random mask $\bm{v}$ is mandatory as failing to do so (like in the initial Véron protocol) lead to a zero-knowledge issue that was identified in \cite{ZKIssue} and then fixed in \cite{ISIT21}.
To summarize, by adding random values $\bm{u}$ and $\bm{v}_i$, one get an equivalent equation namely $\pi_i[(\bm{u} + \bm{x})\bm{G}] + \bm{v}_i + \pi_i[\bm{e}] = \pi_i[\bm{y} + \bm{u}\bm{G}] + \bm{v}_i$.
This can be used for the verification while preserving the zero-knowledge.
We now explain how our PoK~2 achieve an arbitrarily small soundness error.
As shown in Figure~\ref{fig:pok2-h}, the helper can compute a commitment of 
$\pi_i[\bm{y} + \bm{u}\bm{G}] + \bm{v}_i$ as this value does not involve any 
secret information.
Thus, one can design a PoK for the $\GSD$ problem by revealing both $\pi_i[\bm{e}]$ and $\pi_i[(\bm{u} + \bm{x})\bm{G}] + \bm{v}_i$.
We now explain how to use the cut-and-choose technique 
on the $N$ permutations of the considered $\GSD$ instance 
in order to ensure that the latter value 
$\pi_i[(\bm{u} + \bm{x})\bm{G}] + \bm{v}_i$ has been correctly computed.
Given some challenge $\alpha \in \intoneto{N} $, the prover can reveal 
(i) the masked secret $\bm{u} + \bm{x}$, 
(ii) all the permutations and random masks $\bm{v}_i$ except for
the instance $\alpha$ (denoted by $(\pi_i, \bm{v}_i)_{i \in \intoneto{N} \setminus \alpha}$) 
as well as
(iii) the value to be checked $\pi_{\alpha}[(\bm{u} + \bm{x})\bm{G}] + \bm{v}_\alpha$, in the instance $\alpha$.
The verifier can then recompute $\pi_i[(\bm{u} + \bm{x})\bm{G}] + \bm{v}_i$ for
all $i \in \intoneto{N} \setminus \alpha$, using the public value $\bm{G}$.  
This enforces that $\pi_{\alpha}[(\bm{u} + \bm{x})\bm{G}] + \bm{v}_\alpha$ has 
been correctly generated except with arbitrarily small probability $1/N$.
As a technicality, our proof also requires $\bm{u}$ to be extractable for the soundness hence we define it as $\bm{u} = \sum\nolimits_{i \in \intoneto{N}} \bm{u}_i$.
The resulting protocol is described in Figure~\ref{fig:pok2-h}.
We explain in Section~\ref{sec:no-helper} how to remove the helper in order to construct a 3-round HVZK PoK.
Our PoK~2 achieves an arbitrarily small soundness error and therefore lead to small signatures however at the cost of a significant overhead on performances.

\begin{figure}[!ht]
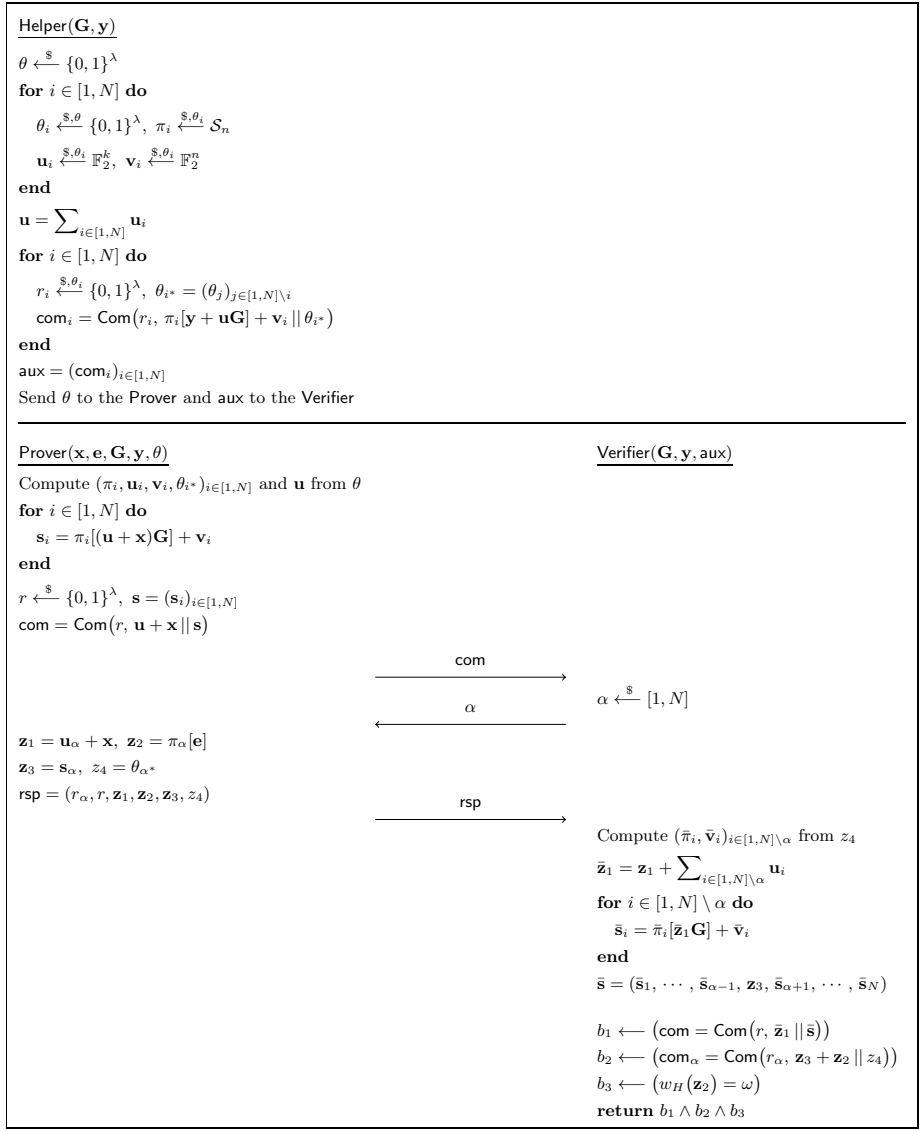
 
  \begin{center}
    \resizebox{1\textwidth}{!}{\fbox{
      \pseudocode{%
        \hspace{160pt} \> \> \hspace{160pt} \\[-0.75\baselineskip] 
        \underline{\helperf(\bm{G}, \bm{y})} \\[0.1\baselineskip]
        \theta \sampler \bit^{\lambda} \\
        \pcfor i \in \intoneto{N} \pcdo \\
        \pcind \theta_{i} \samples{\theta} \bit^{\lambda}, ~ \pi_i \samples{\theta_i} \hperm{n} \\
        \pcind \bm{u}_i \samples{\theta_i} \Ftk, ~ \bm{v}_i \samples{\theta_i} \Ftn \\
        \pcend \\
        \bm{u} = \sum\nolimits_{i \in \intoneto{N}} \bm{u}_i \\ 
        \pcfor i \in \intoneto{N} \pcdo \\
        \pcind r_{i} \samples{\theta_i} \bit^{\lambda}, ~ \theta_{i^*} = (\theta_j)_{j \in \intoneto{N} \setminus i} \\ 
        \pcind \com_{i} = \commit{r_i, \, \pi_i[\bm{y} + \bm{u} \bm{G}] + \bm{v}_i \, || \, \theta_{i^*}} \\
        \pcend \\
        \aux = (\com_i)_{i \in \intoneto{N}} \\
        \tx{Send } \theta \tx{ to the } \proverf \tx{ and } \aux \tx{ to the } \verifierf \\[0.5\baselineskip][\hline]\\[-0.25\baselineskip]
        \underline{\proverf(\bm{x}, \bm{e}, \bm{G}, \bm{y}, \theta)} \> \> \underline{\verifierf(\bm{G}, \bm{y}, \aux)} \\
        \tx{Compute } (\pi_i, \bm{u}_i, \bm{v}_i, \theta_{i^*})_{i \in \intoneto{N}} \tx{ and } \bm{u} \tx{ from } \theta \\
        \pcfor i \in [1, N] \pcdo \\
        \pcind \bm{s}_i = \pi_i[(\bm{u} + \bm{x}) \bm{G}] + \bm{v}_i \\ 
        \pcend \\
        r \sampler \bit^{\lambda}, ~ \bm{s} = (\bm{s}_i)_{i \in \intoneto{N}} \\
        \com = \commit{r, \, \bm{u} + \bm{x} \, || \, \bm{s}}  \\
        \> \sendmessageright*{\com} \\[-0.5\baselineskip]
        \> \> \alpha \sampler \intoneto{N} \\[-\baselineskip]
        \> \sendmessageleft*{\alpha} \\[-0.5\baselineskip]
        \bm{z}_1 = \bm{u}_\alpha + \bm{x}, ~ \bm{z}_2 = \pi_{\alpha}[\bm{e}] \\
        \bm{z}_3 = \bm{s}_\alpha, ~ z_4 = \theta_{\alpha^*} \\
        \rsp = (r_{\alpha}, r, \bm{z}_1, \bm{z}_2, \bm{z}_3, z_4) \\[-\baselineskip]
        \> \sendmessageright*{\rsp} \\[-0.5\baselineskip]
        \> \> \tx{Compute } (\bar{\pi}_i, \bar{\bm{v}}_i)_{i \in \intoneto{N} \setminus \alpha} \tx{ from } z_4 \\
        \> \> \bar{\bm{z}}_1 = \bm{z}_1 + \sum\nolimits_{i \in \intoneto{N} \setminus \alpha} \bm{u}_i \\ 
        \> \> \pcfor i \in [1, N] \setminus \alpha \pcdo \\
        \> \> \pcind \bar{\bm{s}}_i = \bar{\pi}_i[\bar{\bm{z}}_1 \bm{G}] + \bar{\bm{v}}_i \\
        \> \> \pcend \\
        \> \> \bar{\bm{s}} = (\bar{\bm{s}}_1, \, \cdots, \, \bar{\bm{s}}_{\alpha - 1}, \, \bm{z}_3, \, \bar{\bm{s}}_{\alpha + 1}, \, \cdots, \, \bar{\bm{s}}_N) \\[0.75\baselineskip]
        \> \> b_1 \samplen \big( \com = \commit{r, \, \bar{\bm{z}}_1 \, || \, \bar{\bm{s}}} \big) \\ 
        \> \> b_2 \samplen \big( \com_\alpha = \commit{r_{\alpha}, \, \bm{z}_3 + \bm{z}_2 \, || \, z_4} \big) \\
        \> \> b_3 \samplen \big( \hw{\bm{z}_2} = \omega \big) \\
        \> \> \pcreturn b_1 \wedge b_2 \wedge b_3
      }
    }}
    \caption{ZK PoK with Helper for the $\GSD$ problem over $\Ft$ \label{fig:pok2-h}}
  \end{center}
\end{figure}

\newpage
~
\newpage
\begin{theorem}[\textbf{Proof of knowledge with helper}] \label{thm:pok2}
  If the commitment scheme is computationally binding and computationally hiding, then the protocol depicted in Figure~\ref{fig:pok2-h} is a proof of knowledge with helper for the $\GSD$ problem with challenge space $\mathcal{C}$ such that $|\mathcal{C}| = N$, with computational soundness error $1/N$ and
  honest-verifier computational zero-knowledge.
  \footnote{Proof of Knowledge systems with computational soundness
  are also called Arguments of Knowledge. Our PoK achieves
  computational ZK since the random masks $(\bm{u}, \bm{v})$ 
  added to hide the secrets are generated from seeds with the help
  of pseudorandom objects such as XOF.}

\end{theorem}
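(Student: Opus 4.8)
The plan is to verify the three defining properties of a sigma protocol with helper for the construction of Figure~\ref{fig:pok2-h}: completeness, computational special soundness (giving soundness error $1/N$), and special honest-verifier computational zero-knowledge.

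\textbf{Completeness.} First I would trace an honest run and check the verifier's tests $b_1,b_2,b_3$. The key identity is that the reconstructed $\bar{\bm{z}}_1 = \bm{z}_1 + \sum_{i \in \intoneto{N} \setminus \alpha} \bm{u}_i$ collapses to $\bm{u}+\bm{x}$, because $\bm{z}_1 = \bm{u}_\alpha + \bm{x}$ and $\bm{u}=\sum_i \bm{u}_i$; hence $\bar{\bm{s}}_i = \pi_i[\bar{\bm{z}}_1\bm{G}]+\bm{v}_i = \bm{s}_i$ for all $i\neq\alpha$, so $\bar{\bm{s}}=\bm{s}$ and $b_1$ holds. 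For $b_2$ I would use the $\GSD$ relation $\bm{x}\bm{G}+\bm{e}=\bm{y}$ and linearity of $\pi_\alpha$ to get $\bm{z}_3+\bm{z}_2 = \pi_\alpha[(\bm{u}+\bm{x})\bm{G}+\bm{e}]+\bm{v}_\alpha = \pi_\alpha[\bm{y}+\bm{u}\bm{G}]+\bm{v}_\alpha$, exactly the value committed in $\com_\alpha$. Finally $b_3$ holds since permutations preserve weight, so $\hw{\bm{z}_2}=\hw{\pi_\alpha[\bm{e}]}=\hw{\bm{e}}=\omega$.

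\textbf{Special soundness.} Given two accepting transcripts sharing $(\aux,\com)$ with distinct challenges $\alpha\neq\alpha'$, I would build $\Ext$ as follows. Since $\alpha\neq\alpha'$, the seed bundle $z_4' = \theta_{{\alpha'}^{*}}$ of the second transcript contains $\theta_\alpha$, from which $\Ext$ recomputes $\pi_\alpha,\bm{u}_\alpha,\bm{v}_\alpha$; it then outputs $\bm{x}:=\bm{z}_1+\bm{u}_\alpha$ and $\bm{e}:=\pi_\alpha^{-1}[\bm{z}_2]$ taken from the first transcript (this is precisely why the helper splits $\bm{u}=\sum_i\bm{u}_i$, so the mask at the challenged index becomes extractable). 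The substance is showing this pair is a witness, where binding enters twice. Binding of $\com$ forces the single opening $\bar{\bm{z}}_1\,||\,\bar{\bm{s}}$ to agree across both transcripts; comparing coordinate $\alpha$ of $\bar{\bm{s}}$ (sent as $\bm{z}_3$ in the first transcript, but recomputed by the verifier as $\pi_\alpha[\bar{\bm{z}}_1\bm{G}]+\bm{v}_\alpha$ in the second, since there $\alpha$ is an \emph{opened} index) pins down $\bm{z}_3 = \pi_\alpha[(\bm{u}+\bm{x})\bm{G}]+\bm{v}_\alpha$. Binding of the honestly produced $\com_\alpha$ forces $\bm{z}_3+\bm{z}_2 = \pi_\alpha[\bm{y}+\bm{u}\bm{G}]+\bm{v}_\alpha$. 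Substituting and cancelling, linearity of $\pi_\alpha$ yields $\bm{z}_2 = \pi_\alpha[\bm{y}+\bm{x}\bm{G}]$, so $\bm{e}=\bm{y}+\bm{x}\bm{G}$, i.e. $\bm{x}\bm{G}+\bm{e}=\bm{y}$; with $\hw{\bm{e}}=\hw{\bm{z}_2}=\omega$ from $b_3$, the extracted $(\bm{x},\bm{e})$ is a valid $\GSD$ witness. As $|\mathcal{C}|=N$ and two differing challenges suffice, a standard rewinding/averaging argument gives soundness error $1/N$; it is only computational because each invocation of binding fails with probability $\negl(\lambda)$.

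\textbf{Zero-knowledge.} I would give a simulator $\Sim$ that, knowing $\theta$ (hence all helper values and the target $T_\alpha=\pi_\alpha[\bm{y}+\bm{u}\bm{G}]+\bm{v}_\alpha$ committed in $\com_\alpha$) but not the witness, samples $\bm{z}_2\sampler\swset{\omega}{\Ftn}$, sets $\bm{z}_3:=T_\alpha+\bm{z}_2$ so $b_2,b_3$ pass by construction, picks $\bm{z}_1$ uniformly, reconstructs $\bar{\bm{z}}_1$ and $\bar{\bm{s}}$ exactly as the verifier would, and commits to them as $\com$. For indistinguishability I would observe that the only secret-dependent outputs are $\bm{z}_1=\bm{u}_\alpha+\bm{x}$, masked by the unrevealed uniform $\bm{u}_\alpha$, and $\bm{z}_2=\pi_\alpha[\bm{e}]$, which is (pseudo)uniform over $\swset{\omega}{\Ftn}$ since $\pi_\alpha$ is a random permutation hidden inside the unopened seed $\theta_\alpha$; then $\bm{z}_3$ is determined by $\bm{z}_2$ and $T_\alpha$ identically in both worlds, and hiding of $\com,\com_\alpha,\aux$ conceals everything else. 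The zero-knowledge is computational because the masks $\bm{u},\bm{v}_i$ and permutations $\pi_i$ are expanded pseudorandomly from seeds.

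\textbf{Main obstacle.} The delicate step is the soundness extraction: one must argue that the prover's claimed value $\bm{z}_3$ at the \emph{hidden} index $\alpha$ is forced to the honest value by cross-referencing the two transcripts — exploiting that $\alpha$ is opened in the second transcript — before chaining the two binding relations to recover the syndrome equation. Keeping the bookkeeping of which seeds and masks are revealed in which transcript exactly right (and thus why $\bm{u}_\alpha$ and $\pi_\alpha$ become available to $\Ext$) is where the argument must be most careful.
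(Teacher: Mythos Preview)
Your proposal is correct and follows the same three-part skeleton (completeness, special soundness, special HVZK) as the paper, but the execution differs in two places worth noting.

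For \textbf{special soundness}, your argument is actually tighter than the paper's. The paper's extractor is the same as yours---recover all $(\pi_i,\bm{u}_i)$ from $z_4\cup z_4'$ and output $(\bm{z}_1-\bm{u}_\alpha,\pi_\alpha^{-1}[\bm{z}_2])$---but its justification is essentially ``$\bm{z}_1$ is computed as $\bm{u}_\alpha+\bm{x}$, so the extractor recovers $\bm{x}$,'' which reads as if the prover were honest. Your route---use binding of $\com$ to equate the $\alpha$-th coordinate of $\bar{\bm{s}}$ across the two transcripts (sent as $\bm{z}_3$ in one, recomputed as $\pi_\alpha[\bar{\bm{z}}_1\bm{G}]+\bm{v}_\alpha$ in the other), then chain with binding of $\com_\alpha$ to cancel and obtain $\bm{z}_2=\pi_\alpha[\bm{y}+\bm{x}\bm{G}]$---is the right way to argue it and is what the paper's proof \emph{should} say. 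The one thing you leave implicit, which the paper does spell out, is that binding of the $\com_i$'s in $\aux$ (on their $\theta_{i^*}$ component) is what guarantees the seeds pulled from $z_4,z_4'$ coincide with the helper's seeds, so that the $\pi_\alpha,\bm{u}_\alpha,\bm{v}_\alpha$ you extract really are the ones appearing in the equation $\bm{z}_3+\bm{z}_2=\pi_\alpha[\bm{y}+\bm{u}\bm{G}]+\bm{v}_\alpha$. You use this identification freely; just make it explicit.

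For \textbf{HVZK}, you and the paper build different simulators. The paper samples a fake witness $(\tilde{\bm{x}},\tilde{\bm{e}})$, computes $\tilde{\bm{s}}_i$ honestly for $i\neq\alpha$, and patches $\tilde{\bm{s}}_\alpha$ with the correction $\pi_\alpha[\bm{y}-\tilde{\bm{y}}]$ so that the $\com_\alpha$ check still closes. You instead program the verifier's view directly: sample $\bm{z}_2$ uniformly of weight $\omega$, set $\bm{z}_3:=T_\alpha+\bm{z}_2$, sample $\bm{z}_1$ uniformly, and commit to whatever $\bar{\bm{z}}_1\|\bar{\bm{s}}$ the verifier would reconstruct. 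Both are valid; yours is arguably simpler because it avoids the correction-term bookkeeping, while the paper's has the aesthetic advantage that the simulated $\com$ commits to a vector of the same ``shape'' as an honest $\bm{s}$. The indistinguishability argument is the same in both cases (pseudorandomness of $\bm{u}_\alpha$ and $\pi_\alpha$, plus hiding of the commitments).
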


\begin{proof} We prove the correctness, special soundness and special honest-verifier zero-knowledge properties below.

\vspace{\baselineskip}
\noindent \textbf{Correctness.}
One prove the correctness by showing that the input to the
$\com$ and $\com_{\alpha}$ used by the verifier in the final steps of
Figure~\ref{fig:pok2-h} are same as the input provided by the prover
while generating $\com$ and $\com_{\alpha}$. Once the inputs are
shown to be identical, the correctness follows from the correctness
of the commitment scheme.
We begin by considering the inputs to $\com$, we need to show that
$\bar{\bm{z}}_1 = \bm{u} + \bm{x}$ and $\bar{\bm{s}} = \bm{s}$.
Note that the verifier computes $\bar{\bm{z}}_1$ as $\bm{z}_1 + \sum\nolimits_{i \in \intoneto{N} \setminus \alpha} \bm{u}_i$.
Here, $\bm{z}_1$ is computed by the prover as $(\bm{u}_{\alpha} + \bm{x})$,
and the second term is summation of all $\bm{u}_i$ except $\bm{u}_{\alpha}$.
Therefore, it is easy to see that
$\bar{\bm{z}}_1 = (\bm{u}_{\alpha} + \bm{x}) + \sum\nolimits_{i \in \intoneto{N} \setminus \alpha} \bm{u}_i = \bm{u} + \bm{x}$.
Next, note that the verifier has access to all seeds $\theta_i$ except
$\theta_\alpha$ from the $z_4$ sent by the prover in the response $\rsp$.
The verifier can therefore compute all the permutation $\pi_i$ and
the random masks $\bm{v}_i$ except $\pi_\alpha$ and $\bm{v}_\alpha$.
As shown in previous step, the verifier also knows the value
$\bar{\bm{z}}_1 = \bm{u} + \bm{x}$. The verifier
can therefore compute 
$\bar{\bm{s}}_i = \pi_i[\bar{\bm{z}}_1\bm{G}] + \bm{v}_i$ for all $i \in \intoneto{N} \setminus \alpha$ since $\bm{G}$ is public. Which is same as
$\bar{\bm{s}}_i = \pi_i[(\bm{u} + \bm{x})\bm{G}] + \bm{v}_i$ for all $i \in \intoneto{N} \setminus \alpha$, but this is exactly how
the $\bm{s}_i$ values are computed by the prover.
Therefore, we have shown that $\bar{\bm{s}}_i = \bm{s}_i$ for all $i \neq \alpha$. However, $\bar{\bm{s}}_\alpha = \bm{z}_3 = \bm{s}_\alpha$
since $\bm{z}_3$ is computed by the prover.
Therefore we have shown that $\bar{\bm{s}} = \bm{s}$.
This concludes the part related the commitment $\com$, since we have shown that
both the inputs are identical.
We now show the same for $\com_{\alpha}$. Here, we need to show that
(i) $\bm{z}_3 + \bm{z}_2 = \pi_{\alpha}[\bm{y} + \bm{u} \bm{G}] + \bm{v}_{\alpha}$ and, 
(ii) $z_4 = \theta_{\alpha^*}$ where $\theta_{\alpha^*}$ denotes all the seeds
$\theta_i$ except for $i = \alpha$.
It is easy to verify and has been discussed earlier
that $z_4$ computed by the prover as $\theta_{\alpha^*}$ and
sent as part of the response to the verifier. 
Thus, $z_4 = \theta_{\alpha^*}$ by just inspecting the prover's
response.
Recall, that $\bm{z}_3 = \bm{s}_\alpha$ and 
$\bm{s}_{\alpha} = \pi_{\alpha}[(\bm{u} + \bm{x}) \bm{G}] + \bm{v}_{\alpha}$.
Adding $\bm{z}_2$ to both sides and substituting $\bm{z}_2 = \pi_{\alpha}[\bm{e}]$, one get
$\bm{z}_{3} + \bm{z}_2 = \pi_{\alpha}[(\bm{u} + \bm{x}) \bm{G}] + \bm{v}_{\alpha} + \pi_{\alpha}[\bm{e}] = \pi_{\alpha}[\bm{x} \bm{G} + \bm{e} + \bm{u} \bm{G}] + \bm{v}_{\alpha} = \pi_{\alpha}[\bm{y} + \bm{u} \bm{G}] + \bm{v}_{\alpha}.$
Hence we have shown that both the inputs to $\com_\alpha$ are also
identical.
As mentioned earlier, since the inputs to $\com$ and $\com_\alpha$
are identical to their counterparts computed by the prover, the correctness
of the protocol follows from the correctness of the commitment scheme.

\vspace{\baselineskip}
  \noindent \textbf{Special soundness.} 
  In order to prove the special soundness,
  one need to build an efficient knowledge extractor $\Ext$ which
  returns a solution of the $\GSD$ instance defined by $(\bm{G}, \bm{y})$
  with high probability,
  when provided with two valid transcripts $(\bm{G}, \bm{y}, \aux, \com, \alpha, \rsp)$ and $(\bm{G}, \bm{y}, \aux, \com, \allowbreak \alpha', \rsp')$ with $\alpha \neq \alpha'$ 
  generated by a PPT adversary (malicious prover) $\adv$,
  where $\aux = \setup(\theta)$ for some random seed $\theta$.
  The knowledge extractor $\Ext$ computes the solution as:

\vspace{0.5\baselineskip} 
\pseudocode{%
  \tx{1. Compute } (\pi_i, \bm{u}_i)_{i \in \intoneto{N}} \tx{ from } z_4 \tx{ and } z_4' \\
  \tx{2. Output } (\bm{z}_1 - \bm{u}_{\alpha}, \pi_{\alpha}^{-1}[\bm{z}_2])
}
\vspace{0.5\baselineskip} 

\noindent We now show that this can be computed efficiently by $\Ext$, and
then prove that the output is indeed a solution to the given $\GSD$ problem.
Recall from Figure~\ref{fig:pok2-h}, that the prover's response is
of the form $\rsp = (r_{\alpha}, r, \bm{z}_1, \bm{z}_2, \bm{z}_3, z_4)$
and $\rsp' = (r_{\alpha'}, r', \bm{z}_1', \bm{z}_2', \bm{z}_3', z_4')$.
Here, $z_4$ denotes all the seeds $\theta_i$ for $i \neq \alpha$ and
$z_4'$ denotes all the seeds $\theta_i$ for $i \neq \alpha'$.
Since, $\alpha \neq \alpha'$, the $\Ext$ has access to all the
seeds $\theta_i$ for $i \in \intoneto{N}$.
The extractor therefore can efficiently compute all the permutations $\pi_i$
and masks $\bm{u}_i$ for $i \in \intoneto{N}$, including $\pi_\alpha$ and
$\bm{u}_{\alpha}$. Using these the extractor can efficiently compute and
then output $(\bm{z}_1 - \bm{u}_{\alpha}, \pi_{\alpha}^{-1}[\bm{z}_2])$.

Let $\tilde{\bm{\theta}} = ({\tilde{\theta}}_{i})_{i \in \intoneto{N}}$ denote the seeds used to generate the commitments
$(\com_i)_{i \in \intoneto{N}}$ comprising the value $\aux$.
Note that if there exists an index $j \in \intoneto{N}$, such that
${\theta}_{j} \neq  {\tilde{\theta}}_{j}$ or 
${\theta'}_{j} \neq  {\tilde{\theta}}_{j}$, then that particular
${\theta}_{j}$ or ${\theta'}_{j}$ can be used to break
the binding property of the commitment scheme as any commitment $\com_i$ in $\aux$ where $i \neq j$, can be
opened as valid commitment for two distinct messages where one contains
${\tilde{\theta}}_{j}$ and the other contains ${\theta}_{j}$.
Thus, the binding property of the commitment scheme ensures that
the extracted mask $\bm{u}_{\alpha}$ and the permutation
$\pi_{\alpha}$ are same as the those committed in the common 
information $\aux$ and the first commitment $\com$.
We now explain why the extractor's output is a solution to the considered $\GSD$ problem instance.
Note that $\bm{z}_1$ is computed as $\bm{z}_1 = \bm{u}_{\alpha} + \bm{x}$,
therefore the extractor correctly recovers the secret $\bm{x}$
by computing $\bm{z}_1 - \bm{u}_{\alpha}$.
Also, $\bm{z}_2 = \pi_{\alpha}[\bm{e}]$, which can be inverted by
$\Ext$ after learning $\pi_{\alpha}$ as $\bm{e} = \pi_{\alpha}^{-1}[\bm{z}_2]$.
This proves that extracted solution
$(\bm{z}_1 - \bm{u}_{\alpha}, \pi_{\alpha}^{-1}[\bm{z}_2]) = (\bm{x}, \bm{e})$
is the correct solution to the $\GSD$ problem.

\vspace{\baselineskip}
\noindent \textbf{Special Honest-Verifier Zero-Knowledge.} 
We start by explaining why valid transcripts do not leak anything on the secret $(\bm{x}, \bm{e})$.
A valid transcript contains $(\bm{u}_{\alpha} + \bm{x}, \, \pi_{\alpha}[\bm{e}], \, \bm{s}_{\alpha}, \, (\pi_i, \bm{u}_i, \bm{v}_i)_{i \in \intoneto{N} \setminus \alpha})$ namely the secret $\bm{x}$ is masked by a random value $\bm{u}_{\alpha}$ and the secret $\bm{e}$ is masked by a random permutation $\pi_{\alpha}$.
From the protocol, one can compute $\bm{u}_{\alpha} + \bm{x}$ and $\bm{u} + \bm{x}$ however this does not leak anything on $\bm{x}$ as $\bm{u}_{\alpha}$ and $\bm{u}$ are both unknown.
The main difficulty concerns the permutation $\pi_\alpha$ as the protocol requires $\pi_{\alpha}[(\bm{u} + \bm{x}) \bm{G}]$ to be computed while both $(\bm{u} + \bm{x})$ and $\bm{G}$ are known.
To overcome this issue, the protocol actually computes $\pi_{\alpha}[(\bm{u} + \bm{x}) \bm{G}] + \bm{v}_{\alpha}$ for some random value $\bm{v}_{\alpha}$ hence the transcript does not leak anything on $\bm{e}$.
Formally, one can build a $\ppt$ simulator $\Sim$ that given the public values $(\bm{G}, \bm{y})$, a random seed $\theta$ and a random challenge $\alpha$ outputs a transcript $(\bm{G}, \bm{y}, \aux, \com, \alpha, \rsp)$ such that $\aux = \setup(\theta)$ that is computationally indistinguishable from the transcript of honest executions of the protocol:

\pseudocode{%
  \tx{1. Compute } (\pi_i, \bm{u}_i, \bm{v}_i, \theta_{i^*})_{i \in \intoneto{N}} \tx{ and } \bm{u} \tx{ from } \theta \\
  \tx{2. Compute } \bm{\tilde{x}} \sampler \Ftk, ~ \bm{\tilde{e}} \sampler \swset{\omega}{\Ftn} \tx{ and } \bm{\tilde{y}} = \bm{\tilde{x}} \bm{G} + \bm{\tilde{e}} \\
  \tx{3. Compute } \bm{\tilde{s}}_i = \pi_i[(\bm{u} + \bm{\tilde{x}}) \bm{G}] + \bm{v}_i \tx{ for all } i \in \intoneto{N} \setminus \alpha \\
  \tx{4. Compute } \bm{\tilde{s}}_{\alpha} = \pi_{\alpha}[(\bm{u} + \bm{\tilde{x}}) \bm{G}] + \bm{v}_{\alpha} + \pi_{\alpha}[\bm{y} - \bm{\tilde{y}}] \\
  \tx{5. Compute } r \sampler \bit^{\lambda}, ~ \bm{\tilde{s}} = (\bm{\tilde{s}}_i)_{i \in \intoneto{N}} \tx{ and } \tilde{\com} = \commit{r, \, (\bm{u} + \bm{\tilde{x}}) \, || \, \bm{\tilde{s}}} \\
  \tx{6. Compute } \bm{\tilde{z}}_1 = \bm{u}_{\alpha} + \bm{\tilde{x}}, ~ \bm{\tilde{z}}_2 = \pi_{\alpha}[\bm{\tilde{e}}], ~ \bm{\tilde{z}}_3 = \bm{\tilde{s}}_{\alpha}, ~ z_4 = \theta_{\alpha^*} \\ 
  \tx{7. Compute  } \tilde{\rsp} = (r_{\alpha}, r, \bm{\tilde{z}}_1, \bm{\tilde{z}}_2, \bm{\tilde{z}}_3, z_4) \tx{ and output } (\bm{G}, \bm{y}, \aux, \tilde{\com}, \alpha, \tilde{\rsp})\\[-0.5\baselineskip]
}

\noindent The transcript generated by the simulator $\Sim$ is $(\bm{G}, \bm{y}, \aux, \tilde{\com}, \alpha, \tilde{\rsp})$ where $\aux \samplen \setup(\theta)$.
We now show that $\tilde{\com}$ and $\tilde{\rsp}$ are indistinguishable in the simulation and during the real execution.
If the commitment used is hiding, then $\com$ and $\tilde{\com}$ are indistinguishable in the simulation and during the real execution.
Since $\bm{\tilde{x}}$ (in the simulation) and $\bm{x}$ (in the real execution)
are masked by a random mask $\bm{u}_{\alpha}$ which is unknown
to the verifier, $\bm{\tilde{z}}_1$ and $\bm{z}_1$ 
are computationally indistinguishable.
Similarly, $\bm{\tilde{e}}$ and $\bm{e}$ have same hamming weight $\omega$,
and are masked by random permutation $\pi_{\alpha}$ which is never
known to the verifier. Thus, making $\bm{\tilde{z}}_2$ and $\bm{z}_2$ 
computationally indistinguishable.
In addition, as the mask $\bm{v}_{\alpha}$ is sampled uniformly at random
and is unknown to the verifier, it cannot distinguish between $\bm{\tilde{z}}_3$ and $\bm{z}_3$.
Finally, $z_4 = \theta_{\alpha^*}$ is identical in both cases.
As a consequence, $\rsp$ and $\tilde{\rsp}$ are 
computationally indistinguishable 
in the simulation and during the real execution.
Finally, $\Sim$ runs in polynomial time which completes the proof.
\end{proof}

\subsection{Reducing soundness using a shared permutation} \label{sec:pok3}

The PoK~2 presented in the previous section achieves an arbitrarily small soundness error equal to $1/N$.
As such, it is theoretically equivalent to the proposal from \cite{FJR21}.
Nonetheless, the optimized version of the FJR protocol outperforms the optimized version of our PoK~2 in practice.
This is explained by the fact that some optimizations bring a better improvement for the FJR protocol than for PoK~2.
We defer the interested reader to the paragraph ``Commitment compression'' in Section~\ref{sec:optimizations} for additional details on this topic.
In this section, we show how one can adapt our PoK~2 to the shared permutation setting used in \cite{FJR21} to achieve similar performances.
The resulting protocol is denoted PoK~3 and can be seen as a dual version of the FJR protocol based on the $\GSD$ problem rather than the $\SD$ one.

\begin{figure}[!ht]
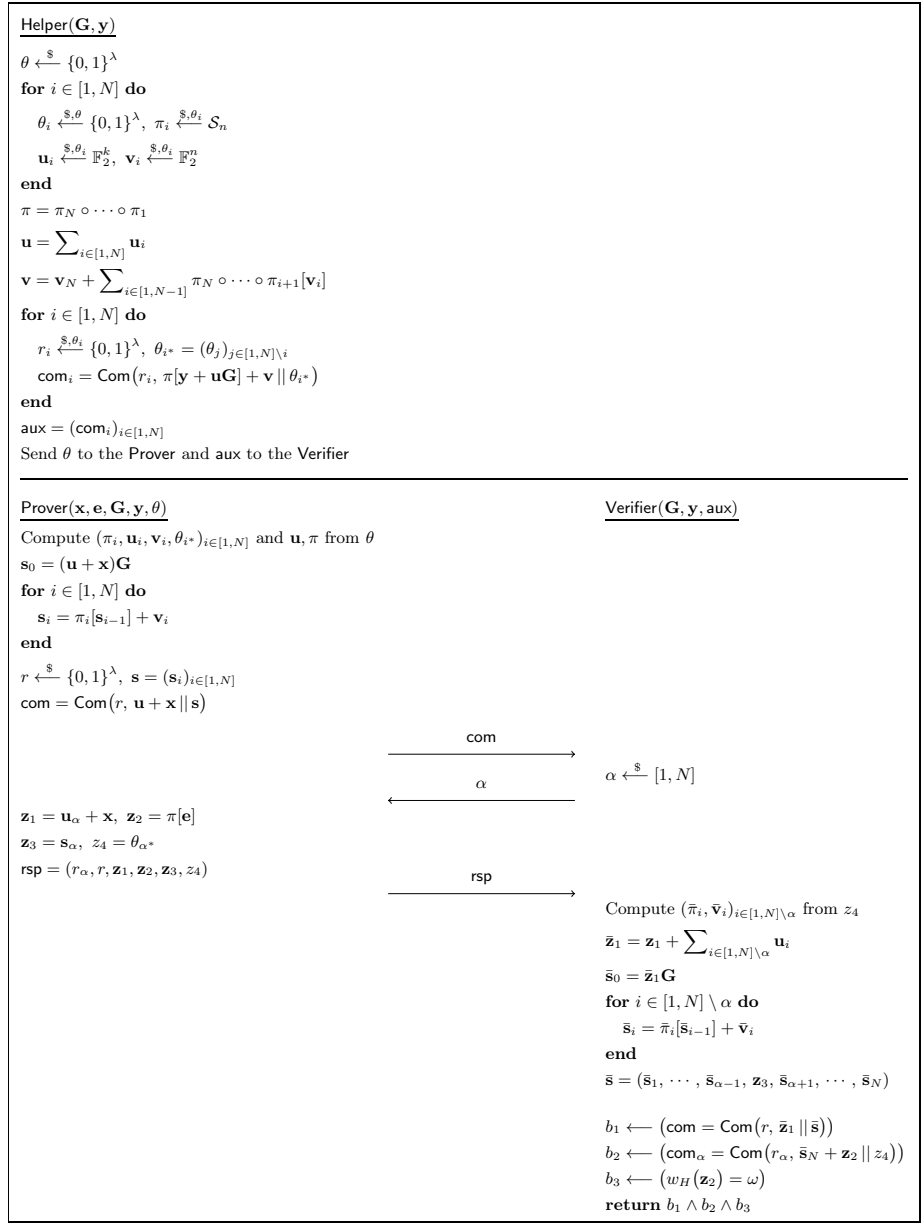
 
  \begin{center}
    \resizebox{1\textwidth}{!}{\fbox{
      \pseudocode{%
        \hspace{160pt} \> \> \hspace{160pt} \\[-0.75\baselineskip] 
        \underline{\helperf(\bm{G}, \bm{y})} \\[0.1\baselineskip]
        \theta \sampler \bit^{\lambda} \\
        \pcfor i \in \intoneto{N} \pcdo \\
        \pcind \theta_{i} \samples{\theta} \bit^{\lambda}, ~ \pi_i \samples{\theta_i} \hperm{n} \\
        \pcind \bm{u}_i \samples{\theta_i} \Ftk, ~ \bm{v}_i \samples{\theta_i} \Ftn \\
        \pcend \\
        \pi = \pi_N \circ \cdots \circ \pi_1 \\
        \bm{u} = \sum\nolimits_{i \in \intoneto{N}} \bm{u}_i \\ 
        \bm{v} = \bm{v}_N + \sum\nolimits_{i \in \intoneto{N - 1}} \pi_N \circ \cdots \circ \pi_{i + 1}[\bm{v}_i] \\
        \pcfor i \in \intoneto{N} \pcdo \\
        \pcind r_{i} \samples{\theta_i} \bit^{\lambda}, ~ \theta_{i^*} = (\theta_j)_{j \in \intoneto{N} \setminus i} \\ 
        \pcind \com_{i} = \commit{r_i, \, \pi[\bm{y} + \bm{u} \bm{G}] + \bm{v} \, || \, \theta_{i^*}} \\
        \pcend \\
        \aux = (\com_i)_{i \in \intoneto{N}} \\
        \tx{Send } \theta \tx{ to the } \proverf \tx{ and } \aux \tx{ to the } \verifierf \\[0.5\baselineskip][\hline]\\[-0.25\baselineskip]
        \underline{\proverf(\bm{x}, \bm{e}, \bm{G}, \bm{y}, \theta)} \> \> \underline{\verifierf(\bm{G}, \bm{y}, \aux)} \\
        \tx{Compute } (\pi_i, \bm{u}_i, \bm{v}_i, \theta_{i^*})_{i \in \intoneto{N}} \tx{ and } \bm{u}, \pi \tx{ from } \theta \\
        \bm{s}_0 = (\bm{u} + \bm{x}) \bm{G} \\
        \pcfor i \in [1, N] \pcdo \\
        \pcind \bm{s}_i = \pi_i[\bm{s}_{i - 1}] + \bm{v}_i \\ 
        \pcend \\
        r \sampler \bit^{\lambda}, ~ \bm{s} = (\bm{s}_i)_{i \in \intoneto{N}} \\
        \com = \commit{r, \, \bm{u} + \bm{x} \, || \, \bm{s}}  \\
        \> \sendmessageright*{\com} \\[-0.5\baselineskip]
        \> \> \alpha \sampler \intoneto{N} \\[-\baselineskip]
        \> \sendmessageleft*{\alpha} \\[-0.5\baselineskip]
        \bm{z}_1 = \bm{u}_\alpha + \bm{x}, ~ \bm{z}_2 = \pi[\bm{e}] \\
        \bm{z}_3 = \bm{s}_\alpha, ~ z_4 = \theta_{\alpha^*} \\
        \rsp = (r_{\alpha}, r, \bm{z}_1, \bm{z}_2, \bm{z}_3, z_4) \\[-\baselineskip]
        \> \sendmessageright*{\rsp} \\[-0.5\baselineskip]
        \> \> \tx{Compute } (\bar{\pi}_i, \bar{\bm{v}}_i)_{i \in \intoneto{N} \setminus \alpha} \tx{ from } z_4 \\
        \> \> \bar{\bm{z}}_1 = \bm{z}_1 + \sum\nolimits_{i \in \intoneto{N} \setminus \alpha} \bm{u}_i \\ 
        \> \> \bar{\bm{s}}_0 = \bar{\bm{z}}_1 \bm{G} \\
        \> \> \pcfor i \in [1, N] \setminus \alpha \pcdo \\
        \> \> \pcind \bar{\bm{s}}_i = \bar{\pi}_i[\bar{\bm{s}}_{i - 1}] + \bar{\bm{v}}_i \\
        \> \> \pcend \\
        \> \> \bar{\bm{s}} = (\bar{\bm{s}}_1, \, \cdots, \, \bar{\bm{s}}_{\alpha - 1}, \, \bm{z}_3, \, \bar{\bm{s}}_{\alpha + 1}, \, \cdots, \, \bar{\bm{s}}_N) \\[0.75\baselineskip]
        \> \> b_1 \samplen \big( \com = \commit{r, \, \bar{\bm{z}}_1 \, || \, \bar{\bm{s}}} \big) \\ 
        \> \> b_2 \samplen \big( \com_\alpha = \commit{r_{\alpha}, \, \bar{\bm{s}}_N + \bm{z}_2 \, || \, z_4} \big) \\
        \> \> b_3 \samplen \big( \hw{\bm{z}_2} = \omega \big) \\
        \> \> \pcreturn b_1 \wedge b_2 \wedge b_3
      }
    }}
    \caption{ZK PoK with Helper for the $\SD$ problem over $\Ft$ \label{fig:pok3-h}}
  \end{center}
\end{figure}

\begin{theorem}[\textbf{Proof of knowledge with helper}] \label{thm:pok3}
  If the commitment is binding and 
  hiding, then the protocol depicted 
  in Figure~\ref{fig:pok3-h} is a proof of knowledge with helper 
  for the $\GSD$ problem with challenge space $\mathcal{C}$ such that $|\mathcal{C}| = N$, with computational soundness error $1/N$ and
  honest-verifier computational zero-knowledge.
\end{theorem}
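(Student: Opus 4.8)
The plan is to follow the same three-part template as the proof of Theorem~\ref{thm:pok2}, adapting each argument to the \emph{composed} permutation $\pi = \pi_N \circ \cdots \circ \pi_1$ and to the iterated masking chain $\bm{s}_i = \pi_i[\bm{s}_{i-1}] + \bm{v}_i$ that replaces the $N$ independent permuted instances of PoK~2. The central new ingredient in every step is the unrolling of this recurrence: starting from $\bm{s}_0 = (\bm{u} + \bm{x})\bm{G}$ one obtains $\bm{s}_N = \pi[(\bm{u}+\bm{x})\bm{G}] + \bm{v}$, where $\bm{v} = \bm{v}_N + \sum_{i \in \intoneto{N-1}} \pi_N \circ \cdots \circ \pi_{i+1}[\bm{v}_i]$ is exactly the mask the helper folds into each $\com_i$. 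I expect this bookkeeping of how masks propagate through the composition to be the main technical obstacle, since it is what distinguishes PoK~3 from PoK~2; once it is handled, the soundness and zero-knowledge arguments are structurally identical (so soundness is again computational, relying on binding, and zero-knowledge computational, relying on hiding and on the pseudorandomness of the seed-derived masks).

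For \textbf{correctness}, I would first show $\bar{\bm{z}}_1 = \bm{u} + \bm{x}$ exactly as in Theorem~\ref{thm:pok2}, since $\bar{\bm{z}}_1 = \bm{z}_1 + \sum_{i \neq \alpha} \bm{u}_i = (\bm{u}_\alpha + \bm{x}) + \sum_{i \neq \alpha}\bm{u}_i$. The subtle point is the verifier's reconstruction of the chain: from $z_4$ it recovers every $\pi_i, \bm{v}_i$ with $i \neq \alpha$, recomputes $\bar{\bm{s}}_i = \bar{\pi}_i[\bar{\bm{s}}_{i-1}] + \bar{\bm{v}}_i$ for $i \neq \alpha$, and uses the received $\bm{z}_3 = \bm{s}_\alpha$ to bridge position $\alpha$ so that the recurrence can continue past it. I would check that this yields $\bar{\bm{s}}_i = \bm{s}_i$ for all $i$, hence $\bar{\bm{s}} = \bm{s}$ and the first commitment opens. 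For the second commitment I would unroll the chain to get $\bar{\bm{s}}_N = \pi[(\bm{u}+\bm{x})\bm{G}] + \bm{v}$ and then, using that coordinate permutations are $\Ft$-linear together with $\bm{x}\bm{G} + \bm{e} = \bm{y}$, compute $\bar{\bm{s}}_N + \bm{z}_2 = \pi[(\bm{u}+\bm{x})\bm{G}] + \bm{v} + \pi[\bm{e}] = \pi[\bm{y} + \bm{u}\bm{G}] + \bm{v}$, which is precisely the value committed in $\com_\alpha$.

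For \textbf{special soundness}, given two accepting transcripts with $\alpha \neq \alpha'$, the values $z_4$ and $z_4'$ jointly reveal every seed $\theta_i$, so $\Ext$ reconstructs all $\pi_i, \bm{u}_i, \bm{v}_i$ and thus $\pi$ and $\bm{u}$, and outputs $(\bm{z}_1 - \bm{u}_\alpha, \pi^{-1}[\bm{z}_2])$. Binding of the $\com_i$ (each of which contains $\theta_{i^*}$) guarantees these recovered seeds agree with those fixed in $\aux$, and binding of $\com$ forces the common committed value $\bm{s}$ to be consistent across both transcripts. Here the cut-and-choose is essential: the first transcript certifies the recurrence $\bm{s}_i = \pi_i[\bm{s}_{i-1}] + \bm{v}_i$ at every $i \neq \alpha$ and the second at every $i \neq \alpha'$, so together they certify it at all $N$ positions, yielding $\bm{s}_N = \pi[(\bm{u}+\bm{x})\bm{G}] + \bm{v}$. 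Combining this with the $\com_\alpha$-equation $\bm{s}_N + \bm{z}_2 = \pi[\bm{y}+\bm{u}\bm{G}] + \bm{v}$ and cancelling $\bm{v}$, I obtain $\pi^{-1}[\bm{z}_2] = \bm{y} - \bm{x}\bm{G}$, so the pair $(\bm{x}, \pi^{-1}[\bm{z}_2])$ satisfies $\bm{x}\bm{G} + \pi^{-1}[\bm{z}_2] = \bm{y}$; the check $\hw{\bm{z}_2} = \omega$ together with invariance of weight under $\pi^{-1}$ gives the weight condition, so the output is a valid $\GSD$ solution.

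For \textbf{special honest-verifier zero-knowledge}, I would build $\Sim$ as in Theorem~\ref{thm:pok2}: sample $\tilde{\bm{x}} \sampler \Ftk$ and $\tilde{\bm{e}} \sampler \swset{\omega}{\Ftn}$, set $\tilde{\bm{y}} = \tilde{\bm{x}}\bm{G} + \tilde{\bm{e}}$, and run the chain honestly on $\tilde{\bm{x}}$ except at position $\alpha$, where a correction must be injected so that the verifier's recomputed $\bar{\bm{s}}_N$ still matches the real $\com_\alpha$ built from $\bm{y}$. Because a correction injected at position $\alpha$ is subsequently pushed through $\pi_{\alpha+1}, \ldots, \pi_N$ by the verifier, unrolling the chain shows that $\Sim$ should set $\tilde{\bm{s}}_\alpha$ to its honest chain value plus the offset $\pi_\alpha \circ \cdots \circ \pi_1[\bm{y} - \tilde{\bm{y}}]$, the compositional analogue of the single-permutation correction $\pi_\alpha[\bm{y} - \tilde{\bm{y}}]$ used for PoK~2. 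Indistinguishability then follows coordinate by coordinate exactly as before: $\tilde{\com}$ from hiding; $\tilde{\bm{z}}_1$ from the uniform unknown mask $\bm{u}_\alpha$; $\tilde{\bm{z}}_2 = \pi[\tilde{\bm{e}}]$ because $\pi$ still contains the uniform unknown factor $\pi_\alpha$, so $\pi[\tilde{\bm{e}}]$ and $\pi[\bm{e}]$ are both uniform over weight-$\omega$ vectors; $\tilde{\bm{z}}_3$ from the uniform unknown mask $\bm{v}_\alpha$; and $z_4$ is identical in both cases. Since $\Sim$ and $\Ext$ both run in polynomial time, establishing these three properties completes the proof.
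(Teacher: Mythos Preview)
Your proposal is correct and follows essentially the same approach as the paper, which largely defers to the proof of Theorem~\ref{thm:pok2} and only spells out the adapted simulator. In particular, your extractor coincides with the paper's, and your simulator correction $\pi_\alpha \circ \cdots \circ \pi_1[\bm{y}-\tilde{\bm{y}}]$ is exactly the paper's $\pi_{\alpha+1}^{-1} \circ \cdots \circ \pi_N^{-1} \circ \pi[\bm{y}-\tilde{\bm{y}}]$ rewritten via $\pi = \pi_N \circ \cdots \circ \pi_1$; if anything, your phrasing ``honest chain value plus offset'' is slightly more careful than the paper's step~4, which writes $\pi_\alpha[(\bm{u}+\tilde{\bm{x}})\bm{G}]$ where $\pi_\alpha[\tilde{\bm{s}}_{\alpha-1}]$ is what is actually needed for $\alpha>1$.
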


\begin{proof} One need to prove the correctness, special soundness and special honest-verifier zero-knowledge properties to complete the proof.

\vspace{\baselineskip}
\noindent \textbf{Correctness.} The proof for correctness follows
the same arguments as for the proof of correctness of Theorem~\ref{thm:pok2},
with the only difference that the $\bm{s}_i$ values are computed
by composing the permutations $\pi_i$ in a nested manner.


\vspace{\baselineskip}
\noindent \textbf{Special soundness.} Given an adversary $\adv$ that outputs 
two valid transcripts $(\bm{G}, \bm{y}, \aux, \com, \alpha, \rsp)$ and $(\bm{G}, \bm{y}, \aux, \com, \allowbreak \alpha', \rsp')$ with $\alpha \neq \alpha'$ and where $\aux = \setup(\theta)$ for some random seed $\theta$, one can build a knowledge extractor $\Ext$ that returns a solution of the $\GSD$ instance defined by $(\bm{G}, \bm{y})$ with high probability as follows:

\vspace{0.5\baselineskip} 
\pseudocode{%
  \tx{1. Compute } (\pi_i, \bm{u}_i)_{i \in \intoneto{N}} \tx{ from } z_4 \tx{ and } z_4' \\
  \tx{2. Compute } \pi = \pi_N \circ \cdots \circ \pi_1 \\
  \tx{3. Output } (\bm{z}_1 - \bm{u}_{\alpha}, \pi^{-1}[\bm{z}_2])
}
\vspace{0.5\baselineskip} 

\noindent The proof of soundness showing that the extractor $\Ext$ is efficient
and returns a valid solution for the $\GSD$ instance, follows
the same ideas as for the proof of soundness of Theorem~\ref{thm:pok2},
with the only difference of computing the permutation
$\pi$ as composition of the permutations $\pi_i$ after
extracting $\pi_\alpha$.

\vspace{\baselineskip}
\noindent \textbf{Special Honest-Verifier Zero-Knowledge.} 
The proof of zero-knowledge follows
the same arguments as for the proof of zero-knowledge of Theorem~\ref{thm:pok2}.
We provide the description of the PPT simulator $\Sim$
which generates the indistinguishable transcript using
only the public information for the completeness below.
Formally, one can build a $\ppt$ simulator $\Sim$ that given the public values $(\bm{G}, \bm{y})$, a random seed $\theta$ and a random challenge $\alpha$ outputs a transcript $(\bm{G}, \bm{y}, \aux, \com, \alpha, \rsp)$ such that $\aux = \setup(\theta)$ that is computationally indistinguishable from the probability distribution of transcripts of honest executions of the protocol:

\vspace{0.5\baselineskip} 
\pseudocode{%
  \tx{1. Compute } (\pi_i, \bm{u}_i, \bm{v}_i, \theta_{i^*})_{i \in \intoneto{N}} \tx{ and } \bm{u}, \pi \tx{ from } \theta \\
  \tx{2. Compute } \bm{\tilde{x}} \sampler \Ftk, ~ \bm{\tilde{e}} \sampler \swset{\omega}{\Ftn} \tx{ and } \bm{\tilde{y}} = \bm{\tilde{x}} \bm{G} + \bm{\tilde{e}} \\
  \tx{3. Compute } \bm{\tilde{s}}_0 = (\bm{u} + \bm{\tilde{x}})\bm{G} \text{ and } \bm{\tilde{s}}_i = \pi_i[\bm{\tilde{s}}_{i - 1}] + \bm{v}_i \tx{ for all } i \in [1, \alpha - 1] \\
  \tx{4. Compute } \bm{\tilde{s}}_{\alpha} = \pi_{\alpha}[(\bm{u} + \bm{\tilde{x}}) \bm{G}] + \bm{v}_{\alpha} + \pi^{-1}_{\alpha + 1} \circ \cdots \circ \pi^{-1}_{N} \circ \pi[\bm{y} - \bm{\tilde{y}}] \\
  \tx{5. Compute } \bm{\tilde{s}}_i = \pi_i[\bm{\tilde{s}}_{i - 1}] + \bm{v}_i \tx{ for all } i \in [\alpha + 1, N] \\
  \tx{6. Compute } r \sampler \bit^{\lambda}, ~ \bm{\tilde{s}} = (\bm{\tilde{s}}_i)_{i \in \intoneto{N}} \tx{ and } \tilde{\com} = \commit{r, \, (\bm{u} + \bm{\tilde{x}}) \, || \, \bm{\tilde{s}}} \\
  \tx{7. Compute } \bm{\tilde{z}}_1 = \bm{u}_{\alpha} + \bm{\tilde{x}}, ~ \bm{\tilde{z}}_2 = \pi[\bm{\tilde{e}}], ~ \bm{\tilde{z}}_3 = \bm{\tilde{s}}_{\alpha}, ~ z_4 = \theta_{\alpha^*} \\ 
  \tx{8. Compute  } \tilde{\rsp} = (r_{\alpha}, r, \bm{\tilde{z}}_1, \bm{\tilde{z}}_2, \bm{\tilde{z}}_3, z_4) \tx{ and output } (\bm{G}, \bm{y}, \aux, \tilde{\com}, \alpha, \tilde{\rsp})\\[-0.5\baselineskip]
}
\vspace{0.5\baselineskip} 

\end{proof}

\section{PoK without Trusted Helper} \label{sec:no-helper}

PoK with helper can be transformed into either 3-round HVZK PoK without helper or 5-round HVZK PoK without helper using the cut-and-choose paradigm as explained in \cite{KKW18, Beullens20}.
When the 5-round transformation is used, one must take into account the attack from \cite{KZ20} that specifically exploits the fact that the proof has a 5-round structure.
Choosing to use the 3-round or the 5-round transformation leads to different communication costs depending on the underlying proof and therefore must be decided on a case by case basis.
Hereafter, we present both transformations and discuss which ones to consider for the proof of knowledge introduced in Sections \ref{sec:pok1} and \ref{sec:pok23} respectively.

The main idea is to let the prover run the setup phase multiple times
with many independent seeds $\setup{(\theta^{(k)})}$ for $k \in \intoneto{M}$
and then share the auxiliary information $\aux^{(k)}$ for all the instances
with the verifier. The verifier then picks an arbitrary instance $\kappa$,
and the prover sends all the seeds $\theta^{(k)}$ for $k \neq \kappa$
to the verifier. The verifier can then verify that the received
auxiliary information $\aux^{(k)}$ has been honestly computed
by running the setup algorithm itself, if this check does not pass then
the verifier rejects. Otherwise, the prover and the verifier proceed
with the protocol for PoK using the seed $\theta^{(\kappa)}$.
If the soundness error of the PoK with the helper is $\frac{1}{N}$,
and the protocol without helper runs the setup $M$ times with independent
seeds, then the soundness error of the PoK without helper is
$\max(\frac{1}{M}, \frac{1}{N})$.

\begin{theorem}[\textbf{3-round Proof of Knowledge} \cite{Beullens20}]
  If the commitment used is binding and hiding, then the protocol depicted in Figure~\ref{no-helper-fig-3r} is a 3-round honest-verifier zero-knowledge proof of knowledge with challenge space $\mathcal{C}_1 \times \mathcal{C}_2$ such that $|\mathcal{C}_1| = M$ and $|\mathcal{C}_2| = N$ and soundness error equal to $\max(\frac{1}{M}, \frac{1}{N})$.
\end{theorem}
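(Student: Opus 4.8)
The plan is to verify the three standard properties—completeness, special soundness, and honest-verifier zero-knowledge—for the cut-and-choose compilation described in the surrounding text, reusing the already-established properties of the underlying PoK with helper. I would first set up notation: the prover runs $\setup(\theta^{(k)})$ for $k \in \intoneto{M}$, commits to or sends all auxiliary values $\aux^{(k)}$, the verifier selects an index $\kappa \sampler \mathcal{C}_1$ together with a challenge $\alpha \sampler \mathcal{C}_2$, and the prover reveals every seed $\theta^{(k)}$ for $k \neq \kappa$ while responding to the inner protocol on instance $\kappa$ with challenge $\alpha$. Since the challenge is the pair $(\kappa, \alpha) \in \mathcal{C}_1 \times \mathcal{C}_2$, the challenge space has size $MN$, which matches the statement.

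\textbf{Completeness.} This is essentially immediate. When all parties are honest, each $\aux^{(k)}$ is correctly computed, so the verifier's re-execution of $\setup$ on the revealed seeds passes, and the inner transcript on instance $\kappa$ is accepted by completeness of the helper protocol (the theorems for PoK~1 and PoK~2/3 above). The only subtlety is that with computational soundness we carry over the negligible correctness error, if any.

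\textbf{Special soundness / knowledge extraction.} Here the argument has two layers, and this is the step I expect to be the main obstacle since one must combine the cut-and-choose extraction with the inner extractor. Suppose we are given several accepting transcripts sharing the same first message but differing in the challenge $(\kappa, \alpha)$. I would argue that to cheat, a malicious prover must have at least one instance $\kappa^\star$ whose $\aux^{(\kappa^\star)}$ was \emph{not} honestly generated (otherwise the underlying relation is unsatisfiable and the inner protocol rejects); but the verifier's seed-check catches any index other than $\kappa^\star$, forcing the adversary to have committed to the cheating instance at index $\kappa^\star$ with probability at most $1/M$ of escaping detection. Conditioned on the selected index being a genuinely honest instance, we then obtain two transcripts on the \emph{same} instance with $\alpha \neq \alpha'$, and we invoke the inner extractor (from the theorems for the helper protocols) to recover the witness $w$ with $(x,w)\in R$. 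Composing the two bounds gives a cheating probability of $\max(\tfrac{1}{M}, \tfrac{1}{N})$ as claimed; the extractor runs in polynomial time because both the seed recomputation and the inner extractor do.

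\textbf{Honest-verifier zero-knowledge.} I would build a simulator that, on a uniformly random challenge $(\kappa, \alpha)$, samples fresh seeds $\theta^{(k)}$ for all $k \neq \kappa$ and computes the corresponding $\aux^{(k)}$ honestly, then invokes the inner simulator $\Sim$ (guaranteed by the helper-protocol theorems) on instance $\kappa$ with challenge $\alpha$ to produce $\aux^{(\kappa)}$, the commitment, and the response. Since the revealed seeds are genuinely fresh and the non-revealed instance is handled by the inner simulator—whose output is indistinguishable from a real transcript conditioned on $\aux$ and $\alpha$—the whole transcript is indistinguishable from an honest one. The hiding property of the commitment ensures nothing leaks from $\aux^{(\kappa)}$ beyond what the inner simulator already guarantees. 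This yields the claimed HVZK property, completing the proof.
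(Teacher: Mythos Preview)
The paper does not actually prove this theorem: its entire proof reads ``This is a direct application of Theorem~3 from \cite{Beullens20}.'' Your proposal therefore goes well beyond the paper by sketching the argument that Beullens gives, and your overall structure---completeness by inspection, soundness via the two cheating strategies (corrupt one $\aux^{(k^\star)}$ and hope $\kappa=k^\star$, or keep all setups honest and attack the inner protocol), and HVZK by simulating the $M-1$ opened instances honestly while calling the inner simulator on instance $\kappa$---is exactly the standard route and is correct in outline.

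One imprecision is worth flagging in your soundness paragraph. You write that if no $\aux^{(k)}$ is dishonestly generated then ``the underlying relation is unsatisfiable and the inner protocol rejects.'' That is not the right justification: the relation may well be satisfiable, and the inner protocol does \emph{not} automatically reject---it has soundness error $1/N$. The correct statement is that if every $\aux^{(k)}$ lies in the image of $\setup$, then the special-soundness extractor of the helper protocol applies to instance $\kappa$, so a prover without a witness succeeds on the inner challenge with probability at most $1/N$. Conversely, if some $\aux^{(k^\star)}$ is outside the image of $\setup$ (or inconsistent with the revealed seed), the verifier's seed check rejects whenever $\kappa\neq k^\star$, which bounds that strategy by $1/M$. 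Combining the two cases gives $\max(1/M,1/N)$. Your extraction sentence (``we then obtain two transcripts on the same instance with $\alpha\neq\alpha'$'') is also a bit elliptical: strictly speaking you need to argue that among enough accepting transcripts sharing the first message, either two share the same $\kappa$ with distinct $\alpha$'s (invoke the inner extractor), or two have distinct $\kappa$'s (so each reveals the other's seed, certifying honesty of the relevant $\aux$ and reducing to the first case or to a binding break). These are routine refinements, not a gap in the approach.
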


\begin{proof}
  This is a direct application of Theorem 3 from \cite{Beullens20} that permits to build a 3-round PoK without helper from a PoK with helper.
\end{proof}

\begin{theorem}[\textbf{5-round Proof of Knowledge}]
  If the commitment used is binding and hiding, then the protocol depicted in Figure~\ref{no-helper-fig-5r} is a 5-round honest-verifier zero-knowledge proof of knowledge with challenge space $\mathcal{C}_1 \times \mathcal{C}_2$ such that $|\mathcal{C}_1| = M$ and $|\mathcal{C}_2| = N$ and soundness error equal to $\max(\frac{1}{M}, \frac{1}{N})$.
\end{theorem}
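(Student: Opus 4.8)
The plan is to establish the three defining properties of a proof of knowledge for the protocol of Figure~\ref{no-helper-fig-5r} directly, reducing each one to the matching property of the underlying PoK with helper (Section~\ref{sec:pok1} and Theorems~\ref{thm:pok2}, \ref{thm:pok3}) together with the binding and hiding of the commitment, exactly as in the 3-round case but with the two challenges separated into distinct rounds. The key fact I would use throughout is that $\setup$ is a deterministic, \emph{witness-free} function of its seed, so every $\aux^{(k)}$ can be recomputed from $\theta^{(k)}$ and none of them depends on the secret. Completeness is then immediate: for each revealed index $k \neq \kappa$ the verifier recomputes $\aux^{(k)} = \setup(\theta^{(k)})$ and the associated check passes by determinism of $\setup$, while for the surviving instance $\kappa$ completeness of the PoK with helper guarantees that the final verification on $(\aux^{(\kappa)}, \com^{(\kappa)}, \alpha, \rsp^{(\kappa)})$ accepts.

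The hard part will be the soundness analysis. I would first bound the success probability of a witness-less prover by $\max(\frac{1}{M}, \frac{1}{N})$. Fix the first message $(\aux^{(k)})_{k \in \intoneto{M}}$ and call an instance \emph{well-formed} when $\aux^{(k)}$ lies in the image of $\setup$; the number of malformed instances is then determined by this first message, so exactly one of the following cases occurs. If at least two instances are malformed, every choice of $\kappa$ leaves a malformed instance among the revealed indices $k \neq \kappa$, the verifier's recomputation of $\setup(\theta^{(k)})$ fails, and the prover is rejected with probability $1$. If exactly one instance $j$ is malformed, the prover survives the cut only when $\kappa = j$, i.e.\ with probability $1/M$ over the first challenge. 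If all instances are well-formed, then $\aux^{(\kappa)} = \setup(\theta)$ for some seed and the PoK with helper applies: by its special soundness a prover ignorant of a witness answers correctly for at most one $\alpha \in \mathcal{C}_2$, hence succeeds with probability at most $1/N$ over the second challenge. In each case the success probability is at most $\max(\frac{1}{M}, \frac{1}{N})$, which is the claimed soundness error. For extraction, given a prover succeeding noticeably above this bound I would fork on the second challenge: an averaging argument locates a well-formed surviving instance $\kappa$ that admits accepting responses for two distinct challenges $\alpha \neq \alpha'$ with the first three messages held fixed, and feeding the two helper transcripts $(x, \aux^{(\kappa)}, \com^{(\kappa)}, \alpha, \rsp^{(\kappa)})$ and $(x, \aux^{(\kappa)}, \com^{(\kappa)}, \alpha', \tilde{\rsp}^{(\kappa)})$ to the helper extractor $\Ext$ returns a witness, with binding ensuring that the values appearing in both transcripts are the committed ones.

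Finally, for honest-verifier zero-knowledge I would build a simulator that, given the challenge pair $(\kappa, \alpha)$, samples all seeds $\theta^{(k)}$ honestly and sets $\aux^{(k)} = \setup(\theta^{(k)})$ for every $k \in \intoneto{M}$; since $\setup$ is witness-free, the first message and the revealed seeds $(\theta^{(k)})_{k \neq \kappa}$ already carry their exact honest distribution. For the surviving instance the simulator invokes the helper zero-knowledge simulator $\Sim$ on $(x, \theta^{(\kappa)}, \alpha)$ to obtain a pair $(\com^{(\kappa)}, \rsp^{(\kappa)})$ consistent with $\aux^{(\kappa)}$; the special honest-verifier zero-knowledge of the PoK with helper, which already absorbs the hiding of the commitments contained in $\aux^{(\kappa)}$, makes this pair computationally indistinguishable from an honest one conditioned on the same $\aux^{(\kappa)}$ and $\alpha$. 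Composing these two observations shows that the simulated transcript is indistinguishable from a real one, which completes the proof. I expect no difficulty here beyond invoking the helper simulator, so the genuine obstacle remains the case analysis in the soundness argument and, in particular, arguing well-formedness of the surviving instance via the cut-and-choose.
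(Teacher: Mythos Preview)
Your proposal is correct and follows the expected route: the paper's own proof is a single sentence stating that one straightforwardly adapts the proof of Theorem~3 from \cite{Beullens20} to the 5-round setting, and what you have written is precisely that adaptation spelled out in full (cut-and-choose case analysis on the number of malformed setups for soundness, invocation of the helper simulator on the surviving instance for zero-knowledge). There is nothing to add; your treatment is simply more explicit than the paper's.
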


\begin{proof}
  One can straightforwardly adapt the proof of Theorem 3 from \cite{Beullens20} to the 5-round setting.
\end{proof}

\noindent \textbf{Removing the helper from PoK~1.} For our first PoK (see Section \ref{sec:pok1}), we will consider both the 3-round and 5-round transformations.
We defer the reader to Appendices \ref{app:pok1-3r-nopt} and \ref{app:pok1-5r-nopt} for the description of the 3-round PoK~1 and 5-round PoK~1 protocols which are obtained after respectively applying the 3-round and 5-round transformations to our fist proof of knowledge.
The 3-round PoK~1 protocol is a slightly more conservative choice while the 5-round PoK~1 lead to a slightly smaller signature (see Section \ref{sec:parameters}).

\begin{figure}[!ht]
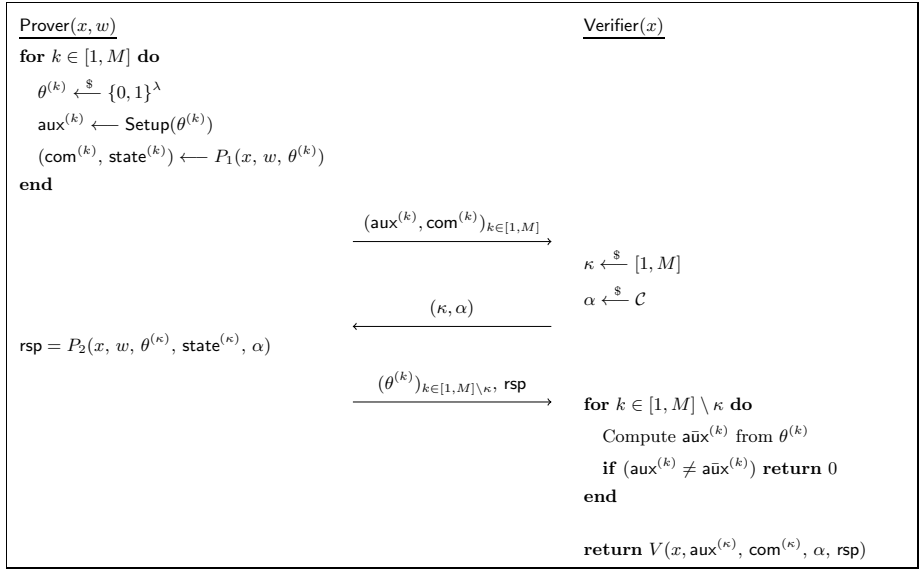
 
  \begin{center}
    \resizebox{1\textwidth}{!}{\fbox{
      \pseudocode{%
        \hspace{160pt} \> \> \hspace{160pt} \\[-0.75\baselineskip] 
        \underline{\proverf(x, w)} \> \> \underline{\verifierf(x)} \\
        \pcfor k \in \intoneto{M} \pcdo \\
        \pcind \theta^{(k)} \sampler \bit^{\lambda} \\ 
        \pcind \aux^{(k)} \samplen \setup(\theta^{(k)}) \\ 
        \pcind (\com^{(k)}, \, \state^{(k)}) \samplen P_1(x, \, w, \, \theta^{(k)}) \\
        \pcend \\
        \> \sendmessageright*{(\aux^{(k)}, \com^{(k)})_{k \in \intoneto{M}}} \\[-0.5\baselineskip]
        \> \> \kappa \sampler \intoneto{M} \\ 
        \> \> \alpha \sampler \mathcal{C} \\[-\baselineskip]
        \> \sendmessageleft*{(\kappa, \alpha)} \\[-0.5\baselineskip]
        \rsp = P_2(x, \, w, \, \theta^{(\kappa)}, \, \state^{(\kappa)}, \, \alpha) \\
        \> \sendmessageright*{(\theta^{(k)})_{k \in \intoneto{M} \setminus \kappa}, \, \rsp} \\[-\baselineskip]
        \> \> \pcfor k \in \intoneto{M} \setminus \kappa \pcdo \\
        \> \> \pcind \tx{Compute } \bar{\aux}^{(k)} \tx{ from } \theta^{(k)} \\
        \> \> \pcind \pcif (\aux^{(k)} \neq \bar{\aux}^{(k)}) ~ \pcreturn 0 \\
        \> \> \pcend \\[0.75\baselineskip]
        \> \> \pcreturn V(x, \aux^{(\kappa)}, \, \com^{(\kappa)}, \, \alpha, \, \rsp)
      }
    }}
    \caption{3-round HVZK PoK from HVZK PoK with Trusted Helper \cite{Beullens20} \label{no-helper-fig-3r}}
  \end{center}
\end{figure}

\begin{figure}[H]
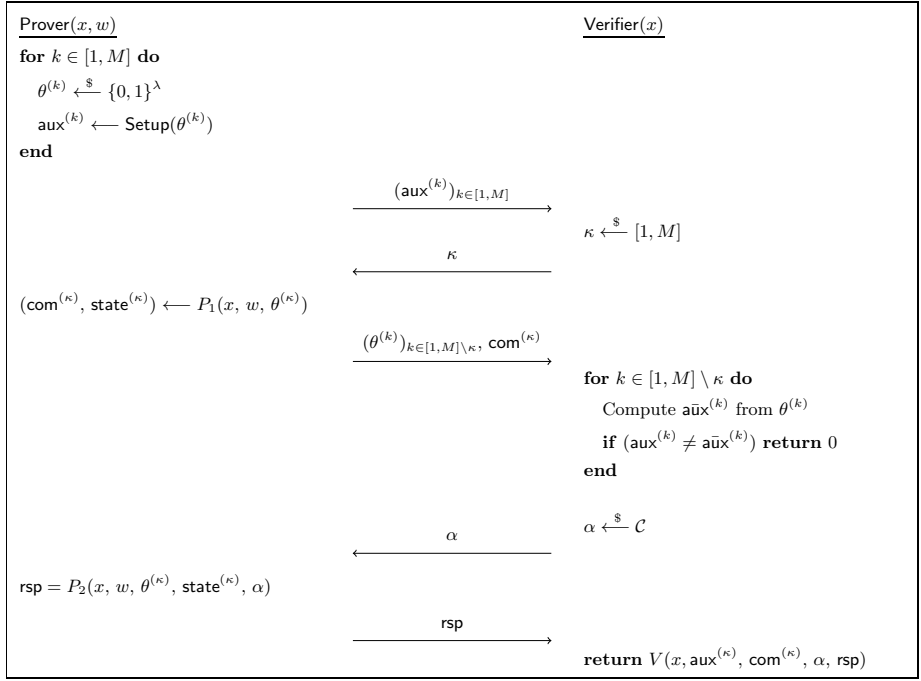
 
  \begin{center}
    \resizebox{1\textwidth}{!}{\fbox{
      \pseudocode{%
        \hspace{160pt} \> \> \hspace{160pt} \\[-0.75\baselineskip] 
        \underline{\proverf(x, w)} \> \> \underline{\verifierf(x)} \\
        \pcfor k \in \intoneto{M} \pcdo \\
        \pcind \theta^{(k)} \sampler \bit^{\lambda} \\ 
        \pcind \aux^{(k)} \samplen \setup(\theta^{(k)}) \\ 
        \pcend \\
        \> \sendmessageright*{(\aux^{(k)})_{k \in \intoneto{M}}} \\[-0.5\baselineskip]
        \> \> \kappa \sampler \intoneto{M} \\[-0.5\baselineskip]
        \> \sendmessageleft*{\kappa} \\
        (\com^{(\kappa)}, \, \state^{(\kappa)}) \samplen P_1(x, \, w, \, \theta^{(\kappa)}) \\
        \> \sendmessageright*{(\theta^{(k)})_{k \in \intoneto{M} \setminus \kappa}, \, \com^{(\kappa)}} \\[-0.5\baselineskip]
        \> \> \pcfor k \in \intoneto{M} \setminus \kappa \pcdo \\
        \> \> \pcind \tx{Compute } \bar{\aux}^{(k)} \tx{ from } \theta^{(k)} \\
        \> \> \pcind \pcif (\aux^{(k)} \neq \bar{\aux}^{(k)}) ~ \pcreturn 0 \\
        \> \> \pcend \\[0.75\baselineskip]
        \> \> \alpha \sampler \mathcal{C} \\[-\baselineskip]
        \> \sendmessageleft*{\alpha} \\
        \rsp = P_2(x, \, w, \, \theta^{(\kappa)}, \, \state^{(\kappa)}, \, \alpha) \\
        \> \sendmessageright*{\rsp} \\[-0.5\baselineskip]
        \> \> \pcreturn V(x, \aux^{(\kappa)}, \, \com^{(\kappa)}, \, \alpha, \, \rsp)
      }
    }}
    \caption{5-round HVZK PoK from HVZK PoK with Trusted Helper \label{no-helper-fig-5r}}
  \end{center}
\end{figure}

\noindent \textbf{Removing the helper from PoK~2 and PoK 3.} For our PoK~2 and PoK~3 (see Section \ref{sec:pok2}), we will only consider the 3-round transformation as it leads to smaller signatures when taking account the attack from \cite{KZ20}.
We defer the reader to Appendices~\ref{app:pok2-3r-nopt} and \ref{app:pok3-3r-nopt} for the description of the protocols.

\section{Communication Cost and Optimizations} \label{sec:optimizations}

In this section, we present optimizations that permits to reduce the communication cost of the aforementioned proofs of knowledge.
Several optimizations are related to the use of the MPC-in-the-head paradigm and were first introduced in \cite{KKW18}. 
We also consider code-based related optimizations that were introduced in \cite{AGS11} and \cite{BGS21}.
Finally, we discuss a performance oriented optimization relying on the use of structured matrices.
The optimized versions of our 3-round PoK~1, 5-round PoK~1, 3-round PoK~2 and 3-round PoK~3 are described in Appendices~\ref{app:pok1-3r-opt}, \ref{app:pok1-5r-opt}, \ref{app:pok2-3r-opt} and \ref{app:pok3-3r-opt} respectively.

\vspace{\baselineskip}
\noindent \textbf{Protocol repetition} \cite{KKW18, Beullens20}\textbf{.}
The 3-round or 5-round PoK constructed by removing the helper (see Section \ref{sec:no-helper}) have a soundness error equal to $\max{(\frac{1}{M}, \frac{1}{N})}$.
In order to obtain a negligible soundness error with respect to security parameter $\lambda$, one can compute $\tau$ parallel executions of the protocol where $\tau = \frac{\lambda}{\log_2{(\min{(M, N)})}}$.
In that case, one needs to execute $\tau \cdot M$ setup steps (the $\helperf$ part in Figure~\ref{preliminaries-pok-fig1}) followed by $\tau$ executions of the protocol (the $\proverf$ part in Figure~\ref{preliminaries-pok-fig1}).
The key idea of the \emph{beating parallel repetition} optimization from \cite{KKW18} is to 
let the verifier choose $\tau$ out of $M$ (instead of only $1$ out of $M$) setups to execute, 
which means the setup
phase is repeated only $M$ times (instead of $\tau \cdot M$ times).
However, this comes at the cost of increasing the number
executions (higher $\tau$) in the protocol as explained below.
Suppose, a malicious prover computes $e \leq \tau$ setup steps incorrectly, it can only convince the verifier if the latter
chooses to execute all these $e$ setups (hence never verifying any of them) which happens with probability $\binom{M - e}{\tau - e} \cdot \binom{M}{\tau}^{-1}$.
Moreover, the malicious prover needs to be accepted for the remaining $\tau - e$ executions with honest setups which can happen with probability $\leq (\frac{1}{N})^{\tau - e}$.
Therefore, using this optimization, one needs to execute $M$ setup steps followed by $\tau$ executions of the protocol and the soundness error is given by $\max\limits_{0 \leq e \leq \tau} \binom{M - e}{\tau - e} \cdot \binom{M}{\tau}^{-1} \cdot N^{-(\tau - e)}$.

\begin{table}[H]
\begin{center}
{\renewcommand{\arraystretch}{1.3}
{\scriptsize
  \begin{tabular}{|l|c|c|c||c|c|c|}
      \cline{2-7}
      \multicolumn{1}{c|}{} & \multicolumn{3}{c||}{Parallel Repetition} & \multicolumn{3}{c|}{\cite{KKW18} Repetition} \\ \hline
      $N$          & 2   & 16  & 32  & 2   & 16  & 32  \\ \hline
      $M$          & 2   & 16  & 32  & 256 & 272 & 389 \\ \hline
      $\tau$       & 128 & 32  & 26  & 128 & 35  & 28  \\ \hline \hline
      \textbf{\# Setup}     & \textbf{256} & \textbf{512} & \textbf{832} & \textbf{256} & \textbf{272} & \textbf{389} \\ \hline
      \textbf{\# Execution} & \textbf{128} & \textbf{32}  & \textbf{26}  & \textbf{128} & \textbf{35}  & \textbf{28}  \\ \hline
  \end{tabular}
  \caption{Example of parameters trade-off for $\lambda = 128$. The optimization greatly reduce the Setup's number for only a small increase of the Execution's number.} 
}}
\end{center}
\end{table}

From here onward, the reader is advised to keep in mind that
during the protocol, the signer needs to
send information (seeds) corresponding to 
(i) $M - \tau$ instances
of the underlying PoK which are generated during the setup phase but
are only used to verify that the setup was run honestly 
and (ii) information associated with $\tau$
executions of the underlying PoK.

\vspace{\baselineskip}
\noindent \textbf{Seed compression} \cite{KKW18}\textbf{.} For all the aforementioned PoK, one has to send the seeds $(\theta^{(k)})_{k \in \intoneto{M} \setminus K}$ used to recompute the auxiliary information $(\aux^{(k)})_{k \in \intoneto{M} \setminus K}$ (with $K \sampler \{ K \subset \{1, \cdots, M \}, \, |K| = \tau \}$) associated to the setups that are not going to be executed.
As explained in \cite{KKW18}, one can use Merkle trees in order to reduce the cost of sending these seeds.
To this end, the prover samples a root seed and generates a binary tree of depth $\log(M)$ where each node is a seed derived from its parent node. 
Doing so, he can send to the verifier the nodes that allows to recompute all the leafs of the tree except the $\tau$ seeds that should not be revealed thus reducing the cost associated to the $(\theta^{(k)})_{k \in \intoneto{M} \setminus K}$ seeds from $(M - \tau) \cdot \lambda$ to $\tau \log_2(\frac{M}{\tau}) \cdot \lambda$ where $\lambda$ is the security parameter.
A crucial observation is that this optimization works well only when $\tau$ is small with respect to $M$. 
In particular for $\tau = M/2$, using a Merkle tree provides no benefit.
Hence, in the case of PoK~1 where $\tau = M/2$, we employ a variant of this optimization by considering $M/2$ binary trees of depth $1$ instead of single binary tree of depth $\log(M)$ which reduces the expected cost of sending the seeds to $3/4 \cdot (M - \tau) \cdot \lambda$ from $(M - \tau) \cdot \lambda$.

\vspace{\baselineskip}
\noindent \textbf{Commitment compression} \cite{KKW18, AGS11}\textbf{.} When considering 3-round PoK, the first prover's message contains the commitments $(\aux^{(k)}, \allowbreak \com^{(k)})_{k \in \intoneto{M}}$ where $\aux^{(k)} = (\com^{(k)}_i)_{i \in \intoneto{N}}$.
In order to reduce the cost associated to these commitments, one can instead send a unique commitment $h = \commit{r, \, (\aux^{(k)} \, || \, \allowbreak \com^{(k)})_{k \in \intoneto{M}}}$ as suggested in \cite{AGS11}.
Doing so, the prover has to give the verifier all the commitments that the latter cannot recompute himself in order to allow him to check the commitment $h$.
The situation is similar for 5-round PoK with $h = \commit{r, \, (\aux^{(k)})_{k \in \intoneto{M}}}$ and $h' = \commit{r', \, (\com^{(\kappa)})_{\kappa \in K}}$.
In this case, the verifier can recompute $h'$ by himself hence only the commitments contained in $h$ have to be considered.
To reduce the cost associated to these commitments, one can once again leverage Merkle trees as explained in \cite{KKW18}.
Indeed, the prover will generate a Merkle tree of his commitments (from bottom to top contrarily to the previous case) and send the root to the verifier as well as all the nodes of the tree that permits to recompute the root from the commitments that the verifier can obtain by himself.

In our PoK~1~(3-round), $(\aux^{(k)})_{k \in \intoneto{M}}$ contains $2M$ commitments while $(\com^{(k)})_{k \in \intoneto{M}}$ contains $M$ commitments.
For the $M - \tau$ instances that are not executed, these commitments can be recomputed by the verifier from the $(\theta^{(k)})_{k \in \intoneto{M} \setminus K}$ seeds.
For the $\tau$ instances that are executed, one commitment from $(\aux^{(\kappa)})_{\kappa \in K}$ can be recomputed by the verifier while the other one can be given in the prover's response $\rsp$.
In addition, all the commitments from $(\com^{(k)})_{k \in \intoneto{M} \setminus K}$ need to be given to the verifier.
Given that $\tau = M/2$, this reduces the cost of sending the $3M$ commitments of our PoK~1~(3-round) to $(1 + 3/4 \cdot (M - \tau) + 7/8 \cdot \tau) \cdot |\com|$.
For PoK~1~(5-round), the cost is reduced from $3M$ commitments to $(2 + 7/8 \cdot \tau) \cdot |\com|$.

For PoK~2 and PoK~3, sending the $(M - \tau)$ commitments $(\com^{(k)})_{k \in \intoneto{M} \setminus K}$ can be done using Merkle trees hence cost $\tau \log_2(\frac{M}{\tau}) \cdot |\com|$.
For the the $M - \tau$ instances that are not executed, the $N(M - \tau)$ commitments from $(\aux^{(k)})_{k \in \intoneto{M} \setminus K}$ can be recomputed by the verifier from the $(\theta^{(k)})_{k \in \intoneto{M} \setminus K}$ seeds.
Interestingly, for the $\tau$ instances that are executed, the situation differs between PoK~2 and PoK~3 which explains why the optimized version of PoK~3 outperforms the optimized version of PoK~2 although both non optimized versions are equivalent.
In the case of PoK~2, only one commitment from $(\aux^{(\kappa)})_{\kappa \in K}$ can be recomputed by the verifier while the other $(N - 1)$ have to be given to him as he can not recomputed them due to the presence of the  value $\bm{u}$.
Hence, sending these commitments using Merkle trees cost $\log(N) \cdot |\com|$.
In contrast, in the \cite{FJR21} setting, $(N - 1)$ commitments from $(\aux^{(\kappa)})_{\kappa \in K}$ can be recomputed by the verifier hence the cost associated to sending the commitments from $(\aux^{(\kappa)})_{\kappa \in K}$ is only $|\com|$.

\vspace{\baselineskip}
\noindent \textbf{Small weight vector compression.} \cite{AGS11}\textbf{.}
One can leverage the small weight of some vectors such as $\pi[\bm{x}]$ in PoK~1 or $\pi[\bm{e}]$ in PoK~2 and PoK~3 by using a compression algorithm before sending them.
Hence, the cost of sending small weight vectors is reduced from $n$ to $n/2$ approximately.

\vspace{\baselineskip}
\noindent \textbf{Additional vector compression} \cite{BGS21}\textbf{.}
This optimization is specific to PoK~1 in which the prover have to send a permutation of a random vector $\pi[\bm{u}]$.
Instead of doing this, one can sample a random value $\bm{v} \samples{\psi} \Ftn$ from some random seed $\psi$ and compute the value $\bm{u} = \pi^{-1}[\bm{v}]$.
When the prover has to send $\pi[\bm{u}]$, he can send $\bm{v}$ instead which can be substituted by the seed $\psi$.
Doing so, one reduce the cost of sending such vectors from $n$ to $\lambda$.

\vspace{\baselineskip}
\noindent \textbf{Improved performances from structured matrices.} In all aforementioned PoK, a matrix vector multiplication must be computed during each setup. 
In order to improve the performance of these protocols, one may choose to use structured matrices featuring an efficient matrix vector multiplication.
For example, one can use quasi-cyclic matrices as their matrix vector multiplication can be performed efficiently by polynomial multiplication.
In this case, the security of the protocol relies on the quasi-cyclic variants $\QCSD$ or $\QCGSD$ of the syndrome decoding problem.

\newpage
\section{Signature Schemes} \label{sec:signatures}

In this section, we explain how to transform our interactive HVZK PoKs without helper as detailed in Section~\ref{sec:no-helper} into digital signatures using the strong Fiat-Shamir heuristic \cite{FS, sFS}.
We also discuss the security of the resulting signature schemes in both the random oracle model (ROM) and the quantum random oracle model (QROM).

\vspace{\baselineskip}
\noindent The keystone idea of the Fiat-Shamir heuristic \cite{FS} is to ``emulate" the random challenge sampling from the verifier by a call to a hash function modelled as a random oracle, thereby turning an interactive
protocol into non-interactive protocol with access to random oracle.
Figures~\ref{fig:signature-3r} and \ref{fig:signature-5r} explain how to apply the Fiat-Shamir heuristic in the context of PoK with trusted helper.
We next present the signatures obtained by applying 
the Fiat-Shamir transform
to HVZK PoK schemes from Sections~\ref{sec:pok1} and~\ref{sec:pok23}.
Starting with our PoK~1 with Helper (Section \ref{sec:pok1}, Figure~\ref{pok1:fig1}), PoK~2 with Helper (Section \ref{sec:pok2}, Figure~\ref{fig:pok2-h}) and PoK~3 with Helper (Section \ref{sec:pok3}, Figure~\ref{fig:pok3-h}), one can remove the helper using the constructions from Section~\ref{sec:no-helper}.
Doing so, one get our non-optimized 3-round PoK~1 (Appendix~\ref{app:pok1-3r-nopt}), 5-round PoK~1 (Appendix~\ref{app:pok1-5r-nopt}), 3-round PoK~2 (Appendix~\ref{app:pok2-3r-nopt}) and 3-round PoK~3 (Appendix~\ref{app:pok3-3r-nopt}).
Hereafter, we assume that these protocols provide a negligible soundness error which in practice implies to perform a parallel repetition of the PoK.
By applying the results from Section~\ref{sec:optimizations}, we obtain the optimized versions of our 3-round PoK~1 (Appendix~\ref{app:pok1-3r-opt}), 5-round PoK~1 (Appendix~\ref{app:pok1-5r-opt}), 3-round PoK~2 (Appendix~\ref{app:pok2-3r-opt}) and 3-round PoK~3 (Appendix~\ref{app:pok3-3r-opt}).
The optimized versions of our PoK include protocol repetitions hence achieve a negligible soundness error.
Then, using the Fiat-Shamir transformation presented in Figures~\ref{fig:signature-3r} and \ref{fig:signature-5r}, we construct four signatures denoted Sig~1~(3-round), Sig~1~(5-round), Sig~2~(3-round) and Sig~3~(3-round) 
that can be found in Appendices \ref{app:sig1-3r}, \ref{app:sig1-5r}, \ref{app:sig2} and \ref{app:sig3} respectively.
Similarly to the signatures constructed in \cite{Beullens20} and \cite{GPS21}, our security theorems apply to the non-optimized versions of our protocols while the optimized versions are the ones considered in practice.

\vspace{\baselineskip}
Signatures built from the Fiat-Shamir heuristic have been proven existentially unforgeable in the ROM whenever the underlying HVZK PoK achieves a negligible soundness error, see \cite{PS96,PS00, FSmin}.
These security guarantees can be extended to the QROM model following 
the work of~\cite{EC:Unruh12,FOCS:Zhandry12,EC:Unruh16,AC:Unruh17,EC:KilLyuSch18,C:LiuZha19,C:DFMS19,C:DonFehMaj20}.
Similarly to the signatures constructed in \cite{Beullens20} and \cite{GPS21}, our security theorems apply to the non-optimized versions of our protocols while the optimized versions are the ones considered in practice.
On a high level, this line of research has shown that the (multi-round)
Fiat-Shamir heuristic preserves the soundness and the other proof of knowledge
properties in the QROM setting and that transforming such protocols
into signature schemes provides (strong) existential unforgeability guarantees.
Before stating the security properties of the schemes proposed in this work, we define the ``computationally unique responses'' property which is required for the proof.


\begin{figure}[H] 
  \begin{center}
    \resizebox{1\textwidth}{!}{\fbox{
      \pseudocode{%
        \hspace{160pt} \> \> \hspace{160pt} \\[-0.75\baselineskip] 
        \underline{\keygen(x, w)} \\
        \pk = x, ~ \sk = w \\
        \pcreturn (\pk, \sk) \\[0.75\baselineskip]
        \underline{\sign(\pk, \sk, m)} \\
        \pcfor k \in \intoneto{M} \pcdo \\
        \pcind \theta^{(k)} \sampler \bit^{\lambda} \\ 
        \pcind \aux^{(k)} \samplen \setup(\theta^{(k)}) \\ 
        \pcind (\com^{(k)}, \, \state^{(k)}) \samplen P_1(\pk, \, \sk, \, \theta^{(k)}) \\
        \pcend \\
        \com = (\aux^{(k)}, \com^{(k)})_{k \in \intoneto{M}} \\
        (\kappa, \, \alpha) \samplen \ro{m \, || \, \pk \, || \, \com} \\
        \rsp' = P_2(\pk, \, \sk, \, \theta^{(\kappa)}, \, \state^{(\kappa)}, \, \alpha) \\
        \rsp = ((\theta^{(k)})_{k \in \intoneto{M} \setminus \kappa}, \, \rsp') \\
        \pcreturn \sigma = (\com, \rsp) \\[0.75\baselineskip]
        \underline{\verif(\pk, \sigma, m)} \\
        (\kappa, \, \alpha) \samplen \ro{m \, || \, \pk \, || \, \com} \\
        \pcfor k \in \intoneto{M} \setminus \kappa \pcdo \\
        \pcind \tx{Compute } \bar{\aux}^{(k)} \tx{ from } \theta^{(k)} \\
        \pcind \pcif (\aux^{(k)} \neq \bar{\aux}^{(k)}) ~ \pcreturn 0 \\
        \pcend \\
        \pcreturn V(\pk, \, \aux^{(\kappa)}, \, \com^{(\kappa)}, \, \alpha, \, \rsp')
      }
    }}
    \caption{Signature from Fiat-Shamir Heuristic applied to Figure~\ref{no-helper-fig-3r}} \label{fig:signature-3r}
  \end{center}
\end{figure}

\begin{figure}[H] 
  \begin{center}
    \resizebox{1\textwidth}{!}{\fbox{
      \pseudocode{%
        \hspace{160pt} \> \> \hspace{160pt} \\[-0.75\baselineskip] 
        \underline{\keygen(x, w)} \\
        \pk = x, ~ \sk = w \\
        \pcreturn (\pk, \sk) \\[0.75\baselineskip]
        \underline{\sign(\pk, \sk, m)} \\
        \pcfor k \in \intoneto{M} \pcdo \\
        \pcind \theta^{(k)} \sampler \bit^{\lambda} \\ 
        \pcind \aux^{(k)} \samplen \setup(\theta^{(k)}) \\ 
        \pcend \\
        \com_1 = (\aux^{(k)})_{k \in \intoneto{M}} \\
        \kappa \samplen \ro{m \, || \, \pk \, || \, \com_1} \\
        (\com^{(\kappa)}, \, \state^{(\kappa)}) \samplen P_1(\pk, \, \sk, \, \theta^{(\kappa)}) \\
        \com_2 = \com^{(\kappa)} \\
        \alpha \samplen \ro{m \, || \, \pk \, || \, \com_1 \, || \, \com_2} \\
        \rsp' = P_2(\pk, \, \sk, \, \theta^{(\kappa)}, \, \state^{(\kappa)}, \, \alpha) \\
        \rsp = ((\theta^{(k)})_{k \in \intoneto{M} \setminus \kappa}, \, \rsp') \\
        \pcreturn \sigma = (\com_1, \, \com_2, \, \rsp) \\[0.75\baselineskip]
        \underline{\verif(\pk, \sigma, m)} \\
        \kappa \samplen \ro{m \, || \, \pk \, || \, \com_1} \\
        \alpha \samplen \ro{m \, || \, \pk \, || \, \com_1 \, || \, \com_2} \\
        \pcfor k \in \intoneto{M} \setminus \kappa \pcdo \\
        \pcind \tx{Compute } \bar{\aux}^{(k)} \tx{ from } \theta^{(k)} \\
        \pcind \pcif (\aux^{(k)} \neq \bar{\aux}^{(k)}) ~ \pcreturn 0 \\
        \pcend \\
        \pcreturn V(\pk, \, \aux^{(\kappa)}, \, \com^{(\kappa)}, \, \alpha, \, \rsp')
      }
    }}
    \caption{Signature from Fiat-Shamir Heuristic applied to Figure~\ref{no-helper-fig-5r}} \label{fig:signature-5r}
  \end{center}
\end{figure}

\newpage

\begin{definition} [Computationally Unique Responses] \label{def:comp-uni-res}
	A $(2n+1)$ round public-coin interactive PoK is said to have
	\emph{computationally unique responses} if given a partial transcript
	$(\com, \, \ch_1, \, \rsp_1, \, \ldots, \, \ch_i)$ it is computationally 
	infeasible to find two accepting  conversations which share the first
	$(2i )$ messages as above but differ in at least one position.
	That is, 
	$$\prb \left[ V(\trans_1) = \accept \bigwedge V(\trans_2) = \accept ~\big|~ (\trans_1, \trans_2) \leftarrow \adv \right]$$
	is negligible for computationally bounded (quantum) adversary $\adv$,
	where,
	$\trans_b = (\com, \, \ch_1, \, \rsp_1, \, \ldots, \, \ch_i, \, \rsp^b_i, \ch^b_{i+1}, \, \rsp^b_{i+1}, \, \ldots , \, \ch^b_{n}, \, \rsp^b_{n}, \,)$
	for $b \in \{0,1\}$ such that $\rsp^0_i \neq \rsp^1_i$.
\end{definition}

\begin{theorem} \label{thm:qrom}
If the non-optimized variants of Sig~1~(3-round), Sig~1~(5-round), Sig~2 and Sig~3 signature schemes are instantiated with a collapsing hash function
as commitment scheme, then the  signature schemes are strong
existential unforgeable under chosen message attack (sUF-CMA) in the QROM.

\end{theorem}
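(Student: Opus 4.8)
The plan is to follow the now-standard template for proving $\mathsf{sUF\text{-}CMA}$ security of Fiat--Shamir signatures in the QROM, specialised to the four schemes, and to reduce any forger either to an algorithm solving the underlying $\SD$/$\GSD$ instance (contradicting the hardness of Definition~\ref{def:inst-gen}) or to an adversary against the collapsing commitment. First I would collect the three ingredients already established: each underlying helper-free PoK ($3$- and $5$-round variants of PoK~1, PoK~2, PoK~3) is honest-verifier zero-knowledge with a special-soundness extractor (the theorems for PoK~1, Theorem~\ref{thm:pok2}, Theorem~\ref{thm:pok3}, transported through the helper-removal transformations of Section~\ref{sec:no-helper}); the relation $R = ((\bm{G},\bm{y}),(\bm{x},\bm{e}))$ is hard; and each scheme has computationally unique responses in the sense of Definition~\ref{def:comp-uni-res}. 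The proof then proceeds as a short sequence of games, exactly as for \cite{Beullens20} and \cite{GPS21}.

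The first game hop replaces the real signing oracle by the HVZK simulator $\Sim$: on a query $m$, sample the Fiat--Shamir challenge(s) first, run $\Sim$ to produce a transcript consistent with that challenge, and reprogram the random oracle at the point $m \,\|\, \pk \,\|\, \com$ to output it, so that signing no longer uses $\sk$. In the QROM this reprogramming is justified by the adaptive-reprogramming / one-way-to-hiding machinery, whose disturbance is negligible in the number of oracle queries; for Sig~1~(5-round) the same is done at both challenge points and the multi-round QROM analysis applies. The second hop extracts from the forgery: with signing simulated, a forger outputting a fresh valid $(m^{*},\sigma^{*})$ is a successful non-interactive prover on statement $\pk$, so invoking the QROM online-extractability results for Fiat--Shamir transforms of special-sound protocols \cite{C:DFMS19,C:DonFehMaj20} together with the protocol's special soundness (two transcripts with distinct challenges reveal all seeds, hence $\pi_{\alpha}$ and $\bm{u}_{\alpha}$, hence the witness via $\Ext$) yields a solution of the $\GSD$ instance. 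This contradicts Definition~\ref{def:inst-gen} and bounds the plain existential-forgery probability by a negligible quantity.

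Finally I would strengthen existential to \emph{strong} unforgeability: to forbid a new signature on a previously answered message, if the forgery shares the commitment prefix of an answered query then Definition~\ref{def:comp-uni-res} makes a differing accepting response infeasible, while if the prefix differs the extraction argument above applies. The main obstacle will be the QROM steps rather than the algebra of the protocols: classical binding and collision-resistance do not lift to the quantum setting, so the collapsing hypothesis on the commitment is exactly what makes it effectively binding against an adversary holding commitments in superposition, and it is needed both for the special-soundness extraction and for the unique-responses argument to go through. Threading the reprogramming error, the online extraction, and the collapsing-based binding into a single advantage bound of the form $\negl(\lambda)$, and handling the looser multi-round case for Sig~1~(5-round) (where the \cite{KZ20} attack forces larger parameters but not a qualitatively different proof), is where the real work lies.
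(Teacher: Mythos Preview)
Your proposal is correct and follows essentially the same route as the paper: verify that the helper-free PoKs satisfy HVZK, special soundness, unpredictable first message, and computationally unique responses (the latter two via the collapsing commitment), then invoke the QROM Fiat--Shamir results of \cite{C:DFMS19,C:DonFehMaj20} to conclude $\mathsf{sUF\text{-}CMA}$. The paper's own proof is considerably terser than yours---it simply checks the preconditions and cites the relevant theorems and corollaries of \cite{C:DonFehMaj20} directly---whereas you spell out the game-hop skeleton (simulate signing via HVZK and reprogramming, then extract), but the underlying argument is the same.
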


\begin{proof}
  The proof of Theorem~\ref{thm:qrom} is similar to those analyzing the
  security of (multi-round) Fiat-Shamir transformation of PoK in the QROM~\cite{C:DFMS19,C:DonFehMaj20}.
  Here, we note that the first message is a commitment generated using
  a collapsing hash function and hence is unpredictable.
  Also, the second message in case of Sig~1~(5-round) is computed as a function of
  the committed values in first message. Similar is the case for the final
  responses, which additionally includes some opening information.
  Due to the binding property  of the commitment, the second message 
  and the response for each of the schemes is computationally unique.
  As shown earlier, the schemes are also HVZK.
  This suffices to prove the 
  strong existential unforgeable under chosen message attack (sUF-CMA)
  property of the schemes following Theorem 23, Theorem 28, Corollary 24,
  Corollary 29, and Corollary 30 from~\cite{C:DonFehMaj20}.
\end{proof}

\section{Parameters and Comparison} \label{sec:parameters}

\subsection{Parameters choice}

The system parameters $(n, k, w, M, N, \tau)$ are chosen such that all known attacks cost more than $2^{\lambda}$ elementary operations for a given security parameter $\lambda$.
The parameters $(n, k, w)$ are related to the difficulty of solving the underlying decoding problems while $(M, N, \tau)$ are related to the soundness of the PoK.
Resulting parameters are given in Table~\ref{table:param1}.

\vspace{\baselineskip}
\noindent \textbf{Decoding attack.} We consider decoding problems instantiated with binary $[n, k]$ codes and secrets of small weight $w$.
Parameters are chosen according to the BJMM generic attack \cite{BJMM12} along with estimates from \cite{HS13}.
For the PoK leveraging quasi-cyclicity, we take into account the DOOM attack from \cite{DOOM} which reduces the complexity by a factor $\sqrt n$.

\vspace{\baselineskip}
\noindent \textbf{Soundness error.} When considering the \emph{beating parallel repetition} optimization (see Section \ref{sec:optimizations}), one has to chose $(M, N, \tau)$ such that the soundness error $\max\limits_{0 \leq e \leq \tau} \binom{M - e}{\tau - e} \cdot \binom{M}{\tau}^{-1} \cdot N^{-(\tau - e)}$ is negligible with respect to $\lambda$.
In practice, this offer a trade-off between performances and signature sizes as one can increase $M$ and $N$ (degrading running time) in order to reduce $\tau$ (improving signature size).
We illustrate this trade-off by providing several parameter sets in Table~\ref{table:param1}.

\vspace{\baselineskip}
\noindent \textbf{Attack against 5-round protocols.} An attack exploiting the structure of 5-round PoK has been identified in \cite{KZ20}.
The main idea is to split the attacker work in two phases: (i) initially it tries to guess the first challenge for several repetitions and then (ii) to guess the second challenge for the remaining repetitions.
Our schemes feature the \emph{capability for early abort} described in \cite{KZ20} therefore the cost of the attack is equal to $|\mathcal{C}_1|^{\tau^*} + |\mathcal{C}_2|^{\tau - \tau^*}$ where $\mathcal{C}_1$ and $\mathcal{C}_2$ are the challenge spaces and $\tau^* \in [0, \tau]$ is chosen by the adversary to minimize the attack's cost.

\subsection{Resulting key and signature sizes}

\vspace{\baselineskip}
\noindent We now explain how to compute the sizes of the signatures Sig~1~(3-round), Sig~1~(5-round), Sig~2 and Sig~3 (see Appendices \ref{app:sig1-3r}, \ref{app:sig1-5r}, \ref{app:sig2} and \ref{app:sig3}).
Hereafter, we consider commitments instantiated from hash functions such that $|\com| = 2\lambda$ bits.
Resulting average sizes are given in Table~\ref{table:param1}.
In practice, the size of these signature varies depending on the challenges received.

\vspace{\baselineskip}
\noindent \textbf{Key pair.} The key pair of the signature built from PoK~1 is defined by $\sk = (\bm{x})$ and $\pk = (\bm{H}, \bm{y}^{\top} = \bm{H} \bm{x}^{\top})$ while the key pair of the signature built from PoK~2 and PoK~3 is given by $\sk = (\bm{x}, \bm{e})$ and $\pk = (\bm{G}, \bm{y} = \bm{x} \bm{G} + \bm{e})$.
Both the secret values ($\bm{x}$ and $(\bm{x}, \bm{e})$) and the public matrices ($\bm{H}$ and $\bm{G}$) can be generated from seeds.
Therefore, $\sk$ has size $\lambda$ bits in both cases while $\pk$ is $(n - k) + \lambda$ bits long for PoK~1 and $n + \lambda$ bits long for PoK~2 and PoK~3.

\vspace{\baselineskip}
\noindent \textbf{Signature from PoK~1.} In our Sig~1 (3-round), the signer has to send $(h, \xi, (\theta^{(k)}, \allowbreak \com^{(k)})_{k \in \intoneto{M} \setminus K}, (\rsp^{(\kappa)})_{\kappa \in K})$.
In the 5-round variant, the signer send $(h, h', \xi, \allowbreak (\theta^{(k)})_{k \in \intoneto{M} \setminus K}, (\rsp^{(\kappa)})_{\kappa \in K})$.
Both $h$ and $h'$ are commitments of size $|\com|$ bits while $\xi$ is a $\lambda$ bits long seed.  
The $M - \tau$ values $\theta^{(k)}$ are the seeds corresponding to the setups that are checked by the verifier without being executed.
They can be sent using the aforementioned Merkle tree based optimization (see Section~\ref{sec:optimizations}).
In the case of our Sig~1 parameters where $\tau = M/2$, the cost associated to the $\theta^{(k)}$ seeds is equal to $3/4 \cdot (M - \tau) \cdot \lambda$ bits (following the same argument as the one used for the seed compression in the response optimization).
The $(M - \tau)$ commitments $\com^{(k)}$ used in the 3-round variant can be sent in a similar way using $3/4 \cdot (M - \tau) \cdot |\com|$ bits.
The response $\rsp^{(\kappa)}$ differs with respect to the value of the challenge $\alpha$.
It either contains $(\phi^{(\kappa)}, \bm{u}^{(\kappa)} + \bm{x}, \com^{(\kappa)}_1)$ when $\alpha = 0$ or $(\psi^{(\kappa)}, \pi^{(\kappa)}[\bm{x}], \com^{(\kappa)}_0)$ when $\alpha = 1$.
The values $\phi^{(\kappa)}$ and $\psi^{(\kappa)}$ are seeds, the value $\bm{u}^{(\kappa)} + \bm{x}$ is a vector of size $n$ and $\pi^{(\kappa)}[\bm{x}]$ can be sent using $n/2$ bits thanks to the small weight vector compression optimization.
The cost of sending $\com^{(\kappa)}_0$ and $\com^{(\kappa)}_1$ can be reduced from $2|\com|$ to $7/8 \cdot 2|\com|$ using binary trees of length $1$ although this is less efficient than for $(\com^{(k)})_{k \in \intoneto{M} \setminus K}$ as one have to take into account the fact that some instances are not going to be executed.
Thus, the cost associated to the $\tau$ responses $\rsp^{(\kappa)}$ used in the PoK~1 is equal to $\tau/2 \cdot (3n/2 + 2 \lambda + 7/8 \cdot 2|\com|)$.
Overall, the signature constructed from 3-round PoK~1 has a size equal to $(1 + 3/4 \cdot (M - \tau)) \cdot (\lambda + |\com|) + \tau/2 \cdot (3n/2 + 2 \lambda + 7/8 \cdot 2|\com|) \approx \tau \cdot (0.75 n + 5 \lambda) $ bits.
Similarly, the signature constructed from 5-round PoK~1 has a size equal to $(2|\com| + \lambda) + 3/4 \cdot (M - \tau) \cdot \lambda + \tau/2 \cdot (3n/2 + 2 \lambda + 7/8 \cdot 2|\com|) \approx \tau \cdot (0.75 n + 3.5 \lambda)$ bits.
We have provided parameters suitable for the $\SD$ problem and its quasi-cyclic variant $\QCSD$ in Table~\ref{table:param1}.

\vspace{\baselineskip}
\noindent \textbf{Signature from PoK~2.} In our Sig~2, the signer has to send $(h, \xi, (\theta^{(k)}, \allowbreak \com^{(k)})_{k \in \intoneto{M} \setminus K}, (\rsp^{(\kappa)})_{\kappa \in K})$.
The value $h$ is commitment of size $|\com|$ bits while $\xi$ is a $\lambda$ bits long seed.  
The $(M - \tau)$ values $\theta^{(k)}$ and $\com^{(k)}$ are respectively the seeds and commitments  corresponding to the setups that are checked by the verifier without being executed.
By leveraging the Merkle tree based optimization, their cost is respectively equal to $\tau \log_2(M/\tau) \cdot \lambda$ bits and $\tau \log_2(M/\tau) \cdot |\com|$ bits.
The prover's response contains $(\bm{u}^{(\kappa)}_\alpha + \bm{x}, \pi^{(\kappa)}_{\alpha}[\bm{e}], \bm{s}^{(\kappa)}_\alpha, \allowbreak \theta^{(\kappa)}_{\alpha^*}, \aux^{(\kappa)}_{\alpha^*})$. 
The vectors $\bm{u}^{(\kappa)}_\alpha + \bm{x}$ and $\bm{s}^{(\kappa)}_\alpha$ are of size $k = n/2$ and $n$ respectively.
In addition, $\pi^{(\kappa)}_{\alpha}[\bm{e}]$ can be sent using $n/2$ bits thanks to the small weight vector compression optimization.
The values $\theta^{(\kappa)}_{\alpha^*}$ and $\aux^{(\kappa)}_{\alpha^*}$ contains respectively $N - 1$ seeds and commitments that can be sent using $\lambda \cdot \log_2(N)$ bits and $|\com| \cdot \log_2(N)$ bits.
Hence the cost associated to the $\tau$ responses is equal to $\tau \cdot (2n + (\lambda + |\com|) \cdot \log_2(N))$ bits.
Overall, the signature constructed from 3-round PoK~2 has a size equal to $(|\com| + \lambda) \cdot (1 + \tau \log_2(M/\tau)) + \tau \cdot (2n + (\lambda + |\com|) \cdot \log_2(N)) \approx \tau \cdot (2n + 3 \lambda \cdot [\log_2(M/\tau) + \log_2(N)])$ bits.

\vspace{\baselineskip}
\noindent \textbf{Signature from PoK~3.} The case of Sig~3 is very similar to the one of Sig~2 except that it benefits more of the commitment compression optimization as explained in Section~\ref{sec:optimizations}.
As a result, the cost associated to the $\tau$ responses is equal to $\tau \cdot (2n + \lambda \cdot \log_2(N) + |\com|)$ instead of $\tau \cdot (2n + (\lambda + |\com|) \cdot \log_2(N))$.
Hence, the signature constructed from PoK~3 is of size $(|\com| + \lambda) \cdot (1 + \tau \log_2(M/\tau)) + \tau \cdot (2n + \lambda \cdot \log_2(N) + |\com|) \approx \tau \cdot (2n + \lambda \cdot [3\log_2(M/\tau) + \log_2(N) + 2])$ bits.

\vspace{0.5\baselineskip}
\begin{table}[ht]
\begin{center}
{\renewcommand{\arraystretch}{1.3}
{\scriptsize
  \begin{tabular}{|l|c|c|c|c|c|c|c|c|}
    \cline{2-9}
      \multicolumn{1}{c|}{} & $n$ & $k$ & $w$ & $M$ & $N$ & $\tau$ & $\pk$ & $\sigma$ \\ \hline 
      Sig 1 (3-round) [$\QCSD$] & 1238 & 619 & 137 & \multirow{2}{*}{256}  & \multirow{2}{*}{2}   & \multirow{2}{*}{128} & 0.1 kB & 25.2 kB \\ \cline{1-4} \cline{8-9}
      Sig 1 (3-round) [$\SD$]   & 1190 & 595 & 132 &                       &                      &                      & 0.1 kB & 24.6 kB \\ \cline{1-4} \cline{5-9}
      Sig 1 (5-round) [$\QCSD$] & 1238 & 619 & 137 & \multirow{2}{*}{256}  & \multirow{2}{*}{2}   & \multirow{2}{*}{143} & 0.1 kB & 24.3 kB \\ \cline{1-4} \cline{8-9}
      Sig 1 (5-round) [$\SD$]   & 1190 & 595 & 132 &                       &                      &                      & 0.1 kB & 23.7 kB \\ \hline 
      \multirow{3}{*}{Sig 2 (3-round) [$\QCGSD$]}  & \multirow{3}{*}{1238} & \multirow{3}{*}{619} & \multirow{3}{*}{137} & 272 & 16 & 35 & \multirow{3}{*}{0.2 kB}  & 22.6 kB \\ \cline{5-7} \cline{9-9}
                                                   &                       &                      &                      & 389 & 32 & 28 &                          & 20.6 kB \\ \cline{5-7} \cline{9-9}	
                                                   &                       &                      &                      & 631 & 64 & 23 &                          & 19.3 kB \\ \hline
      \multirow{3}{*}{Sig 3 (3-round) [$\QCGSD$]}  & \multirow{3}{*}{1238} & \multirow{3}{*}{619} & \multirow{3}{*}{137} & 272 & 16 & 35 & \multirow{3}{*}{0.2 kB}  & 19.3 kB \\ \cline{5-7} \cline{9-9}
                                                   &                       &                      &                      & 389 & 32 & 28 &                          & 17.0 kB \\ \cline{5-7} \cline{9-9}	
                                                   &                       &                      &                      & 631 & 64 & 23 &                          & 15.6 kB \\ \hline
  \end{tabular}
  \caption{Parameters and sizes for Sig~1, Sig~2 and Sig~3 ($\lambda = 128$)} \label{table:param1}
}}
\end{center}
\end{table}
\vspace{-\baselineskip}

\newpage
\subsection{Comparison to other code-based signatures}

We first compare our new signatures to existing code-based signatures based on proofs on knowledge for the syndrome decoding. Next, we extend this comparison to any code-based signatures.

\vspace{\baselineskip}
\noindent \textbf{Comparison between code-based signatures using PoK.} We compare the Stern \cite{Stern93}, Véron \cite{Veron97,ISIT21}, AGS \cite{AGS11,ISIT21}, BGS \cite{BGS21}, CVE \cite{CVE11}, GPS \cite{GPS21} and FJR \cite{FJR21} schemes to our new proposals according to their security, sizes and performances.
To provide a meaningful comparison, we have updated the parameters of old schemes so that they achieve a security level $\lambda = 128$ and have applied recent optimizations to old schemes whenever relevant.
We have used the parameters $n = 1190$, $k = 595$ and $w = 132$ for the schemes based on the $\SD$ problem (without quasi-cyclicity), $n = 1238$, $k = 619$ and $w = 137$ for the schemes based on the $\QCSD$ problem and $n = 226$, $q = 256$, $k = 113$ and $w = 86$ for the schemes based on the $\SD$ problem over $\Fq$.
For recent GPS and FJR proposals, we have used the parameters proposed by their respective authors in \cite{GPS21} and \cite{FJR21}.
As the parameters used in \cite{FJR21} differ from the ones used here, this induces small differences between FJR and Sig~3 although both schemes can achieve similar results.
As several of these schemes can be based either on the $\SD$ or $\QCSD$ problem, we have indicated the case considered in the comparison (which matches the initial design of the scheme) and have indicated the other possible case using parenthesis whenever relevant.
Since there is no implementation implementation available for most of these schemes yet, we provide an estimate of their relative performances.
For all these schemes, the first step (every operations executed by the prover before he outputs its first commitment) can be seen as repeating $\mu$ times the computation of $\nu$ operations whose cost is arbitrarily denoted as one $\texttt{cost\_unit}$.
This first step hence costs $\mu \cdot \nu \cdot \texttt{cost\_unit}$.
Using our PoK~2 for illustrative purposes, one can see that $\mu = M$, $\nu = N$ and the $\texttt{cost\_unit}$ encompasses all the operations required to compute $\com^{(k)}_{i}$ for a given $k \in \intoneto{M}$ and $i \in \intoneto{N}$.
We believe, this constitutes a good estimate of the relative performances between these schemes as this step will likely dominates their overall cost.
Indeed, the $\texttt{cost\_unit}$ generally contains the most costly operations (matrix / vector multiplication, randomness sampling and hash computation) and is repeated $\mu \cdot \nu$ time in order to obtain a negligible soundness error.
We define our cost estimate as $\mu \cdot \nu$ thus assuming that the $\texttt{cost\_unit}$ is similar for each schemes.
This introduces an approximation in our comparison which could only be solved by providing and benchmarking actual implementations of the aforementioned schemes.
In particular, this approximation hides the performance difference between using plain matrices and structured ones which is not negligible in practice.
As such, one should compare the schemes whose $\texttt{cost\_unit}$ includes a matrix / vector multiplication (the ones based on the plain $\SD$ problem) separately from the schemes whose $\texttt{cost\_unit}$ features an efficient one thanks to structured matrices (the ones based on the $\QCSD$ problem).
Moreover, the GPS scheme does not include such a multiplication hence its real $\texttt{cost\_unit}$ is likely to be smaller than the one of other schemes which means that the proposed estimate might overestimate its real cost.
Results are displayed in Table~\ref{table:param2}.
One can see that our new constructions provide various trade-offs between security assumptions, performances and sizes for code-based signatures built from PoK.
In particular, Sig~1 brings improvement with respect to Stern, Véron, AGS and BGS protocols at the cost of a very small overhead.
If one is willing to accept a greater performance overhead, then PoK~3 (and PoK~2 to a lesser extent) permits to achieve even smaller signature sizes.

\vspace{0.5\baselineskip}
\begin{table}[ht]
\begin{center}
{\renewcommand{\arraystretch}{1.3}
{\scriptsize
  \begin{tabular}{|l|c|c|c|c|c|l|}
    \cline{2-7}
    \multicolumn{1}{c|}{}  & \multicolumn{3}{c|}{Performance} & \multicolumn{2}{c|}{Size} & \multirow{2}{*}{Security Assumption} \\ \cline{2-6}
    \multicolumn{1}{c|}{}  & $\mu$ & $\nu$ & Cost & $\pk$ & $\sigma$ & \\ \hline 
    Stern                            & 219  & 2    & 438       & 0.1 kB                   & 37.6 kB                   & $\SD$ (or $\QCSD$) over $\Ft$ \\ \hline 
    Véron                            & 219  & 2    & 438       & 0.2 kB                   & 31.2 kB                   & $\SD$ (or $\QCSD$) over $\Ft$ \\ \hline 
    CVE                              & 156  & 2    & 312       & 0.2 kB                   & 32.6 KB                   & $\SD$ (or $\QCSD$) over $\Fq$ \\ \hline
    \multirow{2}{*}{AGS}             & 151  & 2    & 302       & 0.2 kB                   & 30.5 kB                   & \multirow{2}{*}{$\QCSD$/$\DiffSD$ over $\Ft$} \\ \cline{2-6} 
                                     & 141  & 2    & 282       & 3.1 kB                   & 28.5 kB                   & \\ \hline
    \multirow{2}{*}{BGS}             & 151  & 2    & 302       & 0.1 kB                   & 25.2 kB                   & \multirow{2}{*}{$\QCSD$/$\DiffSD$ over $\Ft$} \\ \cline{2-6} 
                                     & 141  & 2    & 282       & 1.7 kB                   & 23.5 kB                   & \\ \hline 
    \multirow{4}{*}{GPS}             & 512  & 128  & 65 536    & 0.2 kB                   & 27.1 kB                   & \multirow{4}{*}{$\SD$ (or $\QCSD$) over $\Fq$} \\ \cline{2-6}
                                     & 1024 & 256  & 262 144   & 0.2 kB                   & 24.0 kB                   & \\ \cline{2-6}
                                     & 2048 & 512  & 1 048 576 & 0.2 kB                   & 21.3 kB                   & \\ \cline{2-6}
                                     & 4096 & 1024 & 4 194 304 & 0.2 kB                   & 19.8 kB                   & \\ \hline
    \multirow{2}{*}{FJR}             & 187  & 8    & 1496      & 0.1 kB                   & 24.4 kB                   & \multirow{2}{*}{$\SD$ (or $\QCSD$) over $\Ft$} \\ \cline{2-6}
                                     & 389  & 32   & 12 448    & 0.1 kB                   & 17.6 kB                   & \\ \hline
    \multirow{2}{*}{Sig 1 (3-round)} & 256  & 2    & 512       & 0.1 kB                   & 25.2 kB                   & $\QCSD$ over $\Ft$ \\ \cline{2-7}
                                     & 256  & 2    & 512       & 0.1 kB                   & 24.6 kB                   & $\SD$ over $\Ft$ \\ \hline
    \multirow{2}{*}{Sig 1 (5-round)} & 256  & 2    & 512       & 0.1 kB                   & 24.3 kB                   & $\QCSD$ over $\Ft$ \\ \cline{2-7}
                                     & 256  & 2    & 512       & 0.1 kB                   & 23.7 kB                   & $\SD$ over $\Ft$ \\ \hline
    \multirow{3}{*}{Sig 2}           & 272  & 16   & 4352      & 0.2 kB                   & 22.6 kB                   & \multirow{3}{*}{$\QCSD$ (or $\SD$) over $\Ft$} \\ \cline{2-6}
                                     & 389  & 32   & 12 448    & 0.2 kB                   & 20.6 kB                   & \\ \cline{2-6}	
                                     & 631  & 64   & 40 384    & 0.2 kB                   & 19.3 kB                   & \\ \hline
    \multirow{3}{*}{Sig 3}           & 272  & 16   & 4352      & 0.2 kB                   & 19.3 kB                   & \multirow{3}{*}{$\QCSD$ (or $\SD$) over $\Ft$} \\ \cline{2-6}	
                                     & 389  & 32   & 12 448    & 0.2 kB                   & 17.0 kB                   & \\ \cline{2-6}	
                                     & 631  & 64   & 40 384    & 0.2 kB                   & 15.6 kB                   & \\ \hline	
  \end{tabular}
  \caption{Comparison between code-based signatures built from PoK ($\lambda = 128$)} \label{table:param2}
}}
\end{center}
\end{table}

\noindent \textbf{Comparison to other signatures.} We provide a comparison with existing code-based signatures (including Wave \cite{wave19}, LESS \cite{less,less-fm} and Durandal \cite{Durandal}) in Table~\ref{table:param3}.
The LESS scheme relies on the code equivalence problem while Wave relies on both the $\SD$ problem (with large weight) and the indistinguishability of generalized $(U,U + V )$-codes.
We also include Durandal in our comparison even if it is a scheme based on the rank metric.
As such, it would be better to compare it with the rank-metric variants of our schemes.
Such variants are straightforward and are discussed in Section~\ref{sec:generalization}.

\vspace{0.5\baselineskip}
\begin{table}[ht]
\begin{center}
{\renewcommand{\arraystretch}{1.3}
{\scriptsize
  \begin{tabular}{|l|c|c|c|l|}
    \cline{2-5}
    \multicolumn{1}{c|}{} & $\pk$ & $\sigma$ & $\pk + \sigma$ & \multicolumn{1}{c|}{Security Assumption} \\ \hline 
    \multirow{2}{*}{Wave}     & \multirow{2}{*}{3.2 MB}  & \multirow{2}{*}{0.93 kB}  & \multirow{2}{*}{3.3 MB}   & Syndrome decoding over $\mathbb{F}_3$ (large weight) \\
                              &                          &                           &                           & Generalized $(U,U + V )$-codes indistinguishability  \\ \hline 
    \multirow{3}{*}{LESS}     & 9.8 kB                   & 15.2 kB                   & 25.0 kB                   & Linear Code Equivalence \\ \cline{2-5}
                              & 206 kB                   & 5.3 kB                    & 212 kB                    & \multirow{2}{*}{Permutation Code Equivalence} \\ \cline{2-4}
                              & 11.6 kB                  & 10.4 kB                   & 22.0 kB                   &  \\ \hline
    \multirow{2}{*}{Durandal} & 15.3 kB                  & 4.1 kB                    & 19.4 kB                   & Rank syndrome decoding over $\mathbb{F}_{2^m}$ (small weight) \\ \cline{2-4}
                              & 18.6 kB                  & 5.1 kB                    & 23.7 kB                   & Product spaces subspaces indistinguishability \\ \hline
    \multirow{2}{*}{Sig 1}    & 0.1 kB                   & 24.6 kB                   & 24.6 kB                   & \multirow{2}{*}{Syndrome decoding over $\Ft$ (small weight)} \\ \cline{2-4}
                              & 0.1 kB                   & 22.1 kB                   & 23.7 kB                   & \\ \hline 
    \multirow{3}{*}{Sig 2}    & 0.2 kB                   & 19.8 kB                   & 22.6 kB                   & \multirow{3}{*}{Syndrome decoding over $\Ft$ (small weight)} \\ \cline{2-4}
                              & 0.2 kB                   & 17.8 kB                   & 20.6 kB                   & \\ \cline{2-4}
                              & 0.2 kB                   & 15.6 kB                   & 19.3 kB                   & \\ \hline
    \multirow{3}{*}{Sig 3}    & 0.2 kB                   & 13.7 kB                   & 19.3 kB                   & \multirow{3}{*}{Syndrome decoding over $\Ft$ (small weight)} \\ \cline{2-4}
                              & 0.2 kB                   & 12.0 kB                   & 17.0 kB                   & \\ \cline{2-4}
                              & 0.2 kB                   & 10.9 kB                   & 15.6 kB                   & \\ \hline	
  \end{tabular}
  \caption{Comparison between code-based signatures ($\lambda = 128$)} \label{table:param3}
}}
\end{center}
\end{table}


\section{Generalization and Variants} \label{sec:generalization}

In this section, we briefly discuss the generalization of the proposed signatures to additional metrics as well several possible variants.

\vspace{\baselineskip}
\noindent \textbf{Generalization to additional metrics.} Following the work of \cite{BGS21}, we define a Full Domain Linear Isometry (FDLI) set as a set of linear isometries $I$ which has the property that given a random element $\phi \in I$, the image by $\phi$ of a random word $\bm{x}$ of weight $\omega$ is a random word $\bm{y}$ of weight $\omega$. 
One can adapt our protocols to other metrics by (i) redefining the weight of vectors and by (ii) replacing the permutation $\pi \in \hperm{n}$ by a random element $\phi$ of a FDLI set $I$.
For instance, our protocols can be adapted easily to the rank metric setting using the FDLI set for rank metric described in \cite{RankStern11} along with the rank weight and vectors over $\Fqmn$ in place of the Hamming weight and vectors over $\Ftn$.
This allows to further reduce the size of our signatures as illustrated in Table~\ref{table:rank-metric}.

\begin{table}[ht]
\begin{center}
{\renewcommand{\arraystretch}{1.3}
{\scriptsize
  \begin{tabular}{|l|c|c|c|c|c|c|c|c|c|c|}
    \cline{2-11}
      \multicolumn{1}{c|}{} & $q$ & $m$ & $n$ & $k$ & $w$ & $M$ & $N$ & $\tau$ & $\pk$ & $\sigma$ \\ \hline 
      Sig 1 (3-round)                  & \multirow{6}{*}{2} & \multirow{6}{*}{31} & \multirow{6}{*}{32} & \multirow{6}{*}{16} & \multirow{6}{*}{9} & \multirow{2}{*}{256} & \multirow{2}{*}{2} & \multirow{2}{*}{128} & 0.1 kB & 22.8 kB \\ \cline{1-1} \cline{10-11}
      Sig 1 (5-round)                  & & & &                       &                       &      &      &    & 0.1 kB                   & 19.7 kB \\ \cline{1-1} \cline{7-9} \cline{10-11} 
      \multirow{2}{*}{Sig 2 (3-round)} & & & & &                                             & 230  & 8   & 45  & \multirow{2}{*}{0.2 kB}  & 20.8 kB \\ \cline{7-9} \cline{11-11}
                                       & & & &                       &                       & 207  & 16   & 39 &                           & 19.1 kB \\ \cline{1-1} \cline{7-9} \cline{10-11}
      \multirow{2}{*}{Sig 3 (3-round)} & & & & &                                             & 230  & 8   & 45  & \multirow{2}{*}{0.2 kB}  & 17.4 kB \\ \cline{7-9} \cline{11-11}
                                       & & & &                       &                       & 389  & 32   & 28 &                           & 15.5 kB \\ \hline
  \end{tabular}
  \caption{Parameters and sizes in the rank metric setting ($\lambda = 128$)} \label{table:rank-metric}
}}
\end{center}
\end{table}


\noindent \textbf{Variants with others structured matrices.} We have explained in Section~\ref{sec:optimizations} how one can use structured matrices in order to improve the performances of our proofs of knowledge.
To this end, we suggest the use of quasi-cyclic matrices as they are commonly used in code-based cryptography.
One should note that this performance improvement could be achieved with any set of matrices that benefit from an efficient matrix / vector product.
Therefore, our protocols can be adapted to work with other kind of structured matrices such as Toeplitz matrices which is generalization of quasi-cyclic matrices.

\newpage
\bibliographystyle{alpha}
\bibliography{ref}
\appendix

\newpage
\section{PoK 1 (3-round, without optimization)} \label{app:pok1-3r-nopt}

\begin{figure}[!ht] 
  \begin{center}
    \resizebox{1\textwidth}{!}{\fbox{
      \pseudocode{%
        \hspace{160pt} \> \> \hspace{160pt} \\[-0.75\baselineskip] 
        \underline{\proverf(\bm{x}, \bm{H}, \bm{y})} \> \> \underline{\verifierf(\bm{H}, \bm{y})} \\
        \pcfor k \in \intoneto{M} \pcdo \\
        \pcind \theta^{(k)} \sampler \bit^{\lambda} \\
        \pcind \pi^{(k)} \samples{\theta^{(k)}} \hperm{n}, ~ \bm{u}^{(k)} \samples{\theta^{(k)}} \Ftn \\ 
        \pcind r_0 \samples{\theta^{(k)}} \bit^{\lambda}, ~ r_1 \samples{\theta^{(k)}} \bit^{\lambda} \\
        \pcind \com^{(k)}_{0} = \commit{r^{(k)}_0, \, \pi^{(k)} \, || \, \bm{H} (\bm{u}^{(k)})^{\top}} \\
        \pcind \com^{(k)}_{1} = \commit{r^{(k)}_1, \, \pi^{(k)}[\bm{u}^{(k)}]} \\
        \pcind \aux^{(k)} = (\com^{(k)}_0, \com^{(k)}_1) \\[0.75\baselineskip]
        \pcind r^{(k)} \sampler \bit^{\lambda} \\
        \pcind \com^{(k)} = \commit{r^{(k)}, \, \pi^{(k)}[\bm{u}^{(k)} + \bm{x}]} \\
        \pcend \\
        \> \sendmessageright*{(\aux^{(k)}, \com^{(k)})_{k \in \intoneto{M}}} \\[-0.5\baselineskip]
        \> \> \kappa \sampler \intoneto{M} \\ 
        \> \> \alpha \sampler \bit \\[-\baselineskip]
        \> \sendmessageleft*{(\kappa, \alpha)} \\[-0.5\baselineskip]
        \pcif \alpha = 0 \pcthen \\
        \pcind z_1 = \pi^{(\kappa)}, ~ \bm{z}_2 = \bm{u}^{(\kappa)} + \bm{x} \\
        \pcind \rsp = (r_0^{(\kappa)}, \, r^{(\kappa)}, \, z_1, \, \bm{z}_2) \\
        \pcend \\[0.5\baselineskip]
        \pcif \alpha = 1 \pcthen \\
        \pcind \bm{z}_3 = \pi^{(\kappa)}[\bm{u}^{(\kappa)}], ~ \bm{z}_4 = \pi^{(\kappa)}[\bm{x}] \\ 
        \pcind \rsp = (r_1^{(\kappa)}, \, r^{(\kappa)}, \, \bm{z}_3, \, \bm{z}_4) \\
        \pcend \\
        \> \sendmessageright*{(\theta^{(k)})_{k \in \intoneto{M} \setminus \kappa}, \, \rsp} \\[-\baselineskip]
        \> \> \pcfor k \in \intoneto{M} \setminus \kappa \pcdo \\
        \> \> \pcind \tx{Compute } \bar{\aux}^{(k)} \tx{ from } \theta^{(k)} \\
        \> \> \pcind \pcif (\aux^{(k)} \neq \bar{\aux}^{(k)}) ~ \pcreturn 0 \\
        \> \> \pcend \\[0.75\baselineskip]
        \> \> \pcif \alpha = 0 \pcthen \\
        \> \> \pcind b_1 \samplen \big( \com^{(\kappa)}_0 = \commit{r_0^{(\kappa)}, \, z_1 \, || \, \bm{H} \bm{z}_2^{\top} - \bm{y}^{\top}} \big) \\ 
        \> \> \pcind b_2 \samplen \big( \com^{(\kappa)} = \commit{r^{(\kappa)}, \, z_1[\bm{z}_2]} \big) \\ 
        \> \> \pcind \pcreturn b_1 \wedge b_2 \\
        \> \> \pcend \\[0.5\baselineskip]
        \> \> \pcif \alpha = 1 \pcthen \\
        \> \> \pcind b_1 \samplen \big( \com^{(\kappa)}_1 = \commit{r_1^{(\kappa)}, \, \bm{z}_3} \big) \\ 
        \> \> \pcind b_2 \samplen \big( \com^{(\kappa)} = \commit{r^{(\kappa)}, \, \bm{z}_3 + \bm{z}_4} \big) \\ 
        \> \> \pcind b_3 \samplen \big( \hw{\bm{z}_4} = \omega \big) \\
        \> \> \pcind \pcreturn b_1 \wedge b_2 \wedge b_3 \\
        \> \> \pcend
      }
    }}
    \caption{3-round HVZK PoK for the $\SD$ problem (without optimization) \label{fig:pok1-3r-nopt}}
  \end{center}
\end{figure}


\newpage
\section{PoK 1 (3-round, with optimizations)} \label{app:pok1-3r-opt}

\begin{figure}[!ht] 
  \begin{center}
    \resizebox{1\textwidth}{!}{\fbox{
      \pseudocode{%
        \hspace{160pt} \> \> \hspace{160pt} \\[-0.75\baselineskip] 
        \underline{\proverf(\bm{x}, \bm{H}, \bm{y})} \> \> \underline{\verifierf(\bm{H}, \bm{y})} \\
        \xi \sampler \bit^{\lambda} \\ 
        \pcfor k \in \intoneto{M} \pcdo \\
        \pcind \theta^{(k)} \sampler \bit^{\lambda}, ~ \phi^{(k)} \samples{\theta^{(k)}} \bit^{\lambda}, ~ \psi^{(k)} \samples{\theta^{(k)}} \bit^{\lambda} \\
        \pcind \pi^{(k)} \samples{\phi^{(k)}} \hperm{n}, ~ \bm{v}^{(k)} \samples{\psi^{(k)}} \Ftn, ~ \bm{u}^{(k)} = (\pi^{(k)})^{-1}[\bm{v}^{(k)}] \\
        \pcind r^{(k)}_0 \samples{\phi^{(k)}} \bit^{\lambda}, ~ \com^{(k)}_{0} = \commit{r^{(k)}_0, \, \pi^{(k)} \, || \, \bm{H} (\bm{u}^{(k)})^{\top}} \\
        \pcind r^{(k)}_1 \samples{\psi^{(k)}} \bit^{\lambda}, ~ \com^{(k)}_{1} = \commit{r^{(k)}_1, \, \pi^{(k)}[\bm{u}^{(k)}]} \\
        \pcind \aux^{(k)} = (\com^{(k)}_0, \com^{(k)}_1) \\
        \pcind r^{(k)} \samples{\xi} \bit^{\lambda}, ~ \com^{(k)} = \commit{r^{(k)}, \, \pi^{(k)}[\bm{u}^{(k)} + \bm{x}]} \\
        \pcend \\
        r \samples{\xi} \bit{^\lambda}, ~ h = \commit{r, \, (\aux^{(k)} \, || \, \com^{(k)})_{k \in \intoneto{M}}} \\[-0.75\baselineskip]
        \> \sendmessageright*{h} \\[-0.5\baselineskip]
        \> \> K \sampler \{ K \subset \{1, \cdots, M \}, \, |K| = \tau \} \\ 
        \> \> A \sampler \bit^{\tau} \\[-\baselineskip]
        \> \sendmessageleft*{(K, A)} \\[-0.5\baselineskip]
        \pcfor (\kappa, \alpha) \in K \times A \pcdo \\
        \pcind \pcif \alpha = 0 \pcthen \\
        \pcind \pcind z^{(\kappa)}_1 = \phi^{(\kappa)}, ~ \bm{z}^{(\kappa)}_2 = \bm{u}^{(\kappa)} + \bm{x} \\
        \pcind \pcind \rsp^{(\kappa)} = (z^{(\kappa)}_1, \, \bm{z}^{(\kappa)}_2, \, \com^{(\kappa)}_1) \\
        \pcind \pcend \\[0.5\baselineskip]
        \pcind \pcif \alpha = 1 \pcthen \\
        \pcind \pcind z^{(\kappa)}_3 = \psi^{(\kappa)}, ~ \bm{z}^{(\kappa)}_4 = \pi^{(\kappa)}[\bm{x}] \\ 
        \pcind \pcind \rsp^{(\kappa)} = (z^{(\kappa)}_3, \, \bm{z}^{(\kappa)}_4, \, \com^{(\kappa)}_0) \\
        \pcind \pcend \\
        \pcend \\
        \rsp = (\xi, (\theta^{(k)}, \com^{(k)})_{k \in \intoneto{M} \setminus K}, \, (\rsp^{(\kappa)})_{\kappa \in K})
        \> \sendmessageright*{\rsp} \\[-\baselineskip]
        \> \> \tx{Compute } \bar{r}, (\bar{r}^{(\kappa)})_{\kappa \in K} \tx{ from } \xi \\
        \> \> \pcfor k \in \intoneto{M} \setminus K \pcdo \\
        \> \> \pcind \tx{Compute } \bar{\aux}^{(k)} \tx{ from } \theta^{(k)} \\
        \> \> \pcend \\[0.5\baselineskip]
        \> \> \pcfor (\kappa, \alpha) \in K \times A \pcdo \\
        \> \> \pcind \pcif \alpha = 0 \pcthen \\
        \> \> \pcind \pcind \bar{z}^{(\kappa)}_1 \samples{z^{(\kappa)}_1} \hperm{n}, ~ \bar{r}^{(\kappa)}_0 \samples{z^{(\kappa)}_1} \bit^{\lambda} \\
        \> \> \pcind \pcind \com^{(\kappa)}_0 = \commit{\bar{r}_0^{(\kappa)}, \, \bar{z}^{(\kappa)}_1 \, || \, \bm{H} (\bm{z}^{(\kappa)}_2)^{\top} - \bm{y}^{\top}} \\ 
        \> \> \pcind \pcind \com^{(\kappa)} = \commit{\bar{r}^{(\kappa)}, \, \bar{z}^{(\kappa)}_1[\bm{z}^{(\kappa)}_2]} \\ 
        \> \> \pcind \pcend \\[0.5\baselineskip]
        \> \> \pcind \pcif \alpha = 1 \pcthen \\
        \> \> \pcind \pcind \bar{\bm{z}}^{(\kappa)}_3 \samples{z^{(\kappa)}_3} \Ftn, ~ \bar{r}^{(\kappa)}_1 \samples{z^{(\kappa)}_3} \bit^{\lambda} \\
        \> \> \pcind \pcind \com^{(\kappa)}_1 = \commit{\bar{r}_1^{(\kappa)}, \, \bar{\bm{z}}_3} \\ 
        \> \> \pcind \pcind \com^{(\kappa)} = \commit{\bar{r}^{(\kappa)}, \, \bar{\bm{z}}_3 + \bm{z}_4} \\ 
        \> \> \pcind \pcind b^{(\kappa)}_1 \samplen \big( \hw{\bm{z}^{(\kappa)}_4} = \omega \big) \\
        \> \> \pcind \pcend \\
        \> \> \pcend \\[0.5\baselineskip]
        \> \> b_1 = \bigwedge\nolimits_{\kappa \in K} b^{(\kappa)}_1 \\ 
        \> \> b_2 \samplen \big( h = \commit{r, \, (\aux^{(k)} \, || \, \com^{(k)})_{k \in \intoneto{M}}} \big) \\
        \> \> \pcreturn b_1 \wedge b_2
      }
    }}
    \caption{3-round HVZK PoK for the $\SD$ problem (with optimizations) \label{fig:pok1-3r-opt}}
  \end{center}
\end{figure}


\newpage
\section{Sig 1 (3-round)} \label{app:sig1-3r}
\begin{figure}[!ht] 
  \begin{center}
    \resizebox{1\textwidth}{!}{\fbox{
      \pseudocode{%
        \hspace{160pt} \> \> \hspace{160pt} \\[-0.75\baselineskip] 
        \underline{\keygen(\lambda)} \\
        \rho_1 \sampler \bit^{\lambda}, ~ \bm{x} \samples{\rho_1} \Ftn \tx{ such that } \hw{\bm{x}} = \omega \\
        \rho_2 \sampler \bit^{\lambda}, ~ \bm{H} \samples{\rho_2} \Ft^{\nmktn}, ~ \bm{y}^{\top} = \bm{H} \bm{x}^{\top} \\
        \pcreturn (\sk, \pk) = (\rho_1, (\rho_2, \bm{y})) \\[0.75\baselineskip]
        \underline{\sign(\sk, \pk, m)} \\
        \xi \sampler \bit^{\lambda} \\ 
        \pcfor k \in \intoneto{M} \pcdo \\
        \pcind \theta^{(k)} \sampler \bit^{\lambda}, ~ \phi^{(k)} \samples{\theta^{(k)}} \bit^{\lambda}, ~ \psi^{(k)} \samples{\theta^{(k)}} \bit^{\lambda} \\
        \pcind \pi^{(k)} \samples{\phi^{(k)}} \hperm{n}, ~ \bm{v}^{(k)} \samples{\psi^{(k)}} \Ftn, ~ \bm{u}^{(k)} = (\pi^{(k)})^{-1}[\bm{v}^{(k)}] \\
        \pcind r^{(k)}_0 \samples{\phi^{(k)}} \bit^{\lambda}, ~ \com^{(k)}_{0} = \commit{r^{(k)}_0, \, \pi^{(k)} \, || \, \bm{H} (\bm{u}^{(k)})^{\top}} \\
        \pcind r^{(k)}_1 \samples{\psi^{(k)}} \bit^{\lambda}, ~ \com^{(k)}_{1} = \commit{r^{(k)}_1, \, \pi^{(k)}[\bm{u}^{(k)}]} \\
        \pcind \aux^{(k)} = (\com^{(k)}_0, \com^{(k)}_1) \\
        \pcind r^{(k)} \samples{\xi} \bit^{\lambda}, ~ \com^{(k)} = \commit{r^{(k)}, \, \pi^{(k)}[\bm{u}^{(k)} + \bm{x}]} \\
        \pcend \\
        r \samples{\xi} \bit{^\lambda}, ~ h = \commit{r, \, (\aux^{(k)} \, || \, \com^{(k)})_{k \in \intoneto{M}}} \\
        (K, \, A) \samplen \ro{m \, || \, \pk \, || \, h} \\
        \pcfor (\kappa, \alpha) \in K \times A \pcdo \\
        \pcind \pcif \alpha = 0 \pcthen \\
        \pcind \pcind z^{(\kappa)}_1 = \phi^{(\kappa)}, ~ \bm{z}^{(\kappa)}_2 = \bm{u}^{(\kappa)} + \bm{x} \\
        \pcind \pcind \rsp^{(\kappa)} = (z^{(\kappa)}_1, \, \bm{z}^{(\kappa)}_2, \, \com^{(\kappa)}_1) \\
        \pcind \pcend \\[0.5\baselineskip]
        \pcind \pcif \alpha = 1 \pcthen \\
        \pcind \pcind z^{(\kappa)}_3 = \psi^{(\kappa)}, ~ \bm{z}^{(\kappa)}_4 = \pi^{(\kappa)}[\bm{x}] \\ 
        \pcind \pcind \rsp^{(\kappa)} = (z^{(\kappa)}_3, \, \bm{z}^{(\kappa)}_4, \, \com^{(\kappa)}_0) \\
        \pcind \pcend \\
        \pcend \\
        \rsp = (\xi, (\theta^{(k)}, \com^{(k)})_{k \in \intoneto{M} \setminus K}, \, (\rsp^{(\kappa)})_{\kappa \in K}) \\
        \pcreturn \sigma = (h, \, \rsp)
        }
    }}
    \caption{$\keygen$ and $\sign$ algorithms for Sig 1 (3-round)} \label{fig:signature-1-3r-keygen-sign}
  \end{center}
\end{figure}

\newpage
        
\begin{figure}[!ht] 
  \begin{center}
    \resizebox{1\textwidth}{!}{\fbox{
      \pseudocode{%
        \hspace{160pt} \> \> \hspace{60pt} \\[-0.75\baselineskip] 
        \underline{\verif(\pk, \sigma, m)} \\
        \tx{Parse } \sigma \tx{ as } \sigma = (h, \rsp) \tx{ and } \rsp \tx{ as } \rsp = (\xi, (\theta^{(k)}, \com^{(k)})_{k \in \intoneto{M} \setminus K}, \, (\rsp^{(\kappa)})_{\kappa \in K}) \\
        (K, \, A) \samplen \ro{m \, || \, \pk \, || \, h} \\[0.5\baselineskip]
        \tx{Compute } \bar{r}, (\bar{r}^{(\kappa)})_{\kappa \in K} \tx{ from } \xi \\
        \pcfor k \in \intoneto{M} \setminus K \pcdo \\
        \pcind \tx{Compute } \bar{\aux}^{(k)} \tx{ from } \theta^{(k)} \\
        \pcend \\[0.5\baselineskip]
        \pcfor (\kappa, \alpha) \in K \times A \pcdo \\
        \pcind \pcif \alpha = 0 \pcthen \\
        \pcind \pcind \bar{z}^{(\kappa)}_1 \samples{z^{(\kappa)}_1} \hperm{n}, ~ \bar{r}^{(\kappa)}_0 \samples{z^{(\kappa)}_1} \bit^{\lambda} \\
        \pcind \pcind \com^{(\kappa)}_0 = \commit{\bar{r}_0^{(\kappa)}, \, \bar{z}^{(\kappa)}_1 \, || \, \bm{H} (\bm{z}^{(\kappa)}_2)^{\top} - \bm{y}^{\top}} \\ 
        \pcind \pcind \com^{(\kappa)} = \commit{\bar{r}^{(\kappa)}, \, \bar{z}^{(\kappa)}_1[\bm{z}^{(\kappa)}_2]} \\ 
        \pcind \pcend \\[0.5\baselineskip]
        \pcind \pcif \alpha = 1 \pcthen \\
        \pcind \pcind \bar{\bm{z}}^{(\kappa)}_3 \samples{z^{(\kappa)}_3} \Ftn, ~ \bar{r}^{(\kappa)}_1 \samples{z^{(\kappa)}_3} \bit^{\lambda} \\
        \pcind \pcind \com^{(\kappa)}_1 = \commit{\bar{r}_1^{(\kappa)}, \, \bar{\bm{z}}_3} \\ 
        \pcind \pcind \com^{(\kappa)} = \commit{\bar{r}^{(\kappa)}, \, \bar{\bm{z}}_3 + \bm{z}_4} \\ 
        \pcind \pcind b^{(\kappa)}_1 \samplen \big( \hw{\bm{z}^{(\kappa)}_4} = \omega \big) \\
        \pcind \pcend \\
        \pcend \\[0.5\baselineskip]
        b_1 = \bigwedge\nolimits_{\kappa \in K} b^{(\kappa)}_1 \\ 
        b_2 \samplen \big( h = \commit{r, \, (\aux^{(k)} \, || \, \com^{(k)})_{k \in \intoneto{M}}} \big) \\
        \pcreturn b_1 \wedge b_2
      }
    }}
    \caption{$\verif$ algorithm for Sig 1 (3-round)} \label{fig:signature-pok1-3r-verify}
  \end{center}
\end{figure}


\newpage
\section{PoK 1 (5-round, without optimization)} \label{app:pok1-5r-nopt}

\vspace{-0.5\baselineskip}
\begin{figure}[!ht] 
  \begin{center}
    \resizebox{1\textwidth}{!}{\fbox{
      \pseudocode{%
        \hspace{160pt} \> \> \hspace{160pt} \\[-0.75\baselineskip] 
        \underline{\proverf(\bm{x}, \bm{H}, \bm{y})} \> \> \underline{\verifierf(\bm{H}, \bm{y})} \\
        \pcfor k \in \intoneto{M} \pcdo \\
        \pcind \theta^{(k)} \sampler \bit^{\lambda} \\
        \pcind \pi^{(k)} \samples{\theta^{(k)}} \hperm{n}, ~ \bm{u}^{(k)} \samples{\theta^{(k)}} \Ftn \\ 
        \pcind r_0 \samples{\theta^{(k)}} \bit^{\lambda}, ~ r_1 \samples{\theta^{(k)}} \bit^{\lambda} \\
        \pcind \com^{(k)}_{0} = \commit{r^{(k)}_0, \, \pi^{(k)} \, || \, \bm{H} (\bm{u}^{(k)})^{\top}} \\
        \pcind \com^{(k)}_{1} = \commit{r^{(k)}_1, \, \pi^{(k)}[\bm{u}^{(k)}]} \\
        \pcind \aux^{(k)} = (\com^{(k)}_0, \com^{(k)}_1) \\
        \pcend \\
        \> \sendmessageright*{(\aux^{(k)})_{k \in \intoneto{M}}} \\[-0.5\baselineskip]
        \> \> \kappa \sampler \intoneto{M} \\[-\baselineskip]
        \> \sendmessageleft*{\kappa} \\[-0.5\baselineskip]
        r^{(\kappa)} \sampler \bit^{\lambda} \\
        \com^{(\kappa)} = \commit{r^{(\kappa)}, \, \pi^{(\kappa)}[\bm{u}^{(\kappa)} + \bm{x}]} \\
        \> \sendmessageright*{(\theta^{(k)})_{k \in \intoneto{M} \setminus \kappa}, \, \com^{(\kappa)}} \\[-0.5\baselineskip]
        \> \> \pcfor k \in \intoneto{M} \setminus \kappa \pcdo \\
        \> \> \pcind \tx{Compute } \bar{\aux}^{(k)} \tx{ from } \theta^{(k)} \\
        \> \> \pcind \pcif (\aux^{(k)} \neq \bar{\aux}^{(k)}) ~ \pcreturn 0 \\
        \> \> \pcend \\[0.5\baselineskip]
        \> \> \alpha \sampler \bit \\[-\baselineskip]
        \> \sendmessageleft*{\alpha} \\[-0.5\baselineskip]
        \pcif \alpha = 0 \pcthen \\
        \pcind z_1 = \pi^{(\kappa)}, ~ \bm{z}_2 = \bm{u}^{(\kappa)} + \bm{x} \\
        \pcind \rsp = (r_0^{(\kappa)}, \, r^{(\kappa)}, \, z_1, \, \bm{z}_2) \\
        \pcend \\[0.5\baselineskip]
        \pcif \alpha = 1 \pcthen \\
        \pcind \bm{z}_3 = \pi^{(\kappa)}[\bm{u}^{(\kappa)}], ~ \bm{z}_4 = \pi^{(\kappa)}[\bm{x}] \\ 
        \pcind \rsp = (r_1^{(\kappa)}, \, r^{(\kappa)}, \, \bm{z}_3, \, \bm{z}_4) \\
        \pcend \\
        \> \sendmessageright*{\rsp} \\[-0.5\baselineskip]
        \> \> \pcif \alpha = 0 \pcthen \\
        \> \> \pcind b_1 \samplen \big( \com^{(\kappa)}_0 = \commit{r_0^{(\kappa)}, \, z_1 \, || \, \bm{H} \bm{z}_2^{\top} - \bm{y}^{\top}} \big) \\ 
        \> \> \pcind b_2 \samplen \big( \com^{(\kappa)} = \commit{r^{(\kappa)}, \, z_1[\bm{z}_2]} \big) \\ 
        \> \> \pcind \pcreturn b_1 \wedge b_2 \\
        \> \> \pcend \\[0.5\baselineskip]
        \> \> \pcif \alpha = 1 \pcthen \\
        \> \> \pcind b_1 \samplen \big( \com^{(\kappa)}_1 = \commit{r_1^{(\kappa)}, \, \bm{z}_3} \big) \\ 
        \> \> \pcind b_2 \samplen \big( \com^{(\kappa)} = \commit{r^{(\kappa)}, \, \bm{z}_3 + \bm{z}_4} \big) \\ 
        \> \> \pcind b_3 \samplen \big( \hw{\bm{z}_4} = \omega \big) \\
        \> \> \pcind \pcreturn b_1 \wedge b_2 \wedge b_3 \\
        \> \> \pcend
      }
    }}
    \caption{5-round HVZK PoK for the $\SD$ problem (without optimization) \label{fig:pok1-5r-nopt}}
  \end{center}
\end{figure}
\vspace{-0.5\baselineskip}


\newpage
\section{PoK 1 (5-round, with optimizations)} \label{app:pok1-5r-opt}

\vspace{-\baselineskip}
\begin{figure}[!ht] 
  \begin{center}
    \resizebox{1\textwidth}{!}{\fbox{
      \pseudocode{%
        \hspace{160pt} \> \> \hspace{160pt} \\[-0.75\baselineskip] 
        %
        \underline{\proverf(\bm{x}, \bm{H}, \bm{y})} \> \> \underline{\verifierf(\bm{H}, \bm{y})} \\
        \xi \sampler \bit^{\lambda} \\
        \pcfor k \in \intoneto{M} \pcdo \\
        \pcind \theta^{(k)} \sampler \bit^{\lambda}, ~ \phi^{(k)} \samples{\theta^{(k)}} \bit^{\lambda}, ~ \psi^{(k)} \samples{\theta^{(k)}} \bit^{\lambda} \\
        \pcind \pi^{(k)} \samples{\phi^{(k)}} \hperm{n}, ~ \bm{v}^{(k)} \samples{\psi^{(k)}} \Ftn, ~ \bm{u}^{(k)} = (\pi^{(k)})^{-1}[\bm{v}^{(k)}] \\
        \pcind r^{(k)}_0 \samples{\phi^{(k)}} \bit^{\lambda}, ~ \com^{(k)}_{0} = \commit{r^{(k)}_0, \, \pi^{(k)} \, || \, \bm{H} (\bm{u}^{(k)})^{\top}} \\
        \pcind r^{(k)}_1 \samples{\psi^{(k)}} \bit^{\lambda}, ~ \com^{(k)}_{1} = \commit{r^{(k)}_1, \, \pi^{(k)}[\bm{u}^{(k)}]} \\
        \pcind \aux^{(k)} = (\com^{(k)}_0, \com^{(k)}_1) \\
        \pcend \\
        r \samples{\xi} \bit{^\lambda}, ~ h = \commit{r, \, (\aux^{(k)})_{k \in \intoneto{M}}} \\[-0.75\baselineskip]
        %
        \> \sendmessageright*{h} \\[-0.5\baselineskip]
        %
        \> \> K \sampler \{ K \subset \{ 1, \cdots, M \}, \, |K| = \tau \} \\[-\baselineskip]
        %
        \> \sendmessageleft*{K} \\[-0.75\baselineskip]
        %
        \pcfor \kappa \in K \pcdo \\
        \pcind r^{(\kappa)} \samples{\xi} \bit^{\lambda}, ~ \com^{(\kappa)} = \commit{r^{(\kappa)}, \, \pi^{(\kappa)}[\bm{u}^{(\kappa)} + \bm{x}]} \\
        \pcend \\
        r' \samples{\xi} \bit^{\lambda}, ~ h' = \commit{r', \, (\com^{(\kappa)})_{\kappa \in K}} \\[-0.75\baselineskip]
        %
        \> \sendmessageright*{h'} \\[-0.5\baselineskip]
        %
        \> \> A \sampler \bit^{\tau} \\[-\baselineskip] 
        %
        \> \sendmessageleft*{A} \\[-0.75\baselineskip]
        %
        \pcfor (\kappa, \alpha) \in K \times A \pcdo \\
        \pcind \pcif \alpha = 0 \pcthen \\
        \pcind \pcind z^{(\kappa)}_1 = \phi^{(\kappa)}, ~ \bm{z}^{(\kappa)}_2 = \bm{u}^{(\kappa)} + \bm{x}, ~ \rsp^{(\kappa)} = (z^{(\kappa)}_1, \, \bm{z}^{(\kappa)}_2, \, \com^{(\kappa)}_1) \\
        \pcind \pcend \\[0.5\baselineskip]
        \pcind \pcif \alpha = 1 \pcthen \\
        \pcind \pcind z^{(\kappa)}_3 = \psi^{(\kappa)}, ~ \bm{z}^{(\kappa)}_4 = \pi^{(\kappa)}[\bm{x}], ~ \rsp^{(\kappa)} = (z^{(\kappa)}_3, \, \bm{z}^{(\kappa)}_4, \, \com^{(\kappa)}_0) \\
        \pcind \pcend \\
        \pcend \\
        \rsp = \big( \xi, (\theta^{(k)})_{k \in \intoneto{M} \setminus K}, ~ (\rsp^{(\kappa)})_{\kappa \in K} \big) \\[-0.75\baselineskip]
        %
        %
        \> \sendmessageright*{\rsp} \\[-1.25\baselineskip]
        %
        \> \> \tx{Compute } \bar{r}, (\bar{r}^{(\kappa)})_{\kappa \in K} \tx{ and } \bar{r'} \tx{ from } \xi \\
        \> \> \pcfor k \in \intoneto{M} \setminus K \pcdo \\
        \> \> \pcind \tx{Compute } \bar{\aux}^{(k)} \tx{ from } \theta^{(k)} \\
        \> \> \pcend \\[0.5\baselineskip]
        \> \> \pcfor (\kappa, \alpha) \in K \times A \pcdo \\
        \> \> \pcind \pcif \alpha = 0 \pcthen \\
        \> \> \pcind \pcind \bar{z}^{(\kappa)}_1 \samples{z^{(\kappa)}_1} \hperm{n}, ~ \bar{r}^{(\kappa)}_0 \samples{z^{(\kappa)}_1} \bit^{\lambda} \\
        \> \> \pcind \pcind \com^{(\kappa)}_0 = \commit{\bar{r}_0^{(\kappa)}, \, \bar{z}^{(\kappa)}_1 \, || \, \bm{H} (\bm{z}^{(\kappa)}_2)^{\top} - \bm{y}^{\top}} \\ 
        \> \> \pcind \pcind \com^{(\kappa)} = \commit{\bar{r}^{(\kappa)}, \, \bar{z}^{(\kappa)}_1[\bm{z}^{(\kappa)}_2]} \\ 
        \> \> \pcind \pcend \\[0.5\baselineskip]
        \> \> \pcind \pcif \alpha = 1 \pcthen \\
        \> \> \pcind \pcind \bar{\bm{z}}^{(\kappa)}_3 \samples{z^{(\kappa)}_3} \Ftn, ~ \bar{r}^{(\kappa)}_1 \samples{z^{(\kappa)}_3} \bit^{\lambda} \\
        \> \> \pcind \pcind \com^{(\kappa)}_1 = \commit{\bar{r}_1^{(\kappa)}, \, \bar{\bm{z}}_3} \\ 
        \> \> \pcind \pcind \com^{(\kappa)} = \commit{\bar{r}^{(\kappa)}, \, \bar{\bm{z}}_3 + \bm{z}_4} \\ 
        \> \> \pcind \pcind b^{(\kappa)}_1 \samplen \big( \hw{\bm{z}^{(\kappa)}_4} = \omega \big) \\
        \> \> \pcind \pcend \\
        \> \> \pcend \\[0.5\baselineskip]
        \> \> b_1 = \bigwedge\nolimits_{\kappa \in K} b^{(\kappa)}_1 \\ 
        \> \> b_2 \samplen \big( h = \commit{\bar{r}, \, (\aux^{(k)})_{k \in \intoneto{M}} } \big)\\
        \> \> b_3 \samplen \big( h' = \commit{\bar{r'}, \, (\com^{(\kappa)})_{\kappa \in K} } \big)\\
        \> \> \pcreturn b_1 \wedge b_2 \wedge b_3
      }
    }}
    \caption{5-round HVZK PoK for the $\SD$ problem (with optimizations) \label{fig:pok1-5r-opt}}
  \end{center}
\end{figure}
\vspace{-\baselineskip}


\newpage
\section{Sig 1 (5-round)} \label{app:sig1-5r}
\begin{figure}[!ht] 
  \begin{center}
    \resizebox{1\textwidth}{!}{\fbox{
      \pseudocode{%
        \hspace{160pt} \> \> \hspace{160pt} \\[-0.75\baselineskip] 
        \underline{\keygen(\lambda)} \\
        \rho_1 \sampler \bit^{\lambda}, ~ \bm{x} \samples{\rho_1} \Ftn \tx{ such that } \hw{\bm{x}} = \omega \\
        \rho_2 \sampler \bit^{\lambda}, ~ \bm{H} \samples{\rho_2} \Ft^{\nmktn}, ~ \bm{y}^{\top} = \bm{H} \bm{x}^{\top} \\
        \pcreturn (\sk, \pk) = (\rho_1, (\rho_2, \bm{y})) \\[0.75\baselineskip]
        \underline{\sign(\sk, \pk, m)} \\
        \xi \sampler \bit^{\lambda} \\
        \pcfor k \in \intoneto{M} \pcdo \\
        \pcind \theta^{(k)} \sampler \bit^{\lambda}, ~ \phi^{(k)} \samples{\theta^{(k)}} \bit^{\lambda}, ~ \psi^{(k)} \samples{\theta^{(k)}} \bit^{\lambda} \\
        \pcind \pi^{(k)} \samples{\phi^{(k)}} \hperm{n}, ~ \bm{v}^{(k)} \samples{\psi^{(k)}} \Ftn, ~ \bm{u}^{(k)} = (\pi^{(k)})^{-1}[\bm{v}^{(k)}] \\
        \pcind r^{(k)}_0 \samples{\phi^{(k)}} \bit^{\lambda}, ~ \com^{(k)}_{0} = \commit{r^{(k)}_0, \, \pi^{(k)} \, || \, \bm{H} (\bm{u}^{(k)})^{\top}} \\
        \pcind r^{(k)}_1 \samples{\psi^{(k)}} \bit^{\lambda}, ~ \com^{(k)}_{1} = \commit{r^{(k)}_1, \, \pi^{(k)}[\bm{u}^{(k)}]} \\
        \pcind \aux^{(k)} = (\com^{(k)}_0, \com^{(k)}_1) \\
        \pcend \\
        r \samples{\xi} \bit{^\lambda}, ~ h = \commit{r, \, (\aux^{(k)})_{k \in \intoneto{M}}} \\
        K \samplen \ro{m \, || \, \pk \, || \, h} \\
        \pcfor \kappa \in K \pcdo \\
        \pcind r^{(\kappa)} \samples{\xi} \bit^{\lambda}, ~ \com^{(\kappa)} = \commit{r^{(\kappa)}, \, \pi^{(\kappa)}[\bm{u}^{(\kappa)} + \bm{x}]} \\
        \pcend \\
        r' \samples{\xi} \bit^{\lambda}, ~ h' = \commit{r', \, (\com^{(\kappa)})_{\kappa \in K}} \\
        A \samplen \ro{m \, || \, \pk \, || \, h \, || \, h'} \\
        \pcfor (\kappa, \alpha) \in K \times A \pcdo \\
        \pcind \pcif \alpha = 0 \pcthen \\
        \pcind \pcind z^{(\kappa)}_1 = \phi^{(\kappa)}, ~ \bm{z}^{(\kappa)}_2 = \bm{u}^{(\kappa)} + \bm{x}, ~ \rsp^{(\kappa)} = (z^{(\kappa)}_1, \, \bm{z}^{(\kappa)}_2, \, \com^{(\kappa)}_1) \\
        \pcind \pcend \\[0.5\baselineskip]
        \pcind \pcif \alpha = 1 \pcthen \\
        \pcind \pcind z^{(\kappa)}_3 = \psi^{(\kappa)}, ~ \bm{z}^{(\kappa)}_4 = \pi^{(\kappa)}[\bm{x}], ~ \rsp^{(\kappa)} = (z^{(\kappa)}_3, \, \bm{z}^{(\kappa)}_4, \, \com^{(\kappa)}_0) \\
        \pcind \pcend \\
        \pcend \\
        \rsp = \big( \xi, (\theta^{(k)})_{k \in \intoneto{M} \setminus K}, ~ (\rsp^{(\kappa)})_{\kappa \in K} \big) \\
        \pcreturn \sigma = (h, \, h', \, \rsp)
        }
    }}
    \caption{$\keygen$ and $\sign$ algorithms for Sig 1 (5-round)} \label{fig:signature-1-5r-keygen-sign}
  \end{center}
\end{figure}

\newpage

\begin{figure}[!ht] 
  \begin{center}
    \resizebox{1\textwidth}{!}{\fbox{
      \pseudocode{%
        \hspace{160pt} \> \> \hspace{60pt} \\[-0.75\baselineskip] 
        \underline{\verif(\pk, \sigma, m)} \\
        \tx{Parse } \sigma \tx{ as } \sigma = (h, h', \rsp) \tx{ and } \rsp \tx{ as } \rsp = \big( \xi, (\theta^{(k)})_{k \in \intoneto{M} \setminus K}, ~ (\rsp^{(\kappa)})_{\kappa \in K} \big) \\
        K \samplen \ro{m \, || \, \pk \, || \, h} \\
        A \samplen \ro{m \, || \, \pk \, || \, h \, || \, h'} \\[0.5\baselineskip]
        \tx{Compute } \bar{r}, (\bar{r}^{(\kappa)})_{\kappa \in K} \tx{ and } \bar{r'} \tx{ from } \xi \\
        \pcfor k \in \intoneto{M} \setminus K \pcdo \\
        \pcind \tx{Compute } \bar{\aux}^{(k)} \tx{ from } \theta^{(k)} \\
        \pcend \\[0.5\baselineskip]
        \pcfor (\kappa, \alpha) \in K \times A \pcdo \\
        \pcind \pcif \alpha = 0 \pcthen \\
        \pcind \pcind \bar{z}^{(\kappa)}_1 \samples{z^{(\kappa)}_1} \hperm{n}, ~ \bar{r}^{(\kappa)}_0 \samples{z^{(\kappa)}_1} \bit^{\lambda} \\
        \pcind \pcind \com^{(\kappa)}_0 = \commit{\bar{r}_0^{(\kappa)}, \, \bar{z}^{(\kappa)}_1 \, || \, \bm{H} (\bm{z}^{(\kappa)}_2)^{\top} - \bm{y}^{\top}} \\ 
        \pcind \pcind \com^{(\kappa)} = \commit{\bar{r}^{(\kappa)}, \, \bar{z}^{(\kappa)}_1[\bm{z}^{(\kappa)}_2]} \\ 
        \pcind \pcend \\[0.5\baselineskip]
        \pcind \pcif \alpha = 1 \pcthen \\
        \pcind \pcind \bar{\bm{z}}^{(\kappa)}_3 \samples{z^{(\kappa)}_3} \Ftn, ~ \bar{r}^{(\kappa)}_1 \samples{z^{(\kappa)}_3} \bit^{\lambda} \\
        \pcind \pcind \com^{(\kappa)}_1 = \commit{\bar{r}_1^{(\kappa)}, \, \bar{\bm{z}}_3} \\ 
        \pcind \pcind \com^{(\kappa)} = \commit{\bar{r}^{(\kappa)}, \, \bar{\bm{z}}_3 + \bm{z}_4} \\ 
        \pcind \pcind b^{(\kappa)}_1 \samplen \big( \hw{\bm{z}^{(\kappa)}_4} = \omega \big) \\
        \pcind \pcend \\
        \pcend \\[0.5\baselineskip]
        b_1 = \bigwedge\nolimits_{\kappa \in K} b^{(\kappa)}_1 \\ 
        b_2 \samplen \big( h = \commit{\bar{r}, \, (\aux^{(k)})_{k \in \intoneto{M}} } \big)\\
        b_3 \samplen \big( h' = \commit{\bar{r'}, \, (\com^{(\kappa)})_{\kappa \in K} } \big)\\
        \pcreturn b_1 \wedge b_2 \wedge b_3
      }
    }}
    \caption{$\verif$ algorithm for Sig 1 (5-round)} \label{fig:signature-pok1-5r-verify}
  \end{center}
\end{figure}


\newpage
\section{PoK 2 (3-round, without optimization)} \label{app:pok2-3r-nopt}

\begin{figure}[!ht] 
  \begin{center}
    \resizebox{1\textwidth}{!}{\fbox{
      \pseudocode{%
        \hspace{160pt} \> \> \hspace{160pt} \\[-0.75\baselineskip] 
        \underline{\proverf(\bm{x}, \bm{e}, \bm{G}, \bm{y})} \> \> \underline{\verifierf(\bm{G}, \bm{y})} \\
        \pcfor k \in \intoneto{M} \pcdo \\
        \pcind \theta^{(k)} \sampler \bit^{\lambda} \\
        \pcind \pcfor i \in \intoneto{N} \pcdo \\
        \pcind \pcind \theta^{(k)}_{i} \samples{\theta^{(k)}} \bit^{\lambda}, ~ \pi^{(k)}_i \samples{\theta^{(k)}_i} \hperm{n} \\
        \pcind \pcind \bm{u}^{(k)}_i \samples{\theta^{(k)}_i} \Ftk, ~ \bm{v}^{(k)}_i \samples{\theta^{(k)}_i} \Ftn \\
        \pcind \pcend \\
        \pcind \bm{u}^{(k)} = \sum\nolimits_{i \in \intoneto{N}} \bm{u}^{(k)}_i \\ 
        \pcind \pcfor i \in \intoneto{N} \pcdo \\
        \pcind \pcind r^{(k)}_{i} \samples{\theta_i} \bit^{\lambda}, ~ \theta^{(k)}_{i^*} = (\theta^{(k)}_j)_{j \in \intoneto{N} \setminus i} \\ 
        \pcind \pcind \com^{(k)}_{i} = \commit{r^{(k)}_i, \, \pi^{(k)}_i[\bm{y} + \bm{u}^{(k)} \bm{G}] + \bm{v}^{(k)}_i \, || \, \theta^{(k)}_{i^*}} \\
        \pcind \pcend \\
        \pcind \aux^{(k)} = (\com^{(k)}_i)_{i \in \intoneto{N}} \\[\baselineskip]
        \pcind \pcfor i \in [1, N] \pcdo \\
        \pcind \pcind \bm{s}^{(k)}_i = \pi^{(k)}_i[(\bm{u}^{(k)} + \bm{x}) \bm{G}] + \bm{v}^{(k)}_i \\ 
        \pcind \pcend \\
        \pcind r^{(k)} \sampler \bit^{\lambda}, ~ \bm{s}^{(k)} = (\bm{s}^{(k)}_i)_{i \in \intoneto{N}} \\ 
        \pcind \com^{(k)} = \commit{r^{(k)}, \, \bm{u}^{(k)} + \bm{x} \, || \, \bm{s}^{(k)}} \\
        \pcend \\
        \> \sendmessageright*{(\aux^{(k)}, \com^{(k)})_{k \in \intoneto{M}}} \\[-0.5\baselineskip]
        \> \> \kappa \sampler \intoneto{M} \\ 
        \> \> \alpha \sampler \bit \\[-\baselineskip]
        \> \sendmessageleft*{(\kappa, \alpha)} \\[-0.5\baselineskip]
        \bm{z}_1 = \bm{u}^{(\kappa)}_\alpha + \bm{x}, ~ \bm{z}_2 = \pi^{(\kappa)}_{\alpha}[\bm{e}], ~ \bm{z}_3 = \bm{s}^{(\kappa)}_\alpha, ~ z_4 = \theta^{(\kappa)}_{\alpha^*} \\
        \rsp = (r_{\alpha}^{(\kappa)}, r^{(\kappa)}, \bm{z}_1, \bm{z}_2, \bm{z}_3, z_4) \\[-\baselineskip]
        \> \sendmessageright*{(\theta^{(k)})_{k \in \intoneto{M} \setminus \kappa}, \, \rsp} \\[-\baselineskip]
        \> \> \pcfor k \in \intoneto{M} \setminus \kappa \pcdo \\
        \> \> \pcind \tx{Compute } \bar{\aux}^{(k)} \tx{ from } \theta^{(k)} \\
        \> \> \pcind \pcif (\aux^{(k)} \neq \bar{\aux}^{(k)}) ~ \pcreturn 0 \\
        \> \> \pcend \\[0.75\baselineskip]
        \> \> \tx{Compute } (\bar{\pi}_i^{(\kappa)}, \bar{\bm{u}}_i^{(\kappa)}, \bar{\bm{v}}_i^{(\kappa)})_{i \in \intoneto{N} \setminus \alpha} \tx{ from } z_4 \\
        \> \> \bar{\bm{z}}_1 = \bm{z}_1 + \sum\nolimits_{i \in \intoneto{N} \setminus \alpha} \bar{\bm{u}}^{(\kappa)}_i \\ 
        \> \> \pcfor i \in [1, N] \setminus \alpha \pcdo \\
        \> \> \pcind \bar{\bm{s}}^{(\kappa)}_i = \bar{\pi}^{(\kappa)}_i[\bar{\bm{z}}^{(\kappa)}_{1}\bm{G}] + \bar{\bm{v}}^{(\kappa)}_i \\
        \> \> \pcend \\
        \> \> \bar{\bm{s}}^{(\kappa)} = (\bar{\bm{s}}^{(\kappa)}_1, \, \cdots, \, \bar{\bm{s}}^{(\kappa)}_{\alpha - 1}, \, \bm{z}_3, \, \bar{\bm{s}}^{(\kappa)}_{\alpha + 1}, \, \cdots, \, \bar{\bm{s}}^{(\kappa)}_N) \\[0.75\baselineskip]
        \> \> b_1 \samplen \big( \com^{(\kappa)} = \commit{r^{(\kappa)}, \, \bar{\bm{z}}_1 \, || \, \bar{\bm{s}}^{(\kappa)}} \big) \\ 
        \> \> b_2 \samplen \big( \com^{(\kappa)}_\alpha = \commit{r^{(\kappa)}_{\alpha}, \, \bm{z}_3 + \bm{z}_2 \, || \, z_4} \big) \\
        \> \> b_3 \samplen \big( \hw{\bm{z}_2} = \omega \big) \\
        \> \> \pcreturn b_1 \wedge b_2 \wedge b_3
      }
    }}
    \caption{3-round ZK PoK for the $\GSD$ problem (without optimization) \label{fig:pok2-3r-nopt}}
  \end{center}
\end{figure}


\newpage
\section{PoK 2 (3-round, with optimizations)} \label{app:pok2-3r-opt}

\vspace{-\baselineskip}
\begin{figure}[!ht]
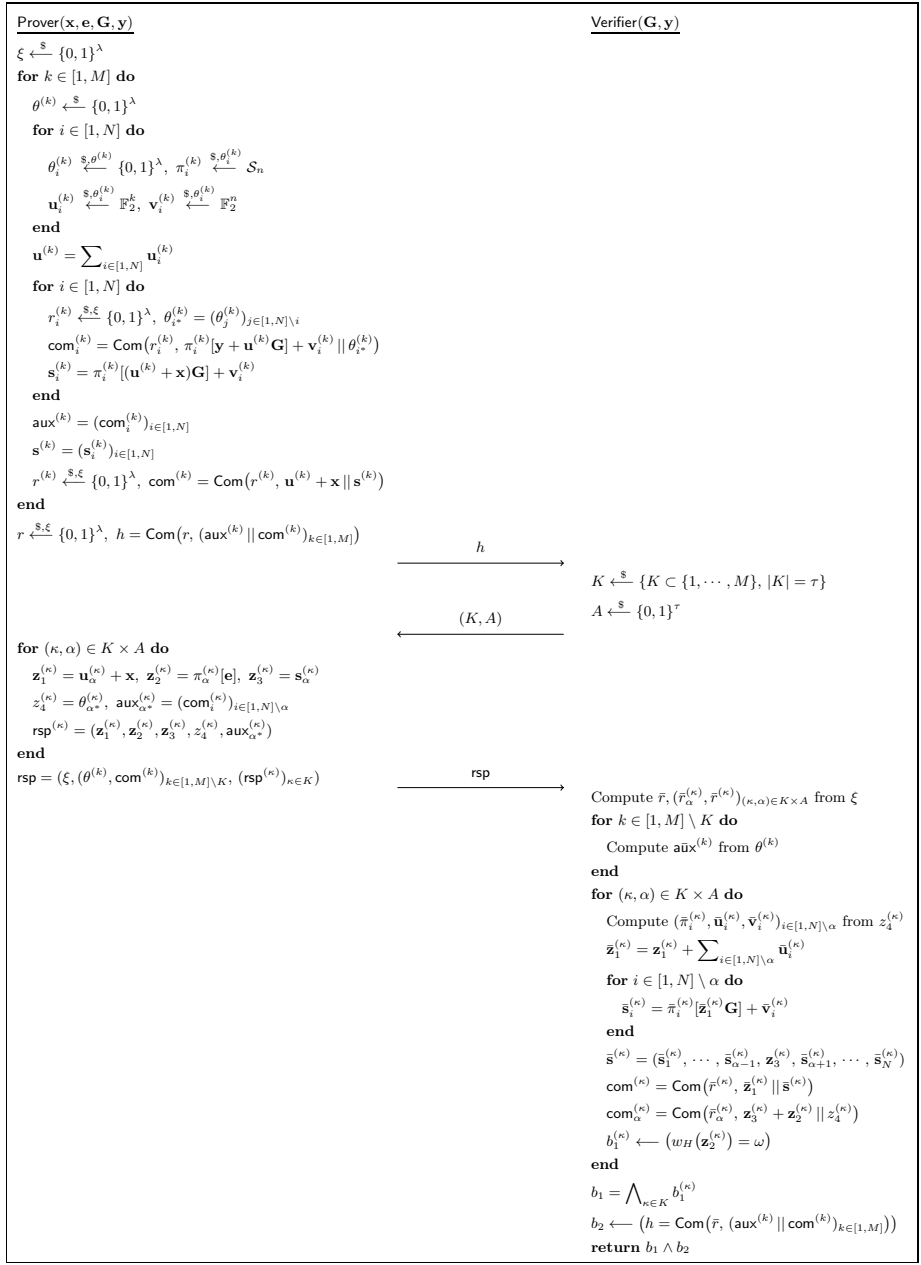
 
  \begin{center}
    \resizebox{1\textwidth}{!}{\fbox{
      \pseudocode{%
        \hspace{160pt} \> \> \hspace{160pt} \\[-0.75\baselineskip] 
        %
        \underline{\proverf(\bm{x}, \bm{e}, \bm{G}, \bm{y})} \> \> \underline{\verifierf(\bm{G}, \bm{y})} \\
        \xi \sampler \bit^{\lambda} \\
        \pcfor k \in \intoneto{M} \pcdo \\
        \pcind \theta^{(k)} \sampler \bit^{\lambda} \\
        \pcind \pcfor i \in \intoneto{N} \pcdo \\
        \pcind \pcind \theta^{(k)}_{i} \samples{\theta^{(k)}} \bit^{\lambda}, ~ \pi^{(k)}_i \samples{\theta^{(k)}_i} \hperm{n} \\
        \pcind \pcind \bm{u}^{(k)}_i \samples{\theta^{(k)}_i} \Ftk, ~ \bm{v}^{(k)}_i \samples{\theta^{(k)}_i} \Ftn \\
        \pcind \pcend \\
        \pcind \bm{u}^{(k)} = \sum\nolimits_{i \in \intoneto{N}} \bm{u}^{(k)}_i \\ 
        \pcind \pcfor i \in \intoneto{N} \pcdo \\
        \pcind \pcind r^{(k)}_{i} \samples{\xi} \bit^{\lambda}, ~ \theta^{(k)}_{i^*} = (\theta^{(k)}_{j})_{j \in \intoneto{N} \setminus i} \\
        \pcind \pcind \com^{(k)}_{i} = \commit{r^{(k)}_{i}, \, \pi^{(k)}_i[\bm{y} + \bm{u}^{(k)} \bm{G}] + \bm{v}^{(k)}_i \, || \, \theta^{(k)}_{i^*}} \\
        \pcind \pcind \bm{s}^{(k)}_i = \pi^{(k)}_i[(\bm{u}^{(k)} + \bm{x}) \bm{G}] + \bm{v}^{(k)}_i \\ 
        \pcind \pcend \\
        \pcind \aux^{(k)} = (\com^{(k)}_{i})_{i \in \intoneto{N}} \\
        \pcind \bm{s}^{(k)} = (\bm{s}^{(k)}_i)_{i \in \intoneto{N}} \\
        \pcind r^{(k)} \samples{\xi} \bit^{\lambda}, ~ \com^{(k)} = \commit{r^{(k)}, \, \bm{u}^{(k)} + \bm{x} \, || \, \bm{s}^{(k)}} \\
        \pcend \\
        r \samples{\xi} \bit{^\lambda}, ~ h = \commit{r, \, (\aux^{(k)} \, || \, \com^{(k)})_{k \in \intoneto{M}}} \\[-0.75\baselineskip]
        \> \sendmessageright*{h} \\[-0.5\baselineskip]
        \> \> K \sampler \{ K \subset \{1, \cdots, M \}, \, |K| = \tau \} \\ 
        \> \> A \sampler \bit^{\tau} \\[-\baselineskip]
        \> \sendmessageleft*{(K, A)} \\[-0.5\baselineskip]
        \pcfor (\kappa, \alpha) \in K \times A \pcdo \\
        \pcind \bm{z}^{(\kappa)}_1 = \bm{u}^{(\kappa)}_\alpha + \bm{x}, ~ \bm{z}^{(\kappa)}_2 = \pi^{(\kappa)}_{\alpha}[\bm{e}], ~ \bm{z}^{(\kappa)}_3 = \bm{s}^{(\kappa)}_\alpha \\
        \pcind z^{(\kappa)}_4 = \theta^{(\kappa)}_{\alpha^*}, ~ \aux^{(\kappa)}_{\alpha^*} = (\com^{(\kappa)}_{i})_{i \in \intoneto{N} \setminus \alpha} \\
        \pcind \rsp^{(\kappa)} = (\bm{z}^{(\kappa)}_1, \bm{z}^{(\kappa)}_2, \bm{z}^{(\kappa)}_3, z^{(\kappa)}_4, \aux^{(\kappa)}_{\alpha^*}) \\
        \pcend \\[-0.5\baselineskip]
        \rsp = (\xi, (\theta^{(k)}, \com^{(k)})_{k \in \intoneto{M} \setminus K}, \, (\rsp^{(\kappa)})_{\kappa \in K})
        \> \sendmessageright*{\rsp} \\[-0.75\baselineskip]
        \> \> \tx{Compute } \bar{r}, (\bar{r}^{(\kappa)}_{\alpha}, \bar{r}^{(\kappa)})_{(\kappa, \alpha) \in K \times A} \tx{ from } \xi \\
        \> \> \pcfor k \in \intoneto{M} \setminus K \pcdo \\
        \> \> \pcind \tx{Compute } \bar{\aux}^{(k)} \tx{ from } \theta^{(k)} \\
        \> \> \pcend \\
        \> \> \pcfor (\kappa, \alpha) \in K \times A \pcdo \\
        \> \> \pcind \tx{Compute } (\bar{\pi}_i^{(\kappa)}, \bar{\bm{u}}_i^{(\kappa)}, \bar{\bm{v}}_i^{(\kappa)})_{i \in \intoneto{N} \setminus \alpha} \tx{ from } z^{(\kappa)}_4 \\
        \> \> \pcind \bar{\bm{z}}^{(\kappa)}_1 = \bm{z}^{(\kappa)}_1 + \sum\nolimits_{i \in \intoneto{N} \setminus \alpha} \bar{\bm{u}}^{(\kappa)}_i \\
        \> \> \pcind \pcfor i \in [1, N] \setminus \alpha \pcdo \\
        \> \> \pcind \pcind \bar{\bm{s}}^{(\kappa)}_i = \bar{\pi}^{(\kappa)}_i[\bar{\bm{z}}^{(\kappa)}_1 \bm{G}] + \bar{\bm{v}}^{(\kappa)}_i \\
        \> \> \pcind \pcend \\
        \> \> \pcind \bar{\bm{s}}^{(\kappa)} = (\bar{\bm{s}}^{(\kappa)}_1, \, \cdots, \, \bar{\bm{s}}^{(\kappa)}_{\alpha - 1}, \, \bm{z}^{(\kappa)}_3, \, \bar{\bm{s}}^{(\kappa)}_{\alpha + 1}, \, \cdots, \, \bar{\bm{s}}^{(\kappa)}_N) \\
        \> \> \pcind \com^{(\kappa)} = \commit{\bar{r}^{(\kappa)}, \, \bar{\bm{z}}^{(\kappa)}_1 \, || \, \bar{\bm{s}}^{(\kappa)}} \\ 
        \> \> \pcind \com^{(\kappa)}_{\alpha} = \commit{\bar{r}^{(\kappa)}_{\alpha}, \, \bm{z}^{(\kappa)}_3 + \bm{z}^{(\kappa)}_2 \, || \, z^{(\kappa)}_4} \\
        \> \> \pcind b_1^{(\kappa)} \samplen \big( \hw{\bm{z}_2^{(\kappa)}} = \omega \big) \\
        \> \> \pcend \\
        \> \> b_1 = \bigwedge\nolimits_{\kappa \in K} b_1^{(\kappa)} \\ 
        \> \> b_2 \samplen \big( h = \commit{\bar{r}, \, (\aux^{(k)} \, || \, \com^{(k)})_{k \in \intoneto{M}}} \big) \\
        \> \> \pcreturn b_1 \wedge b_2
      }
    }}
    \vspace{-0.25\baselineskip}
    \caption{ZK PoK for the $\GSD$ problem over $\Ft$ (with optimizations) \label{fig:pok2-3r-opt}}
  \end{center}
\end{figure}


\newpage
\section{Sig 2} \label{app:sig2}
\begin{figure}[!ht] 
  \begin{center}
    \resizebox{1\textwidth}{!}{\fbox{
      \pseudocode{%
        \hspace{160pt} \> \> \hspace{160pt} \\[-0.75\baselineskip] 
        \underline{\keygen(\lambda)} \\
        \rho_1 \sampler \bit^{\lambda}, ~ \bm{x} \samples{\rho_1} \Ftk, ~ \bm{e} \samples{\rho_1} \Ftn \tx{ such that } \hw{\bm{e}} = \omega \\
        \rho_2 \sampler \bit^{\lambda}, ~ \bm{G} \samples{\rho_2} \Ft^{\ktn}, ~ \bm{y} = \bm{x} \bm{G} + \bm{e} \\
        \pcreturn (\sk, \pk) = (\rho_1, (\rho_2, \bm{y})) \\[0.75\baselineskip]
        \underline{\sign(\sk, \pk, m)} \\
        \xi \sampler \bit^{\lambda} \\
        \pcfor k \in \intoneto{M} \pcdo \\
        \pcind \theta^{(k)} \sampler \bit^{\lambda} \\
        \pcind \pcfor i \in \intoneto{N} \pcdo \\
        \pcind \pcind \theta^{(k)}_{i} \samples{\theta^{(k)}} \bit^{\lambda}, ~ \pi^{(k)}_i \samples{\theta^{(k)}_i} \hperm{n} \\
        \pcind \pcind \bm{u}^{(k)}_i \samples{\theta^{(k)}_i} \Ftk, ~ \bm{v}^{(k)}_i \samples{\theta^{(k)}_i} \Ftn \\
        \pcind \pcend \\
        \pcind \bm{u}^{(k)} = \sum\nolimits_{i \in \intoneto{N}} \bm{u}^{(k)}_i \\ 
        \pcind \pcfor i \in \intoneto{N} \pcdo \\
        \pcind \pcind r^{(k)}_{i} \samples{\xi} \bit^{\lambda}, ~ \theta^{(k)}_{i^*} = (\theta^{(k)}_{j})_{j \in \intoneto{N} \setminus i} \\
        \pcind \pcind \com^{(k)}_{i} = \commit{r^{(k)}_{i}, \, \pi^{(k)}_i[\bm{y} + \bm{u}^{(k)} \bm{G}] + \bm{v}^{(k)}_i \, || \, \theta^{(k)}_{i^*}} \\
        \pcind \pcind \bm{s}^{(k)}_i = \pi^{(k)}_i[(\bm{u}^{(k)} + \bm{x}) \bm{G}] + \bm{v}^{(k)}_i \\ 
        \pcind \pcend \\
        \pcind \aux^{(k)} = (\com^{(k)}_{i})_{i \in \intoneto{N}} \\
        \pcind \bm{s}^{(k)} = (\bm{s}^{(k)}_i)_{i \in \intoneto{N}} \\
        \pcind r^{(k)} \samples{\xi} \bit^{\lambda}, ~ \com^{(k)} = \commit{r^{(k)}, \, \bm{u}^{(k)} + \bm{x} \, || \, \bm{s}^{(k)}} \\
        \pcend \\
        r \samples{\xi} \bit{^\lambda}, ~ h = \commit{r, \, (\aux^{(k)} \, || \, \com^{(k)})_{k \in \intoneto{M}}} \\
        (K, A) \samplen \ro{m \, || \, \pk \, || \, h} \\
        \pcfor (\kappa, \alpha) \in K \times A \pcdo \\
        \pcind \bm{z}^{(\kappa)}_1 = \bm{u}^{(\kappa)}_\alpha + \bm{x}, ~ \bm{z}^{(\kappa)}_2 = \pi^{(\kappa)}_{\alpha}[\bm{e}], ~ \bm{z}^{(\kappa)}_3 = \bm{s}^{(\kappa)}_\alpha \\
        \pcind z^{(\kappa)}_4 = \theta^{(\kappa)}_{\alpha^*}, ~ \aux^{(\kappa)}_{\alpha^*} = (\com^{(\kappa)}_{i})_{i \in \intoneto{N} \setminus \alpha} \\
        \pcind \rsp^{(\kappa)} = (\bm{z}^{(\kappa)}_1, \bm{z}^{(\kappa)}_2, \bm{z}^{(\kappa)}_3, z^{(\kappa)}_4, \aux^{(\kappa)}_{\alpha^*}) \\
        \pcend \\
        \rsp = (\xi, (\theta^{(k)}, \com^{(k)})_{k \in \intoneto{M} \setminus K}, \, (\rsp^{(\kappa)})_{\kappa \in K}) \\
        \pcreturn \sigma = (h, \rsp)
        }
    }}
    \caption{$\keygen$ and $\sign$ algorithms for Sig 2} \label{fig:signature-2-keygen-sign}
  \end{center}
\end{figure}

\newpage

\begin{figure}[!ht] 
  \begin{center}
    \resizebox{1\textwidth}{!}{\fbox{
      \pseudocode{%
        \hspace{160pt} \> \> \hspace{60pt} \\[-0.75\baselineskip] 
        \underline{\verif(\pk, \sigma, m)} \\
        \tx{Parse } \sigma \tx{ as } \sigma = (h, \rsp) \tx{ and } \rsp \tx{ as } \rsp = (\xi, (\theta^{(k)}, \com^{(k)})_{k \in \intoneto{M} \setminus K}, \, (\rsp^{(\kappa)})_{\kappa \in K}) \\
        (K, \, A) \samplen \ro{m \, || \, \pk \, || \, h} \\[0.5\baselineskip]
        \tx{Compute } \bar{r}, (\bar{r}^{(\kappa)}_{\alpha}, \bar{r}^{(\kappa)})_{(\kappa, \alpha) \in K \times A} \tx{ from } \xi \\
        \pcfor k \in \intoneto{M} \setminus K \pcdo \\
        \pcind \tx{Compute } \bar{\aux}^{(k)} \tx{ from } \theta^{(k)} \\
        \pcend \\
        \pcfor (\kappa, \alpha) \in K \times A \pcdo \\
        \pcind \tx{Compute } (\bar{\pi}_i^{(\kappa)}, \bar{\bm{u}}_i^{(\kappa)}, \bar{\bm{v}}_i^{(\kappa)})_{i \in \intoneto{N} \setminus \alpha} \tx{ from } z^{(\kappa)}_4 \\
        \pcind \bar{\bm{z}}^{(\kappa)}_1 = \bm{z}^{(\kappa)}_1 + \sum\nolimits_{i \in \intoneto{N} \setminus \alpha} \bar{\bm{u}}^{(\kappa)}_i \\
        \pcind \pcfor i \in [1, N] \setminus \alpha \pcdo \\
        \pcind \pcind \bar{\bm{s}}^{(\kappa)}_i = \bar{\pi}^{(\kappa)}_i[\bar{\bm{z}}^{(\kappa)}_1 \bm{G}] + \bar{\bm{v}}^{(\kappa)}_i \\
        \pcind \pcend \\
        \pcind \bar{\bm{s}}^{(\kappa)} = (\bar{\bm{s}}^{(\kappa)}_1, \, \cdots, \, \bar{\bm{s}}^{(\kappa)}_{\alpha - 1}, \, \bm{z}^{(\kappa)}_3, \, \bar{\bm{s}}^{(\kappa)}_{\alpha + 1}, \, \cdots, \, \bar{\bm{s}}^{(\kappa)}_N) \\
        \pcind \com^{(\kappa)} = \commit{\bar{r}^{(\kappa)}, \, \bar{\bm{z}}^{(\kappa)}_1 \, || \, \bar{\bm{s}}^{(\kappa)}} \\ 
        \pcind \com^{(\kappa)}_{\alpha} = \commit{\bar{r}^{(\kappa)}_{\alpha}, \, \bm{z}^{(\kappa)}_3 + \bm{z}^{(\kappa)}_2 \, || \, z^{(\kappa)}_4} \\
        \pcind b_1^{(\kappa)} \samplen \big( \hw{\bm{z}_2^{(\kappa)}} = \omega \big) \\
        \pcend \\
        b_1 = \bigwedge\nolimits_{\kappa \in K} b_1^{(\kappa)} \\ 
        b_2 \samplen \big( h = \commit{\bar{r}, \, (\aux^{(k)} \, || \, \com^{(k)})_{k \in \intoneto{M}}} \big) \\
        \pcreturn b_1 \wedge b_2
      }
    }}
    \caption{$\verif$ algorithm for Sig 2} \label{fig:signature-2-verify}
  \end{center}
\end{figure}


\newpage
\section{PoK 3 (3-round, without optimization)} \label{app:pok3-3r-nopt}

\begin{figure}[!ht] 
  \begin{center}
    \resizebox{1\textwidth}{!}{\fbox{
      \pseudocode{%
        \hspace{160pt} \> \> \hspace{160pt} \\[-0.75\baselineskip] 
        \underline{\proverf(\bm{x}, \bm{e}, \bm{G}, \bm{y})} \> \> \underline{\verifierf(\bm{G}, \bm{y})} \\
        \pcfor k \in \intoneto{M} \pcdo \\
        \pcind \theta^{(k)} \sampler \bit^{\lambda} \\
        \pcind \pcfor i \in \intoneto{N} \pcdo \\
        \pcind \pcind \theta^{(k)}_{i} \samples{\theta^{(k)}} \bit^{\lambda}, ~ \pi^{(k)}_i \samples{\theta^{(k)}_i} \hperm{n} \\
        \pcind \pcind \bm{u}^{(k)}_i \samples{\theta^{(k)}_i} \Ftk, ~ \bm{v}^{(k)}_i \samples{\theta^{(k)}_i} \Ftn \\
        \pcind \pcend \\
        \pcind \pi^{(k)} = \pi^{(k)}_N \circ \cdots \circ \pi^{(k)}_1 \\
        \pcind \bm{u}^{(k)} = \sum\nolimits_{i \in \intoneto{N}} \bm{u}^{(k)}_i \\ 
        \pcind \bm{v}^{(k)} = \bm{v}^{(k)}_N + \sum\nolimits_{i \in \intoneto{N - 1}} \pi^{(k)}_N \circ \cdots \circ \pi^{(k)}_{i + 1}[\bm{v}^{(k)}_i] \\
        \pcind \bm{s}^{(k)}_0 = (\bm{u}^{(k)} + \bm{x}) \bm{G} \\
        \pcind \pcfor i \in \intoneto{N} \pcdo \\
        \pcind \pcind r^{(k)}_{i} \samples{\theta^{(k)}} \bit^{\lambda}, ~ \theta^{(k)}_{i^*} = (\theta^{(k)}_j)_{j \in \intoneto{N} \setminus i} \\ 
        \pcind \pcind \bm{s}^{(k)}_i = \pi^{(k)}_i[\bm{s}^{(k)}_{i - 1}] + \bm{v}^{(k)}_i \\ 
        \pcind \pcind \com^{(k)}_{i} = \commit{r^{(k)}_i, \, \pi^{(k)}[\bm{y} + \bm{u}^{(k)} \bm{G}] + \bm{v}^{(k)} \, || \, \theta^{(k)}_{i^*}} \\
        \pcind \pcend \\
        \pcind \aux^{(k)} = (\com^{(k)}_i)_{i \in \intoneto{N}} \\
        \pcind r^{(k)} \sampler \bit^{\lambda}, ~ \bm{s}^{(k)} = (\bm{s}^{(k)}_i)_{i \in \intoneto{N}} \\
        \pcind \com^{(k)} = \commit{r^{(k)}, \, \bm{u}^{(k)} + \bm{x} \, || \, \bm{s}^{(k)}}  \\
        \pcend \\
        \> \sendmessageright*{(\aux^{(k)}, \com^{(k)})_{k \in \intoneto{M}}} \\[-0.5\baselineskip]
        \> \> \kappa \sampler \intoneto{M} \\ 
        \> \> \alpha \sampler \bit \\[-\baselineskip]
        \> \sendmessageleft*{(\kappa, \alpha)} \\[-0.5\baselineskip]
        \bm{z}_1 = \bm{u}^{(\kappa)}_\alpha + \bm{x}, ~ \bm{z}_2 = \pi^{(\kappa)}[\bm{e}] \\
        \bm{z}_3 = \bm{s}^{(\kappa)}_\alpha, ~ z_4 = \theta^{(\kappa)}_{\alpha^*} \\
        \rsp = (r^{(\kappa)}_{\alpha}, r^{(\kappa)}, \bm{z}_1, \bm{z}_2, \bm{z}_3, z_4) \\[-\baselineskip]
        \> \sendmessageright*{(\theta^{(k)})_{k \in \intoneto{M} \setminus \kappa}, \, \rsp} \\[-\baselineskip]
        \> \> \pcfor k \in \intoneto{M} \setminus \kappa \pcdo \\
        \> \> \pcind \tx{Compute } \bar{\aux}^{(k)} \tx{ from } \theta^{(k)} \\
        \> \> \pcind \pcif (\aux^{(k)} \neq \bar{\aux}^{(k)}) ~ \pcreturn 0 \\
        \> \> \pcend \\[0.75\baselineskip]
        \> \> \tx{Compute } (\bar{\pi}_i, \bar{\bm{v}}_i)_{i \in \intoneto{N} \setminus \alpha} \tx{ from } z_4 \\
        \> \> \bar{\bm{z}}_1 = \bm{z}_1 + \sum\nolimits_{i \in \intoneto{N} \setminus \alpha} \bm{u}^{(\kappa)}_i \\ 
        \> \> \bar{\bm{s}}_0 = \bar{\bm{z}}_1 \bm{G} \\
        \> \> \pcfor i \in [1, N] \setminus \alpha \pcdo \\
        \> \> \pcind \bar{\bm{s}}^{(\kappa)}_i = \bar{\pi}^{(\kappa)}_i[\bar{\bm{s}}^{(\kappa)}_{i - 1}] + \bar{\bm{v}}^{(\kappa)}_i \\
        \> \> \pcend \\
        \> \> \bar{\bm{s}}^{(\kappa)} = (\bar{\bm{s}}^{(\kappa)}_1, \, \cdots, \, \bar{\bm{s}}^{(\kappa)}_{\alpha - 1}, \, \bm{z}_3, \, \bar{\bm{s}}^{(\kappa)}_{\alpha + 1}, \, \cdots, \, \bar{\bm{s}}^{(\kappa)}_N) \\[0.75\baselineskip]
        \> \> b_1 \samplen \big( \com^{(\kappa)} = \commit{r^{(\kappa)}, \, \bar{\bm{z}}_1 \, || \, \bar{\bm{s}}^{(\kappa)}} \big) \\ 
        \> \> b_2 \samplen \big( \com^{(\kappa)}_\alpha = \commit{r^{(\kappa)}_{\alpha}, \, \bar{\bm{s}}^{(\kappa)}_N + \bm{z}_2 \, || \, z_4} \big) \\
        \> \> b_3 \samplen \big( \hw{\bm{z}_2} = \omega \big)
      }
    }}
    \caption{3-round HVZK PoK for the $\GSD$ problem (without optimization) \label{fig:pok3-3r-nopt}}
  \end{center}
\end{figure}


\newpage
\section{PoK 3 (3-round, with optimizations)} \label{app:pok3-3r-opt}

\vspace{-\baselineskip}
\begin{figure}[!ht] 
  \begin{center}
    \resizebox{1\textwidth}{!}{\fbox{
      \pseudocode{%
        \hspace{160pt} \> \> \hspace{160pt} \\[-0.75\baselineskip] 
        \underline{\proverf(\bm{x}, \bm{e}, \bm{G}, \bm{y})} \> \> \underline{\verifierf(\bm{G}, \bm{y})} \\
        \xi \sampler \bit^{\lambda} \\ 
        \pcfor k \in \intoneto{M} \pcdo \\
        \pcind \theta^{(k)} \sampler \bit^{\lambda} \\
        \pcind \pcfor i \in \intoneto{N} \pcdo \\
        \pcind \pcind \theta^{(k)}_{i} \samples{\theta^{(k)}} \bit^{\lambda}, ~ \pi^{(k)}_i \samples{\theta^{(k)}_i} \hperm{n} \\
        \pcind \pcind \bm{u}^{(k)}_i \samples{\theta^{(k)}_i} \Ftk, ~ \bm{v}^{(k)}_i \samples{\theta^{(k)}_i} \Ftn \\
        \pcind \pcend \\
        \pcind \pi^{(k)} = \pi^{(k)}_N \circ \cdots \circ \pi^{(k)}_1, ~ \bm{u}^{(k)} = \sum\nolimits_{i \in \intoneto{N}} \bm{u}^{(k)}_i \\ 
        \pcind \bm{v}^{(k)} = \bm{v}^{(k)}_N + \sum\nolimits_{i \in \intoneto{N - 1}} \pi^{(k)}_N \circ \cdots \circ \pi^{(k)}_{i + 1}[\bm{v}^{(k)}_i] \\
        \pcind \bm{s}^{(k)}_0 = (\bm{u}^{(k)} + \bm{x}) \bm{G} \\
        \pcind \pcfor i \in \intoneto{N} \pcdo \\
        \pcind \pcind r^{(k)}_{1, i} \samples{\xi} \bit^{\lambda}, ~ \com^{(k)}_{1,i} = \commit{r^{(k)}_{1,i}, \, \theta^{(k)}_{i}} \\
        \pcind \pcind \bm{s}^{(k)}_i = \pi^{(k)}_i[\bm{s}^{(k)}_{i - 1}] + \bm{v}^{(k)}_i \\ 
        \pcind \pcend \\
        \pcind r^{(k)}_{2} \samples{\xi} \bit^{\lambda}, ~ \com^{(k)}_{2} = \commit{r^{(k)}_{2}, \, \pi^{(k)}[\bm{y} + \bm{u}^{(k)} \bm{G}] + \bm{v}^{(k)}} \\
        \pcind \bm{s}^{(k)} = (\bm{s}^{(k)}_i)_{i \in \intoneto{N}}, ~ \aux^{(k)} = (\com^{(k)}_{1,i}, \, \com^{(k)}_{2})_{i \in \intoneto{N}} \\
        \pcind r^{(k)} \samples{\xi} \bit^{\lambda}, ~ \com^{(k)} = \commit{r^{(k)}, \, \bm{u}^{(k)} + \bm{x} \, || \, \bm{s}^{(k)}}  \\
        \pcend \\
        r \samples{\xi} \bit{^\lambda}, ~ h = \commit{r, \, (\aux^{(k)} \, || \, \com^{(k)})_{k \in \intoneto{M}}} \\[-0.75\baselineskip]
        \> \sendmessageright*{h} \\[-0.5\baselineskip]
        \> \> K \sampler \{ K \subset \{1, \cdots, M \}, \, |K| = \tau \} \\ 
        \> \> A \sampler \bit^{\tau} \\[-\baselineskip]
        \> \sendmessageleft*{(K, A)} \\[-0.5\baselineskip]
        \pcfor (\kappa, \alpha) \in K \times A \pcdo \\
        \pcind \bm{z}^{(\kappa)}_1 = \bm{u}^{(\kappa)}_\alpha + \bm{x}, ~ \bm{z}^{(\kappa)}_2 = \pi^{(\kappa)}[\bm{e}], ~ \bm{z}^{(\kappa)}_3 = \bm{s}^{(\kappa)}_\alpha \\ 
        \pcind z^{(\kappa)}_4 = \theta^{(\kappa)}_{\alpha^*} = (\theta^{(\kappa)}_j)_{j \in \intoneto{N} \setminus \alpha} \\
        \pcind \rsp^{(\kappa)} = (\bm{z}^{(\kappa)}_1, \bm{z}^{(\kappa)}_2, \bm{z}^{(\kappa)}_3, z^{(\kappa)}_4, \com^{(\kappa)}_{1,\alpha}) \\
        \pcend \\
        \rsp = (\xi, (\theta^{(k)}, \com^{(k)})_{k \in \intoneto{M} \setminus K}, \, (\rsp^{(\kappa)})_{\kappa \in K}) \\[-0.75\baselineskip]
        \> \sendmessageright*{\rsp} \\[-\baselineskip]
        \> \> \tx{Compute } \bar{r}, \bar{r'}, (\bar{r}^{(\kappa)}_{1,i}, \bar{r}^{(\kappa)}_{2})^{\kappa \in K}_{i \in \intoneto{N}} \tx{ from } \xi \\
        \> \> \pcfor k \in \intoneto{M} \setminus K \pcdo \\
        \> \> \pcind \tx{Compute } \aux^{(k)} \tx{ from } \theta^{(k)} \\
        \> \> \pcend \\
        \> \> \pcfor (\kappa, \alpha) \in K \times A \pcdo \\
        \> \> \pcind \tx{Compute } (\theta^{(\kappa)}_{i}, \bar{\pi}^{(\kappa)}_i, \bar{\bm{v}}^{(\kappa)}_i)_{i \in \intoneto{N} \setminus \alpha} \tx{ from } z_4 \\
        \> \> \pcind \bar{\bm{z}}^{(\kappa)}_1 = \bm{z}^{(\kappa)}_1 + \sum\nolimits_{i \in \intoneto{N} \setminus \alpha} \bm{u}^{(\kappa)}_i, ~ \bar{\bm{s}}^{(\kappa)}_0 = \bar{\bm{z}}^{(\kappa)}_1 \bm{G} \\
        \> \> \pcind \pcfor i \in [1, N] \setminus \alpha \pcdo \\
        \> \> \pcind \pcind \com^{(\kappa)}_{1,i} = \commit{r^{(\kappa)}_{1,i}, \, \theta^{(\kappa)}_{i}} \\
        \> \> \pcind \pcind \bar{\bm{s}}^{(\kappa)}_i = \bar{\pi}^{(\kappa)}_i[\bar{\bm{s}}^{(\kappa)}_{i - 1}] + \bar{\bm{v}}^{(\kappa)}_i \\
        \> \> \pcind \pcend \\
        \> \> \pcind \bar{\bm{s}}^{(\kappa)} = (\bar{\bm{s}}^{(\kappa)}_1, \, \cdots, \, \bar{\bm{s}}^{(\kappa)}_{\alpha - 1}, \, \bm{z}_3, \, \bar{\bm{s}}^{(\kappa)}_{\alpha + 1}, \, \cdots, \, \bar{\bm{s}}^{(\kappa)}_N) \\
        \> \> \pcind \com^{(\kappa)} = \commit{\bar{r}^{(\kappa)}, \, \bar{\bm{z}}_1 \, || \, \bar{\bm{s}}^{(\kappa)}} \\ 
        \> \> \pcind \com^{(\kappa)}_{2} = \commit{\bar{r}^{(\kappa)}_{2}, \, \bar{\bm{s}}^{(\kappa)}_N + \bm{z}_2} \\
        \> \> \pcind b_1^{(\kappa)} \samplen \big( \hw{\bm{z}_2^{(\kappa)}} = \omega \big) \\
        \> \> \pcend \\
        \> \> b_1 = \bigwedge\nolimits_{\kappa \in K} b_1^{(\kappa)} \\ 
        \> \> b_2 \samplen \big( h = \commit{\bar{r}, \, (\aux^{(k)} \, || \, \com^{(k)})_{k \in \intoneto{M}} } \big)\\
        \> \> \pcreturn b_1 \wedge b_2
      }
    }}
    \caption{3-round HVZK PoK for the $\GSD$ problem (with optimizations) \label{fig:pok3-3r-opt}}
  \end{center}
\end{figure}


\newpage
\section{Sig 3} \label{app:sig3}
\begin{figure}[!ht] 
  \begin{center}
    \resizebox{1\textwidth}{!}{\fbox{
      \pseudocode{%
        \hspace{160pt} \> \> \hspace{160pt} \\[-0.75\baselineskip] 
        \underline{\keygen(\lambda)} \\
        \rho_1 \sampler \bit^{\lambda}, ~ \bm{x} \samples{\rho_1} \Ftk, ~ \bm{e} \samples{\rho_1} \Ftn \tx{ such that } \hw{\bm{e}} = \omega \\
        \rho_2 \sampler \bit^{\lambda}, ~ \bm{G} \samples{\rho_2} \Ft^{\ktn}, ~ \bm{y} = \bm{x} \bm{G} + \bm{e} \\
        \pcreturn (\sk, \pk) = (\rho_1, (\rho_2, \bm{y})) \\[0.75\baselineskip]
        \underline{\sign(\sk, \pk, m)} \\
        \xi \sampler \bit^{\lambda} \\ 
        \pcfor k \in \intoneto{M} \pcdo \\
        \pcind \theta^{(k)} \sampler \bit^{\lambda} \\
        \pcind \pcfor i \in \intoneto{N} \pcdo \\
        \pcind \pcind \theta^{(k)}_{i} \samples{\theta^{(k)}} \bit^{\lambda}, ~ \pi^{(k)}_i \samples{\theta^{(k)}_i} \hperm{n} \\
        \pcind \pcind \bm{u}^{(k)}_i \samples{\theta^{(k)}_i} \Ftk, ~ \bm{v}^{(k)}_i \samples{\theta^{(k)}_i} \Ftn \\
        \pcind \pcend \\
        \pcind \pi^{(k)} = \pi^{(k)}_N \circ \cdots \circ \pi^{(k)}_1, ~ \bm{u}^{(k)} = \sum\nolimits_{i \in \intoneto{N}} \bm{u}^{(k)}_i \\ 
        \pcind \bm{v}^{(k)} = \bm{v}^{(k)}_N + \sum\nolimits_{i \in \intoneto{N - 1}} \pi^{(k)}_N \circ \cdots \circ \pi^{(k)}_{i + 1}[\bm{v}^{(k)}_i] \\
        \pcind \bm{s}^{(k)}_0 = (\bm{u}^{(k)} + \bm{x}) \bm{G} \\
        \pcind \pcfor i \in \intoneto{N} \pcdo \\
        \pcind \pcind r^{(k)}_{1, i} \samples{\xi} \bit^{\lambda}, ~ \com^{(k)}_{1,i} = \commit{r^{(k)}_{1,i}, \, \theta^{(k)}_{i}} \\
        \pcind \pcind \bm{s}^{(k)}_i = \pi^{(k)}_i[\bm{s}^{(k)}_{i - 1}] + \bm{v}^{(k)}_i \\ 
        \pcind \pcend \\
        \pcind r^{(k)}_{2} \samples{\xi} \bit^{\lambda}, ~ \com^{(k)}_{2} = \commit{r^{(k)}_{2}, \, \pi^{(k)}[\bm{y} + \bm{u}^{(k)} \bm{G}] + \bm{v}^{(k)}} \\
        \pcind \bm{s}^{(k)} = (\bm{s}^{(k)}_i)_{i \in \intoneto{N}}, ~ \aux^{(k)} = (\com^{(k)}_{1,i}, \, \com^{(k)}_{2})_{i \in \intoneto{N}} \\
        \pcind r^{(k)} \samples{\xi} \bit^{\lambda}, ~ \com^{(k)} = \commit{r^{(k)}, \, \bm{u}^{(k)} + \bm{x} \, || \, \bm{s}^{(k)}}  \\
        \pcend \\
        r \samples{\xi} \bit{^\lambda}, ~ h = \commit{r, \, (\aux^{(k)} \, || \, \com^{(k)})_{k \in \intoneto{M}}} \\
        (K, A) \samplen \ro{m \, || \, \pk \, || \, h} \\
        \pcfor (\kappa, \alpha) \in K \times A \pcdo \\
        \pcind \bm{z}^{(\kappa)}_1 = \bm{u}^{(\kappa)}_\alpha + \bm{x}, ~ \bm{z}^{(\kappa)}_2 = \pi^{(\kappa)}[\bm{e}], ~ \bm{z}^{(\kappa)}_3 = \bm{s}^{(\kappa)}_\alpha \\ 
        \pcind z^{(\kappa)}_4 = \theta^{(\kappa)}_{\alpha^*} = (\theta^{(\kappa)}_j)_{j \in \intoneto{N} \setminus \alpha} \\
        \pcind \rsp^{(\kappa)} = (\bm{z}^{(\kappa)}_1, \bm{z}^{(\kappa)}_2, \bm{z}^{(\kappa)}_3, z^{(\kappa)}_4, \com^{(\kappa)}_{1,\alpha}) \\
        \pcend \\
        \rsp = (\xi, (\theta^{(k)}, \com^{(k)})_{k \in \intoneto{M} \setminus K}, \, (\rsp^{(\kappa)})_{\kappa \in K}) \\
        \pcreturn \sigma = (h, \rsp)
        }
    }}
    \caption{$\keygen$ and $\sign$ algorithms for Sig 3} \label{fig:signature-3-keygen-sign}
  \end{center}
\end{figure}

\newpage

\begin{figure}[!ht] 
  \begin{center}
    \resizebox{1\textwidth}{!}{\fbox{
      \pseudocode{%
        \hspace{160pt} \> \> \hspace{60pt} \\[-0.75\baselineskip] 
        \underline{\verif(\pk, \sigma, m)} \\
        \tx{Parse } \sigma \tx{ as } \sigma = (h, \rsp) \tx{ and } \rsp \tx{ as } \rsp = (\xi, (\theta^{(k)}, \com^{(k)})_{k \in \intoneto{M} \setminus K}, \, (\rsp^{(\kappa)})_{\kappa \in K}) \\
        (K, \, A) \samplen \ro{m \, || \, \pk \, || \, h} \\[0.5\baselineskip]
        \tx{Compute } \bar{r}, \bar{r'}, (\bar{r}^{(\kappa)}_{1,i}, \bar{r}^{(\kappa)}_{2})^{\kappa \in K}_{i \in \intoneto{N}} \tx{ from } \xi \\
        \pcfor k \in \intoneto{M} \setminus K \pcdo \\
        \pcind \tx{Compute } \aux^{(k)} \tx{ from } \theta^{(k)} \\
        \pcend \\
        \pcfor (\kappa, \alpha) \in K \times A \pcdo \\
        \pcind \tx{Compute } (\theta^{(\kappa)}_{i}, \bar{\pi}^{(\kappa)}_i, \bar{\bm{v}}^{(\kappa)}_i)_{i \in \intoneto{N} \setminus \alpha} \tx{ from } z_4 \\
        \pcind \bar{\bm{z}}^{(\kappa)}_1 = \bm{z}^{(\kappa)}_1 + \sum\nolimits_{i \in \intoneto{N} \setminus \alpha} \bm{u}^{(\kappa)}_i, ~ \bar{\bm{s}}^{(\kappa)}_0 = \bar{\bm{z}}^{(\kappa)}_1 \bm{G} \\
        \pcind \pcfor i \in [1, N] \setminus \alpha \pcdo \\
        \pcind \pcind \com^{(\kappa)}_{1,i} = \commit{r^{(\kappa)}_{1,i}, \, \theta^{(\kappa)}_{i}} \\
        \pcind \pcind \bar{\bm{s}}^{(\kappa)}_i = \bar{\pi}^{(\kappa)}_i[\bar{\bm{s}}^{(\kappa)}_{i - 1}] + \bar{\bm{v}}^{(\kappa)}_i \\
        \pcind \pcend \\
        \pcind \bar{\bm{s}}^{(\kappa)} = (\bar{\bm{s}}^{(\kappa)}_1, \, \cdots, \, \bar{\bm{s}}^{(\kappa)}_{\alpha - 1}, \, \bm{z}_3, \, \bar{\bm{s}}^{(\kappa)}_{\alpha + 1}, \, \cdots, \, \bar{\bm{s}}^{(\kappa)}_N) \\
        \pcind \com^{(\kappa)} = \commit{\bar{r}^{(\kappa)}, \, \bar{\bm{z}}_1 \, || \, \bar{\bm{s}}^{(\kappa)}} \\ 
        \pcind \com^{(\kappa)}_{2} = \commit{\bar{r}^{(\kappa)}_{2}, \, \bar{\bm{s}}^{(\kappa)}_N + \bm{z}_2} \\
        \pcind b_1^{(\kappa)} \samplen \big( \hw{\bm{z}_2^{(\kappa)}} = \omega \big) \\
        \pcend \\
        b_1 = \bigwedge\nolimits_{\kappa \in K} b_1^{(\kappa)} \\ 
        b_2 \samplen \big( h = \commit{\bar{r}, \, (\aux^{(k)} \, || \, \com^{(k)})_{k \in \intoneto{M}} } \big)\\
        \pcreturn b_1 \wedge b_2
      }
    }}
    \caption{$\verif$ algorithm for Sig 3} \label{fig:signature-3-verify}
  \end{center}
\end{figure}


\end{document}